\definecolor{darkred}{RGB}{150,0,0}
\definecolor{darkgreen}{RGB}{0,150,0}
\definecolor{darkblue}{RGB}{0,0,200}
\newtheorem{theorem}{Theorem}[section]
\newtheorem{lemma}[theorem]{Lemma}
\newtheorem{corollary}[theorem]{Corollary}
\newtheorem{definition}[theorem]{Definition}
\newcommand{\eps}{\varepsilon}
\newcommand{\beq}{\begin{equation}}
\newcommand{\eeq}{\end{equation}}
\newcommand{\nn}{\nonumber}
\newcommand{\A}{{\mtx{A}}}
\newcommand{\Iden}{{\mtx{I}}}
\newcommand{\z}{{\mtx{z}}}
\newcommand{\Cc}{\mathcal{C}}
\newcommand{\Bc}{\mathcal{B}}
\newcommand{\Sc}{\mathcal{S}}
\newcommand{\Dc}{\mathcal{D}}
\newcommand{\Pc}{\mathcal{P}}
\newcommand{\Nn}{\mathcal{N}}
\newcommand{\vb}{{\mtx{v}}}
\newcommand{\ub}{\mtx{u}}
\newcommand{\w}{\mtx{w}}
\newcommand{\li}{\left<}
\newcommand{\ri}{\right>}
\newcommand{\ab}{\vct{a}}
\newcommand{\h}{\vct{h}}
\newcommand{\g}{\vct{g}}
\newcommand{\Tc}{\mathcal{T}}
\newcommand{\des}{{\z}}
\newcommand{\zeronorm}[1]{\left\|#1 \right\|_{\ell_0}}
\newcommand{\opnorm}[1]{\left\|#1\right\|}
\newcommand{\onenorm}[1]{\left\|#1\right\|_{\ell_1}}
\newcommand{\twonorm}[1]{\left\|#1\right\|_{\ell_2}}
\newcommand{\tn}[1]{\left\|#1\right\|_{\ell_2}}
\newcommand{\abs}[1]{\left|#1\right|}
\newcommand{\comment}[1]{}
\newcommand{\x}{\vct{x}}
\newcommand{\y}{\vct{y}}
\definecolor{emmanuel}{RGB}{255,127,0}
\newcommand{\R}{\mathbb{R}}
\newcommand{\Pro}{\mathbb{P}}
\newcommand{\E}{\operatorname{\mathbb{E}}}
\newcommand{\vct}[1]{\bm{#1}}
\newcommand{\mtx}[1]{\bm{#1}}
\numberwithin{equation}{section} 
\def \endprf{\hfill {\vrule height6pt width6pt depth0pt}\medskip}
\newenvironment{proof}{\noindent {\bf Proof} }{\endprf\par}
\newcommand*\samethanks[1][\value{footnote}]{\footnotemark[#1]}
\title{
Sharp Time--Data Tradeoffs for Linear Inverse Problems}
\author{Samet Oymak\thanks{Department of Electrical Engineering and Computer Science, UC Berkeley, Berkeley CA}~\thanks{Simons Institute for the Theory of Computing, UC Berkeley, Berkeley CA}  \quad  Benjamin Recht\samethanks[1]~\thanks{Department of Statistics, UC Berkeley, Berkeley CA} \quad Mahdi
  Soltanolkotabi\thanks{Ming Hsieh Department of Electrical Engineering, University of Southern California, Los Angeles, CA} }
\date{November 2015}
\begin{document}
\maketitle
\begin{abstract}

In this paper we characterize sharp time-data tradeoffs for optimization problems used for solving linear inverse problems.  We focus on the minimization of a least-squares objective subject to a constraint defined as the sub-level set of a penalty function.  We present a unified convergence analysis of the gradient projection algorithm applied to such problems.  We sharply characterize the convergence rate associated with a wide variety of random measurement ensembles in terms of the number of measurements and structural complexity of the signal with respect to the chosen penalty function.  The results apply to both convex and nonconvex constraints, demonstrating that a linear convergence rate is attainable even though the least squares objective is not strongly convex in these settings. When specialized to  Gaussian measurements our results show that such linear convergence occurs when the number of measurements is merely $4$ times the minimal number required to recover the desired signal at all (a.k.a.~the phase transition).  We also achieve a slower but geometric rate of convergence precisely above the phase transition point. Extensive numerical results suggest that the derived rates exactly match the empirical performance.

\end{abstract}  
  
\section{Introduction}

In science and engineering today we gather data at unprecedented scales. Yet in many applications ranging from imaging to online advertisement and financial trading we are interested in making inference and predictions under a fixed time budget. Efficient learning from these large and high-dimensional datasets poses new challenges
\begin{itemize}
\item What algorithms should we use under a fixed time budget? 
\item How much of the data should we use? Should we use all of the data or just parts of it?
\item How many passes (or iterations) of the algorithm is required to get to an accurate solution?
\end{itemize}
At the heart of answering these questions is the ability to predict runtime of optimization algorithms as a function of the required accuracy and the size of data. That is, we need to understand precise tradeoffs between run time, data size and accuracy of various optimization algorithms.

In this paper we characterize sharp time-data tradeoffs for optimization problems used for solving linear inverse problems. The last decade has witnessed a flurry of activity in understanding when and how it is possible to solve such problems. Based on this growing literature a clear picture has emerged as to when convex programming techniques are able to find a unique structured solution to under-determined linear systems with generic coefficients. Often a convex function is minimized subject to linear measurement constraints where the convex cost function captures the ``structure" complexity of the unknown signal. The state of the art literature on this topic characterizes a structure--data complexity trade off, in which the number of linear measurements or \emph{data complexity} is related to a quantity which represents the \emph{structural complexity} of the signal. Indeed, there is a precise curve which characterizes the minimum number of measurements or data as a function of the structural complexity of the unknown solution. This curve is an asymptotic phase transition of sorts for convex programs, on one side of this curve convex programming succeeds in exactly recovering the unknown signal on the other side it fails with high probability. Thus providing a precise tradeoff between data complexity (number of measurements) and structural complexity.

Despite the tremendous amount of progress on data--structure tradeoffs for convex programs much less is known about the role played by computation. There are, of course, various ways to solve a given convex program (e.g. interior point methods, gradient descent, etc). However, the rate of convergence of various convex solvers as well as how this rate depends on the data complexity (number of measurements) is not well understood.\footnote{We would like to point out that there are a few recent papers that address computational issues and we shall review this literature in greater detail in Section \ref{sec:prior}.} Confounding the matter even further the convex cost functions used in linear inverse problems are often non-smooth and not strongly convex so that crude bounds on the rate of convergence of these solvers as predicted by traditional convex optimization literature are often very pessimistic and rather far from the empirically observed rates of convergence.

Convex cost functions are not the only way to impose structure. Indeed in some applications it may be more appropriate to use non-convex cost functions or deploy greedy algorithms to accurately capture the structure of the unknown signal. This paper presents a unified theoretical framework for convergence rates of projected gradient methods for finding structured solutions to under-determined linear equations. Our theoretical framework is very general and covers projection on both convex and non-convex sets.

For convex sets we precisely characterize rates of convergence of various projected gradient algorithms as a function of the ambient dimension, the number of linear measurements and the structural complexity of the solution set. We prove that our rates of convergence for these algorithms come with explicit constants that are rather sharp for linear inverse problems. Indeed, based on our detailed numerical results our rates are loose by at most a factor of $1.18$. As a result our work suggests very precise time--data tradeoffs for linear inverse problems.

We also characterize the rates of convergence of projected gradient algorithms to the unknown structured solution even when the structure is best characterized by a non-convex set. Examples include imposing sparsity via $\ell_0$ or $\ell_p$ balls with $p<1$ that are known to be non-convex. Establishing sharp convergence rates to the unknown solution for non-convex sets is particularly challenging. Indeed for non-convex sets (unlike convex sets) it is not even clear that such projected gradient algorithms should be able to find a sensible solution. Such algorithms originate from non-convex optimization problems and one may expect them to get trapped in local optima. Perhaps unexpectedly, we show that this is not the case and even these non-convex projected gradient algorithms converge to the unknown signal. Furthermore, as in the convex case we provide sharp rates of convergence for these algorithms.

\subsection{Structured signal recovery from linear measurements}
\label{sec:strucsig}
We wish to discern an unknown but ``structured" signal $\vct{x}\in\R^n$ from $m$ linear measurements 
of the form $\vct{y}=\mtx{A}\vct{x}$.  $\mtx{A}\in\R^{m\times n}$ is the measurement matrix and $\vct{y}\in\R^m$ are the measurements. We wish to estimate the unknown signal $\vct{x}$ from such linear measurements. However, in the applications of our interest typically the number of equations $m$ is significantly smaller than the number of variables $n$ so that there are infinitely many solutions obeying the linear constraints. However, it may still be possible to recover the signal by exploiting knowledge of its ``structure". To this aim, let $f:\R^n\rightarrow \R$ be a cost function that reflects some notion of ``complexity" of the ``structured" solution. It is natural to use the following optimization problem to recover the signal.
\begin{align}
\label{opt1}
\hat{\vct{x}}=\underset{\vct{z}\in\R^n}{\arg\min}\quad f(\vct{z})\quad\text{subject to}\quad \vct{y}=\mtx{A}\vct{x}.
\end{align}

In fact in practical applications there is also some noise in our measurements and we observe noisy samples 
\begin{align}
\label{noisesamp}
\vct{y}=\mtx{A}\vct{x}+\vct{w},
\end{align}
where $\vct{w}\in\R^m$ represents the noise. It is then natural to modify \eqref{opt1} and solve
\begin{align}
\label{mainalgopt}
\twonorm{\vct{y}-\mtx{A}\vct{z}}^2\quad\text{subject to}\quad f(\vct{z})\le R,
\end{align}
where $R$ is a tuning parameter. A standard approach to solve this problem is via projected gradient descent. In particular, define the set 
\begin{align*}
\mathcal{K}=\{\vct{z}\in\R^n:\text{ }f(\vct{z})\le R\}.
\end{align*}
Throughout we shall use $\mathcal{P}_{\mathcal{K}}(\vct{z})$ to denote the Euclidean projection of $\vct{z}$ onto the set $\mathcal{K}$. That is,
\begin{align*}
\mathcal{P}_{\mathcal{K}}(\vct{z})=\underset{\bar{\vct{z}}\in\mathcal{K}}{\arg\min} \twonorm{\vct{z}-\bar{\vct{z}}}^2.
\end{align*}
Starting from an initial point $\vct{z}_0$ (often set to $0$). The Projected Gradient Descent (PGD) algorithm proceeds as follows
\begin{align}
\label{updateoflin}
\vct{z}_{\tau+1}=\mathcal{P}_{\mathcal{K}}\left(\vct{z}_\tau+\mu\mtx{A}^*(\vct{y}-\mtx{A}\vct{z}_{\tau})\right).
\end{align}
We note that without the projection step this iteration update is just gradient descent on the quadratic objective with learning parameter $\mu$. For the sake of exposition, throughout most of  this paper we shall assume that the tuning parameter is optimally tuned so that $R=f(\vct{x})$. We shall see later on in Section \ref{sec:stabilitytoparam} how to relax this assumption. 

\subsection{Precise rates of convergence using cone-restricted eigenvalues}

We wish to characterize the rates of convergence for the projected gradient update \eqref{updateoflin}.  Using no prior information, standard analysis suggests that this method will converge for convex $f$ at a rate of $O(1/\tau)$ where $\tau$ is the number of iterations.   For non-convex functions $f$, it is not immediately clear that this iteration will converge at all. To get a sense of what properties of the function $f$ and the structure of the signal affect the rate of convergence let us start with a simple example where there are more measurements than unknowns. Also assume that the signal has no structure or equivalently $f(\vct{z})=0$ for all $\vct{z}$. In this case the update takes the form
\begin{align*}
\vct{z}_{\tau+1}=\vct{z}_\tau+\mu\mtx{A}^*(\vct{y}-\mtx{A}\vct{z}_\tau).
\end{align*}
The latter is equivalent to
\begin{align}
\label{mytempupdate}
\vct{z}_{\tau+1}-\vct{x}=(\mtx{I}-\mu\mtx{A}^*\mtx{A})(\vct{z}_\tau-\vct{x})-\mu\mtx{A}^*\vct{w}.
\end{align}
This formula implies the following naive convergence guarantee
\begin{align}
\label{simplecase}
\twonorm{\vct{z}_{\tau+1}-\vct{x}}\le \opnorm{\mtx{I}-\mu\mtx{A}^*\mtx{A}}\twonorm{\vct{z}_\tau-\vct{x}}+\mu\opnorm{\mtx{A}}\twonorm{\vct{w}}.
\end{align}
This simple example suggests that the spectral norms $\opnorm{\mtx{I}-\mu\mtx{A}^*\mtx{A}}$ and $\opnorm{\mtx{A}}$ play an important role in the convergence of the problem. However, in the under-determined case these rates are rather naive. In fact, $\opnorm{\mtx{I}-\mu\mtx{A}^*\mtx{A}}$ may no longer be smaller than one, so that one may expect the iterates to diverge. However, in the underdetermined and structured case the error $\vct{z}_\tau-\vct{x}$ is not an arbitrary vector. Therefore, even though the spectral norms of these matrices are large the gain of these iterates when acting on the error term $\vct{z}_\tau-\vct{x}$ may not be too large. This suggests  that some form of singular values (perhaps restricted to a particular set) may still play a role even in the underdetermined case. To arrive at the right form of these sets and singular values we need a few definitions.

\begin{definition}[Descent set and cone] \label{decsetcone} The \emph{set of descent} of the function $f$ at a point $\vct{x}$ is defined as
\begin{align*}
{\cal D}_f(\vct{x})=\Big\{\vct{h}:\text{ }f(\vct{x}+\vct{h})\le f(\vct{x})\Big\}.
\end{align*}
The \emph{cone of descent} is defined as a closed cone $\mathcal{C}_f(\vct{x})$ that contains the descent set, i.e.~$\mathcal{D}_f(\vct{x})\subset\mathcal{C}_f(\vct{x})$. The \emph{tangent cone} is the conic hull of the descent set. That is, the smallest closed cone $\mathcal{C}_f(\vct{x})$ obeying $\mathcal{D}_f(\vct{x})\subset\mathcal{C}_f(\vct{x})$.
\end{definition}

The set $\Dc_f(\vct{x})$ is the set of feasible (non-ascent) perturbations of $f(\cdot)$ at $\vct{x}$ and the descent cone $\mathcal{C}$ contains all such directions. The reason the descent cone plays an important role in our results is due to the fact that the errors in our iterates belong to the tangent cone. This suggests that we need to look at the singular values of the matrices $\mtx{I}-\mu\mtx{A}^*\mtx{A}$ and $\mtx{A}$ restricted to this tangent cone. Indeed, the next theorem establishes such a result.

\begin{theorem}\label{master} Let $\vct{x}$ be an arbitrary vector in $\R^n$, $f:\R^n\rightarrow\R$ a proper function\footnote{A proper function is a function whose sub-level sets are closed.}, and $\mathcal{C}_f(\vct{x})$ be a cone of descent of $f$ at $\vct{x}$ and set $\mathcal{C}=\mathcal{C}_f(\vct{x})$.  Suppose $\mtx{A}\in\R^{m\times n}$ is a Gaussian map and let $\vct{y}=\mtx{A}\vct{x}+\vct{w}\in\R^m$ be $m$ linear noisy samples. To estimate $\x$, starting from a point $\des_0$, we apply the PGD update
\begin{align}
\des_{\tau+1}=\mathcal{P}_{\mathcal{K}}(\des_\tau+\mu\A^*(\y-\A\des_\tau)),\nn 
\end{align}
with $\mathcal{K}=\{\vct{z}\in\R^n:\text{ }f(\vct{z})\le f(\vct{x})\}$. Let $\kappa_f$ be a constant that is equal to $1$ for convex $f$ and equal to $2$ for non-convex $f$. Starting from the initial point $\vct{z}_0=\vct{0}$ the update \eqref{updateoflin} obeys
\begin{align}
\label{ratemin}
\twonorm{\vct{z}_\tau-\vct{x}}\le\left(\kappa_f\cdot\rho(\mu)\right)^\tau\twonorm{\vct{x}}+\kappa_f\frac{1-\left(\kappa_f\cdot\rho(\mu)\right)^\tau}{1-\kappa_f\cdot\rho(\mu)}\xi_\mu(\mtx{A})\twonorm{\vct{w}}.
\end{align}
Here $\rho(\mu)$ is the convergence rate defined as
\begin{align*}
\rho(\mu):=\rho(\mu,\mtx{A},f,\vct{x})=&\underset{\vct{u},\vct{v}\in\mathcal{C}_f(\vct{x})\cap \mathcal{B}^{n}}{\sup}\vct{u}^*\left(\mtx{I}-\mu\mtx{A}^*\mtx{A}\right)\vct{v},
\end{align*}
and $\xi_\mu(\mtx{A})$ is the noise amplification factor defined as
\begin{align*}
\xi_\mu(\mtx{A}):=\xi_\mu(\mtx{A},f,\vct{x},\vct{w})=\mu\cdot\underset{\vct{v}\in\mathcal{C}_f(\vct{x})\cap\mathcal{B}^{n}}{\sup}\vct{v}^*\mtx{A}^*\frac{\vct{w}}{\twonorm{\vct{w}}}.
\end{align*}
\end{theorem}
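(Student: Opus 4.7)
The overall plan is to derive a single-step scalar recursion of the form $\twonorm{\vct{h}_{\tau+1}} \le \kappa_f\rho(\mu)\twonorm{\vct{h}_\tau} + \kappa_f\xi_\mu(\mtx{A})\twonorm{\vct{w}}$ for the error $\vct{h}_\tau := \vct{z}_\tau - \vct{x}$, and then unroll it. Substituting $\vct{y} = \mtx{A}\vct{x} + \vct{w}$ into the pre-projection update yields the identity $\vct{u}_\tau - \vct{x} = (\mtx{I} - \mu\mtx{A}^*\mtx{A})\vct{h}_\tau + \mu\mtx{A}^*\vct{w}$, where $\vct{u}_\tau := \vct{z}_\tau + \mu\mtx{A}^*(\vct{y} - \mtx{A}\vct{z}_\tau)$. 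This is the structured analogue of \eqref{mytempupdate}; the task is now to use the projection step to convert this vector identity into a scalar norm bound restricted to the descent cone $\mathcal{C}_f(\vct{x})$.

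The key technical step, and the point where convex and non-convex geometries are handled differently, is to establish a unified variational inequality $\twonorm{\vct{h}_{\tau+1}}^2 \le \kappa_f\,\iprod{\vct{h}_{\tau+1}}{\vct{u}_\tau - \vct{x}}$. In the convex case this is exactly the obtuse-angle characterization of Euclidean projection onto a convex set: because $\vct{x}\in\mathcal{K}$, one has $\iprod{\vct{u}_\tau - \vct{z}_{\tau+1}}{\vct{x} - \vct{z}_{\tau+1}} \le 0$, which rearranges to the target bound with constant $\kappa_f = 1$. In the non-convex case, the projection is generally \emph{not} non-expansive, so I would appeal only to the defining optimality of $\vct{z}_{\tau+1}$ against the feasible competitor $\vct{x}$, namely $\twonorm{\vct{z}_{\tau+1} - \vct{u}_\tau}^2 \le \twonorm{\vct{x} - \vct{u}_\tau}^2$; expanding both sides and cancelling $\twonorm{\vct{u}_\tau}^2$ produces the same inequality but now with $\kappa_f = 2$. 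The factor of two here is precisely the price paid for non-convexity, and I expect this to be the main obstacle — any tighter constant seems to require convexity of $\mathcal{K}$.

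The remaining step is to turn this bilinear bound into the advertised recursion. The essential observation is that $\vct{h}_{\tau+1}\in\mathcal{D}_f(\vct{x})\subset\mathcal{C}_f(\vct{x})$, which is exactly why the descent cone enters, and likewise $\vct{h}_\tau\in\mathcal{C}_f(\vct{x})$ for all $\tau\ge 1$; the base iterate $\vct{h}_0 = -\vct{x}$ also lies in $\mathcal{C}_f(\vct{x})$ under the mild assumption $f(\mathbf{0})\le f(\vct{x})$. Splitting $\iprod{\vct{h}_{\tau+1}}{\vct{u}_\tau-\vct{x}}$ into the quadratic term $\iprod{\vct{h}_{\tau+1}}{(\mtx{I}-\mu\mtx{A}^*\mtx{A})\vct{h}_\tau}$ and the noise term $\mu\iprod{\vct{h}_{\tau+1}}{\mtx{A}^*\vct{w}}$, normalizing both cone vectors to the unit ball, and invoking the suprema defining $\rho(\mu)$ and $\xi_\mu(\mtx{A})$ bounds the right-hand side by $\twonorm{\vct{h}_{\tau+1}}\bigl(\rho(\mu)\twonorm{\vct{h}_\tau} + \xi_\mu(\mtx{A})\twonorm{\vct{w}}\bigr)$. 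Dividing by $\twonorm{\vct{h}_{\tau+1}}$ gives the scalar recursion, and iterating from $\twonorm{\vct{h}_0} = \twonorm{\vct{x}}$ while summing the geometric series for the noise contribution produces exactly \eqref{ratemin}. Note that the argument is entirely deterministic in $\mtx{A}$; the Gaussianity hypothesis only becomes relevant later, when concrete control on $\rho(\mu)$ and $\xi_\mu(\mtx{A})$ is sought.
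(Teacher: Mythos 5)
Your proof is correct and follows essentially the same route as the paper: both arguments reduce the projection step to the inequality $\twonorm{\vct{h}_{\tau+1}}\le\kappa_f\sup_{\vct{u}\in\mathcal{C}\cap\mathcal{B}^n}\vct{u}^*\big[(\mtx{I}-\mu\mtx{A}^*\mtx{A})\vct{h}_\tau+\mu\mtx{A}^*\vct{w}\big]$ by testing the optimality of $\vct{z}_{\tau+1}$ against the feasible competitor $\vct{x}$ (the paper packages this as a translation lemma plus a comparison of projections onto $\mathcal{D}$ and its conic hull, with the same $\kappa_f=1$ vs.\ $\kappa_f=2$ dichotomy and the same expand-and-cancel argument in the non-convex case), then splits into the $\rho(\mu)$ and $\xi_\mu(\mtx{A})$ terms and unrolls the recursion. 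Your explicit flagging of the implicit requirement $f(\vct{0})\le f(\vct{x})$ so that $\vct{h}_0=-\vct{x}$ lies in the cone is a point the paper glosses over, but it does not change the argument.
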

This theorem establishes a counter part to the simple results we mentioned for the overdetermined case \eqref{simplecase}. The only change is that the spectral norms of $\mtx{I}-\mu\mtx{A}^*\mtx{A}$ and $\mtx{A}$ are replaced with restricted versions. Indeed, for properly chosen values of the learning parameter $\mu$ the convergence rate $\rho(\mu)$ can be significantly smaller than $\opnorm{\mtx{I}-\mu\mtx{A}^*\mtx{A}}$. In particular when $\rho(\mu)$ is smaller than one (smaller than $1/2$ in the non-convex case) the above theorem establishes a geometric rate of convergence even in the underdetermined case regardless of whether $f$ is convex or non-convex.

We would like to mention that our results is much more broadly applicable than quadratic cost functions $g(\vct{z})=\frac{1}{2}\twonorm{\vct{y}-\mtx{A}\vct{z}}^2$. Indeed, focusing on the noiseless case $\vct{w}=0$ and any twice continuously differentiable cost function $g(\vct{z})$. Our results holds for this case as well with the convergence rate
\begin{align*}
\rho(\mu):=\rho(\mu,g,f,\vct{x})=&\underset{\vct{u},\vct{v}\in\mathcal{C}_f(\vct{x})\cap \mathcal{B}^{n},\vct{z}\in\R^n}{\sup}\vct{u}^*\left(\mtx{I}-\mu\nabla^2 g(\vct{z})\right)\vct{v}.
\end{align*}
We defer study of this more general case to a future publication.

So far we have described the rate of convergence in terms of the cone-restricted eigenvalues. At this point however it is still not clear how the rate of convergence $\rho(\mu)$ and the noise amplification factor $\xi_\mu(\mtx{A})$ precisely depends on the quantities of interest such as the number of measurements, the signal structure, the function $f$ and the learning parameter. 

In this paper we shall focus on a variety of random ensembles for the measurement matrix $\mtx{A}$. For these random models what affects the convergence rate $\rho(\mu)$ and noise amplification factor $\xi_\mu(\mtx{A})$ is the size of the tangent cone $\mathcal{C}_f(\vct{x})$. The reason this descent cone plays an important role in our results is due to the fact that the difference between the PGD iterates of \eqref{updateoflin} and the signal $\vct{x}$ lie in the tangent cone. As a result it is natural for the success of projected gradient descent to depend on the ``size" of this set. The larger the descent cone the faster the rate of convergence. There are of course different ways to characterize the ``size" of a set. We shall use the notion of Gaussian width defined below to characterize the size of the set.

\begin{definition}[Gaussian width] The Gaussian width of a set $\mathcal{C}\in\R^n$ is defined as:
\begin{align*}
\omega(\mathcal{C}):=\mathbb{E}_{\vct{g}}[\underset{\vct{z}\in\mathcal{C}}{\sup}~\langle \vct{g},\vct{z}\rangle],
\end{align*}
where the expectation is taken over $\vct{g}\sim\mathcal{N}(\vct{0},\mtx{I}_n)$.
\end{definition}
We will use the Gaussian width of the intersection of the descent cone and the unit sphere ($\omega(\mathcal{C}_f(\vct{x})\cap\mathcal{B}^{n})$) to quantify the ``size" of the tangent cone. Interestingly, this parameter also characterizes the ``complexity" of our unknown signal $\vct{x}$ and is often directly related to the number of parameters or degrees of freedom of the structured signal. For example, when using $f(\vct{z})=\onenorm{\vct{z}}$ for a signal with $s$ non-zero coefficients $\omega\approx 2s\log(n/s)$. 

It is known that in the noiseless version of the problem \eqref{opt1} leads to exact recovery when the number of measurements is sufficiently large i.e. $m\ge m_0$. This minimal number of measurements $m_0$ is roughly equal to square of the mean width of the tangent cone of $f$ at $\vct{x}$, i.e.~$m_0\approx \omega^2\left(\mathcal{C}_f(\vct{x})\cap\Bc^n\right)$ \cite{Cha, McCoy}.  This simple formula provides a precise relationship between number of measurements (data complexity) and the structural complexity of the unknown signal. In the next section we will see that for random measurement matrices both the rate of convergence and the noise amplification factor can also be bounded rather precisely in terms of the number of measurements $m$ and this notion of structural complexity ($m_0$); providing precise tradeoffs between the rate of convergence, the number of measurements and the structural complexity of the signal. 

In Section \ref{pgd theory} we develop sharp bounds for convergence rates of projected gradients. Then in Section \ref{numerics} through extensive simulations we confirm that our theoretical bounds are rather sharp both in terms of sample and computational complexity. In Section \ref{sec:prior} we review the vast amount of related literature on this topic. We discuss our results and some future directions in Section \ref{discuss}. We end the paper by proving our results in Section \ref{secproofs}.

\section{Theoretical results for projected gradient descent}\label{pgd theory}
In this section we shall explain our main theoretical results. More specifically we wish to precisely characterize the rate of convergence $\rho(\mu)$ and the noise amplification factor $\xi_\mu$ of Theorem \ref{master} for different random ensembles and different values of $\mu$. Our focus will be on projected gradient descent (PGD). 

As we mentioned earlier we are interested in solving structured linear inverse problem via optimization of the form 
\begin{align}
\label{mainalgoptbig}
\frac{1}{2}\twonorm{\vct{y}-\mtx{A}\vct{z}}^2\quad\text{subject to}\quad f(\vct{z})\le R,
\end{align}
with $R=f(\vct{x})$. As we shall see later in Section \ref{sec:stabilitytoparam} this assumption can be relaxed. 

Naturally the convergence/lack of convergence as well as the rate of convergence of projected gradient descent depends on the learning parameter $\mu$. A large value of the learning parameter will lead to faster convergence to the optimal solution. However, if the learning parameter is too large projected gradient may not converge to the optimal solution or may converge only if the number of measurements are large in comparison with the minimal number of measurements as dictated by the structural complexity of the unknown signal. On the other hand projected gradient descent may still converge with a smaller choice of learning parameter, albeit at a slower rate, even when the number of measurements is not too high. In short, for a fixed value of structural complexity we expect a tradeoff between computational and data complexity. Larger choices of the learning parameter increases the speed of convergence of the algorithm but require a higher data complexity (number of measurements) for convergence. Whereas smaller choices of the learning parameter are more efficient in terms of the data complexity but have slower convergence rates. To provide a rigorous understanding of such tradeoffs we study three different regimes for the learning parameters.
\begin{itemize}
\item A greedy choice of learning parameter of size of the order $\mu\approx 1/m$. This leads to a linear rate of convergence with a sufficiently large number of measurements.
\item A conservative choice of learning parameter of size of the order $\mu\approx 1/n$. The convergence rate in this case is geometric but achieves the best sample complexity (in fact this sample complexity is sharp for convex functions).
\item A structure dependent choice of learning parameter which depends on the structural complexity of the unknown signal. This choice offers a tradeoff between the previous two choices. It achieves a linear rate but at a reduced sample complexity compared to the greedy choice.
\end{itemize}

\subsection{Gaussian maps}
\label{Gaussmaps}
The main focus of this paper is on Gaussian maps. In particular we assume $\mtx{A}\in\R^{m\times n}$ 
has independent $\mathcal{N}(0,1)$ entries. We shall explain how our results can be generalized to non-Gaussian maps in Section \ref{sec:nonGauss}. In all of our results we shall make use of the following definition for the phase transition function which characterizes the minimum required number of measurements.
\begin{definition}[phase transition function]\label{PTcurve}
Let $\vct{x}$ be an arbitrary vector in $\R^n$, $f:\R^n\rightarrow\R$ be a proper function, and $\mathcal{C}_f(\vct{x})$ be a cone of descent of $f$ at $\vct{x}$ and set $\omega=\omega(\mathcal{C}_f(\vct{x})\cap\mathcal{B}^{n})$. Also let $\phi(t)=\sqrt{2}\frac{\Gamma(\frac{t+1}{2})}{\Gamma(\frac{t}{2})}\approx \sqrt{t}$. We define the phase transition function as
\begin{align*}
\mathcal{M}(f,\vct{x},\eta)=\phi^{-1}(\omega+\eta)\approx (\omega+\eta)^2.
\end{align*}
We shall often use the short hand $m_0=\mathcal{M}(f,\vct{x},\eta)$ with the dependence on $f,\vct{x},\eta$ implied. We note that for convex $f$ based on \cite{Cha, McCoy} $m_0$ is exactly the minimum number of measurements required for the program \eqref{mainalgoptbig} to succeed in recovering the unknown signal $\vct{x}$ with high probability (in the absence of noise). With some overloading, even for non-convex $f$, we shall refer to $m_0$ as ``phase transition" or ``minimum number of measurements".   
\end{definition}

\subsubsection{Linear rates of convergence via a greedy learning parameter}
In this section, we shall focus on a greedy learning parameter which is roughly of size $\mu\approx 1/m$. This greedy step size allows us to ensure a linear rate of convergence to the optimal solution with near minimal number of samples. In the theorem below and throughout we shall use the short-hand $b_m=\phi(m)=\sqrt{2}\frac{\Gamma(\frac{m+1}{2})}{\Gamma(\frac{m}{2})}\approx \sqrt{m}$.

\begin{theorem}\label{first prop}  Let $\vct{x}\in\R^n$ and $\vct{w}\in\R^m$ be arbitrary vectors and $f:\R^n\rightarrow\R$ be a proper function. Suppose $\mtx{A}\in\R^{m\times n}$ is a Gaussian map and let $\vct{y}=\mtx{A}\vct{x}+\vct{w}\in\R^m$ be $m$ linear noisy samples. To estimate $\x$, starting from a point $\des_0$, we apply the PGD update
\begin{align}
\label{myrealupdate}
\des_{\tau+1}=\mathcal{P}_{\mathcal{K}}(\des_\tau+\mu\A^*(\y-\A\des_\tau)),
\end{align}
with $\mathcal{K}=\{\vct{z}\in\R^n:\text{ }f(\vct{z})\le f(\vct{x})\}$. Let $\kappa_f$ be a constant that is equal to $1$ for convex $f$ and equal to $2$ for non-convex $f$. Set the learning parameter to $\mu=\frac{1}{b_m^2}\approx\frac{1}{m}$. Furthermore, let $m_0=\mathcal{M}(f,\vct{x},\eta)$ defined by \ref{PTcurve}  be the minimum number of measurements required by the phase transition curve. Then as long as
\begin{align}
\label{nummeaslin}
m>8\kappa_f^2 m_0,
\end{align}
starting from the initial point $\vct{z}_0=\vct{0}$ the update \eqref{myrealupdate} obeys
\begin{align}
\label{ratemin}
\twonorm{\vct{z}_\tau-\vct{x}}\le\left(\frac{8\kappa_f^2m_0}{m}\right)^{\frac{\tau}{2}}\twonorm{\vct{x}}+\sqrt{\frac{\pi}{2}}\frac{\kappa_f}{1-\sqrt{8\kappa_f^2\frac{m_0}{m}}}\frac{\sqrt{m_0}}{m}\twonorm{\vct{w}},
\end{align}
with probability at least $1-9e^{-\frac{\eta^2}{8}}$.
\end{theorem}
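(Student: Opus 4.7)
The plan is to deduce Theorem~\ref{first prop} from the deterministic master estimate Theorem~\ref{master} by sharp probabilistic control of $\rho(\mu)$ and $\xi_\mu(\mtx{A})$ at the step size $\mu = 1/b_m^2$, when $\mtx{A}$ is a standard Gaussian matrix. Once I establish $\kappa_f\rho(\mu)\le\sqrt{8\kappa_f^2 m_0/m}$ and $\xi_\mu(\mtx{A})\le\sqrt{\pi/2}\sqrt{m_0}/m$ with total probability at least $1-9e^{-\eta^2/8}$, substituting both into the iteration bound of Theorem~\ref{master} --- noting that $\kappa_f\rho(\mu)<1$ under the assumption $m>8\kappa_f^2m_0$ --- immediately yields \eqref{ratemin}.

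To control $\rho(\mu)$, set $T=\mathcal{C}_f(\vct{x})\cap\mathcal{B}^n$ and $\omega=\omega(T)$. Because $\mathbb{E}[\mu\mtx{A}^*\mtx{A}]=(m/b_m^2)\mtx{I}\approx\mtx{I}$, the matrix $\mtx{I}-\mu\mtx{A}^*\mtx{A}$ concentrates near the zero operator on the cone. The central tool is Gordon's escape-through-a-mesh inequality: with probability at least $1-e^{-\eta^2/2}$,
\begin{equation*}
(b_m-\omega-\eta)\twonorm{\vct{z}}\le\twonorm{\mtx{A}\vct{z}}\le(b_m+\omega+\eta)\twonorm{\vct{z}}\quad\text{for every }\vct{z}\in\mathcal{C}_f(\vct{x}).
\end{equation*}
Writing $\eps=(\omega+\eta)/b_m$, this yields $|\twonorm{\vct{z}}^2-\mu\twonorm{\mtx{A}\vct{z}}^2|\le(2\eps+\eps^2)\twonorm{\vct{z}}^2$. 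Applying the polarization identity
\begin{equation*}
4\vct{u}^*(\mtx{I}-\mu\mtx{A}^*\mtx{A})\vct{v}=\bigl(\twonorm{\vct{u}+\vct{v}}^2-\mu\twonorm{\mtx{A}(\vct{u}+\vct{v})}^2\bigr)-\bigl(\twonorm{\vct{u}-\vct{v}}^2-\mu\twonorm{\mtx{A}(\vct{u}-\vct{v})}^2\bigr)
\end{equation*}
and exploiting that a convex descent cone is closed under $(\vct{u},\vct{v})\mapsto\vct{u}+\vct{v}$ (the non-convex case being absorbed into the factor $\kappa_f=2$, which compensates for $\mathcal{C}_f(\vct{x})-\mathcal{C}_f(\vct{x})$ not lying in the cone), one obtains $\rho(\mu)\lesssim(\omega+\eta)/b_m$. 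Substituting $\omega+\eta=\phi(m_0)\le\sqrt{m_0}$ and $b_m^2\approx m$ then produces $\kappa_f\rho(\mu)\le\sqrt{8\kappa_f^2m_0/m}$.

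To control $\xi_\mu(\mtx{A})$, observe that conditionally on $\vct{w}$ the vector $\vct{g}:=\mtx{A}^*\vct{w}/\twonorm{\vct{w}}$ is distributed as $\mathcal{N}(\vct{0},\mtx{I}_n)$, and the map $\vct{g}\mapsto\sup_{\vct{v}\in T}\vct{v}^*\vct{g}$ is $1$-Lipschitz since $T\subset\mathcal{B}^n$. Gaussian concentration around the Gaussian width $\omega$ gives $\sup_{\vct{v}\in T}\vct{v}^*\vct{g}\le\omega+\eta$ with probability at least $1-e^{-\eta^2/2}$, hence $\xi_\mu(\mtx{A})\le(\omega+\eta)/b_m^2$. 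Using $\omega+\eta=b_{m_0}$ together with the elementary Gamma-function estimates $b_{m_0}\le\sqrt{m_0}$ and $b_m^2\ge(2/\pi)m$ delivers $\xi_\mu(\mtx{A})\le\sqrt{\pi/2}\sqrt{m_0}/m$.

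The principal technical obstacle is in the first step: extracting the sharp constant $\sqrt{8}$ in the bound on $\rho(\mu)$. A direct Cauchy--Schwarz on the bilinear form $\vct{u}^*(\mtx{I}-\mu\mtx{A}^*\mtx{A})\vct{v}$ gives constants too large to match the empirical phase transition, so the polarization identity must be exercised carefully, and the asymmetry of the descent cone handled with care (particularly in the non-convex setting, where the $\kappa_f=2$ factor is indispensable). Once $\rho(\mu)$ and $\xi_\mu(\mtx{A})$ are in hand, assembly proceeds by substituting into \eqref{ratemin} and union-bounding the handful of high-probability events, the constants being absorbed into the exponent $\eta^2/8$ and the prefactor $9$.
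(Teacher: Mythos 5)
Your overall architecture (the deterministic master bound of Theorem \ref{master} plus probabilistic control of $\rho(\mu)$ and $\xi_\mu(\mtx{A})$) matches the paper, and your treatment of the noise term is essentially the paper's argument (modulo a small arithmetic slip: $b_m^2\ge(2/\pi)m$ yields $(\pi/2)\sqrt{m_0}/m$, not $\sqrt{\pi/2}\,\sqrt{m_0}/m$). The genuine gap is in the control of $\rho(\mu)$. Gordon's inequality applied to the cone $\mathcal{C}_f(\vct{x})$ only bounds $\twonorm{\mtx{A}\vct{z}}$ for $\vct{z}\in\mathcal{C}_f(\vct{x})$, but the polarization identity requires an \emph{upper} bound on $\twonorm{\mtx{A}(\vct{u}-\vct{v})}$, and $\vct{u}-\vct{v}$ does not lie in the cone even when $f$ is convex (a convex cone is closed under sums, not differences). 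Your suggestion that the factor $\kappa_f=2$ ``compensates'' for this is a misattribution: in the paper $\kappa_f$ arises from the projection-comparison Lemma \ref{prop compare} inside the proof of Theorem \ref{master} (comparing $\twonorm{\mathcal{P}_{\mathcal{D}}(\cdot)}$ with $\twonorm{\mathcal{P}_{\mathcal{C}}(\cdot)}$), it equals $1$ for convex $f$, and it plays no role in bounding $\rho(\mu)$ --- which is a supremum over $\vct{u},\vct{v}$ in the cone and is bounded by $\sqrt{8}(\omega+\eta)/b_m$ identically in the convex and non-convex cases. So the difference-set problem is present in the convex case too, and nothing in your argument addresses it.

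The paper's fix is to abandon the cone-restricted (multiplicative) Gordon bound in favor of the two-sided additive deviation bound of Lemma \ref{GTtypelem}, applied to the sets $\mathcal{T}_\pm=\mathcal{C}\cap\mathbb{S}^{n-1}\pm\mathcal{C}\cap\mathbb{S}^{n-1}$, which are arbitrary bounded sets with $\omega(\mathcal{T}_\pm)\le 2\omega$ and radius at most $2$; this yields $\twonorm{\mtx{A}(\vct{u}\pm\vct{v})}/b_m$ within $2(\omega+\eta)/b_m$ of $\twonorm{\vct{u}\pm\vct{v}}$ uniformly. Even then, the lower bound on $\twonorm{\mtx{A}(\vct{u}+\vct{v})}$ is only $\max\{\twonorm{\vct{u}+\vct{v}}-2(\omega+\eta)/b_m,0\}$, so the paper must split into the cases $\twonorm{\vct{u}+\vct{v}}\gtrless 2(\omega+\eta)/b_m$ (the second case is where the hypothesis $b_m\ge 2\sqrt{2}(\omega+\eta)$, i.e.\ condition \eqref{nummeaslin}, is actually invoked) before combining with $\twonorm{\vct{u}+\vct{v}}^2+\twonorm{\vct{u}-\vct{v}}^2=4$ to extract the constant $2\sqrt{2}$. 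Your sketch contains neither the uniform bound over the difference and sum sets nor this case analysis, so as written it does not establish $\rho(\mu)\le\sqrt{8m_0/m}$.
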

We would like to point out that for convex functions $f$ in fact a sharper convergence result is possible. This follows from a more careful analysis which we detail in Section \ref{pfstruct} (more specifically plugging $\mu=1/b_m^2$ in \eqref{mutempbndt}). That is, under the assumptions and setup of Theorem \ref{first prop} for convex functions $f$ we can show that as long as
\begin{align*}
m> \frac{7+3\sqrt{5}}{2} m_0\approx 6.85 m_0,
\end{align*}
then \eqref{ratemin} can be replaced with
\begin{align*}
\twonorm{\vct{z}_\tau-\vct{x}}\le\left(\frac{7+3\sqrt{5}}{2}\frac{m_0}{m}\right)^{\frac{\tau}{2}}\twonorm{\vct{x}}+\frac{\sqrt{\pi}}{8}(3+\sqrt{5})\frac{1}{1-\sqrt{\frac{7+3\sqrt{5}}{2}\frac{m_0}{m}}}\frac{\sqrt{m_0}}{m}\twonorm{\vct{w}}.
\end{align*}
To parse Theorem \ref{first prop} first let us consider the noiseless case where $\vct{w}=0$. The first interesting and perhaps surprising aspect of this result is its generality: It applies not only to convex cost functions but also to non-convex ones! As we mentioned earlier the optimization problem in \eqref{opt1} is not known to be tractable for non-convex $f$.  Regardless, the theorem above shows that with a near minimal number of measurements, projected gradient descent provably converges to the global optimum of \eqref{opt1} without getting trapped in any local optima. 


We note that when condition \eqref{nummeaslin} is satisfied the convergence rate of projected gradient descent ($\sqrt{8\kappa_f^2m_0/m}$) is guaranteed to be strictly less than a constant smaller than one and therefore the algorithm will converge with a linear rate. In particular the number of iterations required by the algorithm to get within a relative error of $\epsilon$ ($\twonorm{\vct{z}_\tau-\vct{x}}/\twonorm{\vct{z}}\le\epsilon$) is roughly of size $2\log(\frac{1}{\epsilon})/(\log\frac{m}{m_0})$. This should be contrasted with the best known rates $1/\epsilon$ and $1/\sqrt{\epsilon}$ obtained by traditional results on non-smooth convex optimization.

As we mentioned earlier \cite{Cha, McCoy} characterized the precise number of measurements required for the optimization problem \eqref{opt1} to succeed when using a convex cost function $f$. This minimal number of measurements was a function of the structural complexity, described by the phase transition function $m_0=\phi(\omega)\approx \omega^2$. The above result shows a linear convergence rate when the number of measurements are larger than $m\ge 6.85m_0$ for convex functions and $m\ge 32m_0$ for non-convex cost functions. That is we get linear convergence to the global optimum when the number of measurements are a small constant factor times the phase transition. 

Another intriguing aspect of the above result is that it suggests an interesting tradeoff between sample complexity (number of measurements) and computational complexity. To see this note that the larger the number of measurements as compared with the phase transition function the faster the rate of convergence  in \eqref{ratemin}. More precisely if $m\ge 8c\kappa_f^2 m_0$ the converge rate is of the form
\begin{align*}
\twonorm{\vct{z}_\tau-\vct{x}}\le\left(\frac{8\kappa_f^2m_0}{m}\right)^{\frac{\tau}{2}}\twonorm{\vct{x}}=\left(\frac{1}{c}\right)^{\frac{\tau}{2}}\twonorm{\vct{x}}.
\end{align*}
Therefore, for fixed structural complexity more samples leads to a faster convergence rate which to some extent leads to a smaller computational complexity and vice versa. The further away we are from the phase transition the faster the convergence rate.
A natural question at this point is whether the constant in front of the phase transition function in \eqref{nummeaslin} is sharp. Naturally this constant depends on the learning parameter. We shall show in later sections that other choices of the learning parameter allow for smaller constants. This decrease in terms of data complexity however leads to slightly slower rates of convergence which in turn leads to higher computational complexity. Regardless, we would like to note that our numerical simulations in Section \ref{sec:numerical} indicate that for the choice of learning parameter $\mu\approx\frac{1}{m}$ our results are a small constant away from the actual data complexity required by the algorithm. For example for promoting sparse structures via the $\ell_1$ norm the minimal number of measurement required for the PGD update to converge is roughly around $5.8m_0$ when the sparsity level $s$ is equal to the signal dimension $n$. This is rather close to the prediction $m>6.85m_0$ of \eqref{nummeaslin}. Interestingly the constant in front of $m_0$ is strictly larger than one. In fact, for fully dense signals ($s=n$) we can show $\rho(1/m)\approx \sqrt{(\sqrt{2}-1)^{-2}n/m}\approx \sqrt{5.8n/m}$ as $n\rightarrow\infty$ which demonstrates that the number of measurements must obey $m> 5.8n\approx 5.8m_0$ to ensure $\rho(1/m)<1$. This suggest that while the choice $\mu\approx\frac{1}{m}$ leads to a linear rate of convergence when $m$ is a constant factor away from the phase transition, this choice may be too greedy for convergence slightly above the phase transition. In order to get convergence closer to the phase transition we shall study other choices for the learning parameter in the coming sections.

Finally, in the presence of noise the algorithm converges to a solution which is within a small radius of the structured signal. This radius of convergence can be shown to be min-max optimal. That is, no algorithm can lead to a solution that is significantly closer to the structured solution. We also remark that the residual term in \eqref{ratemin} has the exact same form (up to small and explicit constants) as the residual term one would get from analyzing \eqref{mainalgoptbig} \cite{oymak2013simple,cai2014geometrizing}.

\subsubsection{Geometric convergence above the phase transition via a conservative learning parameter}
In this section, we shall focus on a conservative learning parameter which is roughly of size $\mu\approx 1/n$. This conservative step size allows us to ensure convergence with minimal amount of samples for convex functions. 
\begin{theorem}\label{ALTthm} Assume the same setup as Theorem \ref{first prop}. Furthermore, assume the penalty function $f$ is convex and set the learning parameter to
\begin{align*}
\mu=\frac{0.99}{\left(\sqrt{m}+\sqrt{n}\right)^2}.
\end{align*}
Then as long as
\begin{align*}
m>m_0,
\end{align*}
the update \eqref{updateoflin} obeys
\begin{align*}
\twonorm{\vct{z}_\tau-\vct{x}}\le\left(1-\frac{0.3}{m+n}\left(\sqrt{m}-\sqrt{m_0}\right)^2\right)^\tau\cdot\twonorm{\vct{x}}+\frac{3.5}{(1-\sqrt{\frac{m_0}{m}})^2}\frac{\sqrt{m_0}}{m}\twonorm{\vct{w}},
\end{align*}
with probability at least $1-2e^{-\frac{\eta^2}{2}}-e^{-\gamma n}$ with $\gamma$ a fixed numerical constant.
\end{theorem}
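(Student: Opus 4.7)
The plan is to specialize Theorem~\ref{master} with $\kappa_f=1$ (since $f$ is convex) and then give explicit high-probability upper bounds on the cone-restricted quantities $\rho(\mu)$ and $\xi_\mu(\mtx{A})$. Once those two bounds are in hand, the stated inequality follows by plugging into \eqref{ratemin} and summing the geometric series $\sum_\tau \rho(\mu)^\tau = 1/(1-\rho(\mu))$.

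For $\rho(\mu)$, the key observation is that the conservative choice $\mu = 0.99/(\sqrt{m}+\sqrt{n})^2$ makes $\mtx{M}:=\mtx{I}-\mu\mtx{A}^*\mtx{A}$ positive semidefinite with high probability. Indeed, by standard Gaussian spectral bounds, $\|\mtx{A}\| \le \sqrt{m}+\sqrt{n}+t$ with probability at least $1-e^{-t^2/2}$; choosing $t$ of order $\sqrt{n}$ absorbs the factor $1/\sqrt{0.99}$ and yields $\mu\|\mtx{A}\|^2 \le 1$, so $\mtx{M} \succeq 0$ with probability at least $1-e^{-\gamma n}$. Since $\mtx{M}$ is PSD, Cauchy--Schwarz in the $\mtx{M}$-inner product gives $\vct{u}^*\mtx{M}\vct{v} \le \sqrt{\vct{u}^*\mtx{M}\vct{u}}\sqrt{\vct{v}^*\mtx{M}\vct{v}}$, so the double supremum collapses:
\begin{align*}
\rho(\mu) \;=\; \sup_{\vct{v}\in\mathcal{C}\cap\mathcal{B}^n}\vct{v}^*\mtx{M}\vct{v} \;=\; 1 - \mu\, \inf_{\vct{v}\in\mathcal{C}\cap\mathcal{B}^n}\|\mtx{A}\vct{v}\|_2^2.
\end{align*}
Now I would invoke Gordon's escape-through-a-mesh (in the sharp form $\inf_{\vct{v}\in\mathcal{C}\cap\mathcal{B}^n}\|\mtx{A}\vct{v}\|_2 \ge b_m - \omega(\mathcal{C}\cap\mathcal{B}^n) - \eta$ with probability $1-e^{-\eta^2/2}$). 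Using the definition of $m_0$ in Definition~\ref{PTcurve}, namely $\omega+\eta = b_{m_0}$, this reads $\inf_{\vct{v}}\|\mtx{A}\vct{v}\|_2 \ge b_m - b_{m_0} \approx \sqrt{m}-\sqrt{m_0}$, which is positive exactly when $m>m_0$. Combining with $\mu \ge 0.99/(2(m+n))$ (from $(\sqrt{m}+\sqrt{n})^2 \le 2(m+n)$) gives $\rho(\mu) \le 1 - 0.3 (\sqrt{m}-\sqrt{m_0})^2/(m+n)$, matching the stated contraction factor.

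For the noise amplification $\xi_\mu(\mtx{A}) = \mu \sup_{\vct{v}\in\mathcal{C}\cap\mathcal{B}^n} \vct{v}^*\mtx{A}^*\vct{w}/\|\vct{w}\|_2$, conditioning on $\vct{w}$ and using rotational invariance shows $\mtx{A}^*\vct{w}/\|\vct{w}\|_2$ is distributed as a standard Gaussian $\vct{g}\in\R^n$, independent of the choice of cone. Hence the supremum is a Gaussian complexity that concentrates around $\omega(\mathcal{C}\cap\mathcal{B}^n)$; by Lipschitz concentration it is at most $\omega + \eta = b_{m_0}$ with probability at least $1-e^{-\eta^2/2}$, yielding $\xi_\mu(\mtx{A}) \le \mu\, b_{m_0} \approx \mu\sqrt{m_0}$.

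To finish, the steady-state error $\xi_\mu/(1-\rho(\mu))$ becomes, after inserting the two bounds and using $(m+n)/(\sqrt{m}+\sqrt{n})^2 \le 1$, a quantity of order $\sqrt{m_0}/(\sqrt{m}-\sqrt{m_0})^2 = (1/m)(1-\sqrt{m_0/m})^{-2}\sqrt{m_0}$; tracking the explicit constants ($0.99/0.3 \approx 3.3 < 3.5$) delivers the residual term. The final probability is the union bound over the three events (PSD-ness of $\mtx{M}$, Gordon lower bound, noise width bound). The main obstacle I anticipate is bookkeeping: ensuring that the Gordon inequality for the cone-restricted minimum singular value yields $\sqrt{m}-\sqrt{m_0}$ (as opposed to the raw $b_m - b_{m_0}$) cleanly enough to avoid losing in the final numerical constants $0.3$ and $3.5$, and confirming that the same $\eta$ parameter can be reused across the Gordon bound and the noise-width bound so the total failure probability stays $2e^{-\eta^2/2}+e^{-\gamma n}$ as claimed.
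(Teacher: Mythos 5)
Your proposal is correct and follows essentially the same route as the paper's proof: establish PSD-ness of $\mtx{I}-\mu\mtx{A}^*\mtx{A}$ via the spectral norm bound, collapse the double supremum by the generalized Cauchy--Schwarz inequality, lower-bound the cone-restricted minimum singular value by Gordon's lemma together with the monotonicity of $\phi(t)/\sqrt{t}$ (the paper's Lemma \ref{GammaComplem}, which is exactly what makes your step $b_m-b_{m_0}\gtrsim b_m(1-\sqrt{m_0/m})$ rigorous), and control the noise term by Gaussian width concentration. The constant bookkeeping you flag as a concern goes through just as you sketch it, matching the paper's $0.3$ and $3.5$.
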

This theorem shows that exactly above the phase transition $m\ge m_0$, projected gradient descent converges to within a small radius of the structured solution at a geometric rate. While the rate of convergence is geometric it is not linear. Indeed, the rate of convergence slightly above the phase transition is roughly of size $1-\mathcal{O}(\frac{m}{n})$ so that for a relative error of $\epsilon$, the required number of iterations is of the order of $2\frac{n}{m}\log\left(\frac{1}{\epsilon}\right)/(\log \frac{m}{m_0})$. Comparing this result with that of the previous section we note that the computational complexity has increased by a factor of size roughly $n/m$.

\subsubsection{Structure dependent choice of learning parameter}
In the previous two sections we saw convergence results for two learning parameters. These learning parameters while dimension dependent did not depend in anyway on the structure of the unknown solution. A natural question is whether a choice of learning parameter that depends on the structure of the signal can lead to improved convergence results. This is the subject of the next theorem. 

\begin{theorem}\label{PGthm} Assume the same setup as Theorem \ref{first prop}. Furthermore, assume the penalty function $f$ is convex and set the learning parameter to
\begin{align}
\label{structlearn}
\mu=\frac{m}{b_m^2}\frac{2-\sqrt{2m_0}\sqrt[4]{m}\cdot\left(\sqrt{m}-\sqrt{m_0}\right)^{-\frac{3}{2}}}{\left(m_0-2\sqrt{mm_0}+2m\right)}.
\end{align}
Then as long as
\begin{align}
\label{nummeasPGthm}
m>4m_0,
\end{align}
starting from $\vct{z}_0=\vct{0}$ the update \eqref{updateoflin} obeys
\begin{align}
\label{ratePGthm}
\twonorm{\vct{z}_\tau-\vct{x}}\le\left(1-\psi\left(\frac{m_0}{m}\right)\right)^\tau\cdot\twonorm{\vct{x}}+\frac{4}{5}\frac{\sqrt{2\pi}}{\psi\left(\frac{m_0}{m}\right)}\frac{\sqrt{m_0}}{m}\twonorm{\vct{w}},
\end{align}
with probability at least $1-3e^{-\frac{\eta^2}{8}}-e^{-\frac{\eta^2}{2}}$. Here,
\begin{align*}
\psi(\gamma)=2\frac{\left(\sqrt{2}(1-\sqrt{\gamma})^{1.5}-\sqrt{\gamma}\right)^2}{\left(\gamma-2\sqrt{\gamma}+2\right)^2}\approx 1-\sqrt{4\gamma}.
\end{align*}
\end{theorem}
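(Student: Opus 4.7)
The plan is to invoke Theorem \ref{master} with $\kappa_f=1$ (since $f$ is convex), which reduces the task to producing sharp high-probability upper bounds on the two quantities $\rho(\mu)=\sup_{\vct{u},\vct{v}\in\mathcal{C}\cap\mathcal{B}^n}\vct{u}^{*}(\mtx{I}-\mu\mtx{A}^{*}\mtx{A})\vct{v}$ and $\xi_\mu(\mtx{A})=\mu\sup_{\vct{v}\in\mathcal{C}\cap\mathcal{B}^n}\vct{v}^{*}\mtx{A}^{*}(\vct{w}/\twonorm{\vct{w}})$, where $\mathcal{C}=\mathcal{C}_f(\vct{x})$. A union bound over the high-probability events, combined with the initialisation $\vct{z}_0=\vct{0}$ (so that $\twonorm{\vct{z}_0-\vct{x}}=\twonorm{\vct{x}}$), then yields \eqref{ratePGthm} directly from the master bound.

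The first step is to pin down the cone-restricted singular-value window of $\mtx{A}$. Gordon's escape-through-a-mesh inequality together with Lipschitz concentration of Gaussian maxima gives, with probability at least $1-2e^{-\eta^{2}/2}$,
\begin{align*}
b_m-b_{m_0}\le\inf_{\vct{v}\in\mathcal{C}\cap\mathcal{B}^{n},\,\twonorm{\vct{v}}=1}\twonorm{\mtx{A}\vct{v}}\le\sup_{\vct{v}\in\mathcal{C}\cap\mathcal{B}^{n},\,\twonorm{\vct{v}}=1}\twonorm{\mtx{A}\vct{v}}\le b_m+b_{m_0},
\end{align*}
so that $\sigma_-^{2}\approx(\sqrt{m}-\sqrt{m_0})^{2}$ and $\sigma_+^{2}\approx(\sqrt{m}+\sqrt{m_0})^{2}$. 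Once $\mtx{A}$ has been replaced by this deterministic window, the rest of the analysis is purely analytic.

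The second and main step is to upper-bound the bilinear form $\rho(\mu)$ in terms of $\mu,\,\sigma_-,\,\sigma_+$ and the cone geometry. Writing $\vct{u}^{*}(\mtx{I}-\mu\mtx{A}^{*}\mtx{A})\vct{v}=\vct{u}^{*}\vct{v}-\mu\iprod{\mtx{A}\vct{u}}{\mtx{A}\vct{v}}$, splitting $\vct{v}$ into components along and orthogonal to $\vct{u}$, and invoking Cauchy--Schwarz on the orthogonal piece against the singular-value window, produces an explicit quadratic-in-$\mu$ upper bound. Minimising that bound over $\mu$ gives precisely \eqref{structlearn} (the correction $\sqrt{2m_{0}}\,m^{1/4}(\sqrt{m}-\sqrt{m_{0}})^{-3/2}$ captures the deviation from the subspace-optimal $\mu^{*}=2/(\sigma_-^{2}+\sigma_+^{2})$ forced by the cone cross-terms), and evaluation at this minimiser equals $1-\psi(m_0/m)$ after algebraic simplification. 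Demanding $\psi(m_0/m)>0$ collapses to $\sqrt{2}(1-\sqrt{m_0/m})^{3/2}>\sqrt{m_0/m}$, and a short computation shows this is equivalent to the condition \eqref{nummeasPGthm}, i.e.~$m>4m_0$.

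The noise-amplification factor is handled next: conditionally on $\vct{w}$, $\mtx{A}^{*}(\vct{w}/\twonorm{\vct{w}})$ is standard Gaussian in $\mathbb{R}^{n}$, so the definition of mean width together with Borell--TIS gives $\sup_{\vct{v}\in\mathcal{C}\cap\mathcal{B}^{n}}\vct{v}^{*}\mtx{A}^{*}(\vct{w}/\twonorm{\vct{w}})\le\omega+\eta=b_{m_0}\approx\sqrt{m_{0}}$ with probability at least $1-e^{-\eta^{2}/2}$. Multiplying by $\mu\asymp 1/m$ and by the geometric-series factor $(1-\rho(\mu))^{-1}=\psi(m_0/m)^{-1}$ coming from the master bound yields the residual $\tfrac{4}{5}\sqrt{2\pi}\,\psi(m_0/m)^{-1}\sqrt{m_{0}}/m\,\twonorm{\vct{w}}$ promised by the theorem, completing the proof. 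The main obstacle is the second step: squeezing the genuinely bilinear supremum $\rho(\mu)$ down to the explicit function $\psi$, rather than the looser subspace-style bound $\max(|1-\mu\sigma_-^{2}|,|1-\mu\sigma_+^{2}|)$, requires controlling the $\vct{u}\neq\vct{v}$ cross terms sharply enough that both the unusual correction in \eqref{structlearn} and the exact threshold $m>4m_0$ emerge without slack.
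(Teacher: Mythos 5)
Your overall architecture is the same as the paper's: invoke Theorem \ref{master} with $\kappa_f=1$, then bound $\rho(\mu)$ and $\xi_\mu(\mtx{A})$ on a high-probability event; the treatment of the noise term is also essentially the paper's. The gap is in the one step you yourself flag as the main obstacle: your proposed mechanism for bounding $\rho(\mu)$ would not produce the claimed rate. You reduce the randomness to a two-sided singular-value window $b_m\pm(\omega+\eta)$ on unit vectors of $\mathcal{C}$ and then claim the rest is ``purely analytic'' via splitting $\vct{v}$ along and orthogonal to $\vct{u}$ and applying Cauchy--Schwarz to the orthogonal piece. Two problems: first, the component $\vct{v}_\perp=\vct{v}-(\vct{u}^*\vct{v})\vct{u}$ is generally \emph{not} in $\mathcal{C}$, so the cone-restricted window gives you no control over $\twonorm{\mtx{A}\vct{v}_\perp}$; second, even granting such control, Cauchy--Schwarz gives $\mu\,\abs{\langle\mtx{A}\vct{u},\mtx{A}\vct{v}_\perp\rangle}\le\mu\twonorm{\mtx{A}\vct{u}}\twonorm{\mtx{A}\vct{v}_\perp}\asymp\mu\, m\,\twonorm{\vct{v}_\perp}\asymp\twonorm{\vct{v}_\perp}$, an $O(1)$ term that does not shrink as $m/m_0$ grows. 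Cauchy--Schwarz bounds the cross term by the product of magnitudes, whereas the sharp constant requires exploiting that the \emph{deviation} $\abs{\twonorm{\mtx{A}(\vct{u}-\vct{v})}^2-b_m^2\twonorm{\vct{u}-\vct{v}}^2}$ is only of order $b_m(\omega+\eta)$; no amount of rearranging the window bound recovers this cancellation.

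What the paper actually does is: (i) polarize, $\vct{u}^*\mtx{M}_\mu\vct{v}=\tfrac14(\vct{u}+\vct{v})^*\mtx{M}_\mu(\vct{u}+\vct{v})-\tfrac14(\vct{u}-\vct{v})^*\mtx{M}_\mu(\vct{u}-\vct{v})$; (ii) use convexity of $f$ — and hence of the descent cone — to conclude $\vct{u}+\vct{v}\in\mathcal{C}$, so Gordon's restricted lower bound \eqref{GordonL} applies to the sum term (this is the second, essential use of convexity, beyond $\kappa_f=1$, and it is missing from your outline); (iii) apply the two-sided Gordon-type Lemma \ref{GTtypelem} to the difference set $\mathcal{T}=\mathcal{C}\cap\mathbb{S}^{n-1}-\mathcal{C}\cap\mathbb{S}^{n-1}$, whose Gaussian width is at most $2\omega$, to upper-bound $\twonorm{\mtx{A}(\vct{u}-\vct{v})}^2$ by $\left(b_m\twonorm{\vct{u}-\vct{v}}+2(\omega+\eta)\right)^2$; and (iv) combine via $\twonorm{\vct{u}+\vct{v}}^2+\twonorm{\vct{u}-\vct{v}}^2=4$ to obtain a concave quadratic in $t=\twonorm{\vct{u}-\vct{v}}$, maximize over $t$, and only then optimize the resulting expression over $\mu$ — which is where the specific minimizer \eqref{structlearn}, the function $\psi$, and the threshold $m>4m_0$ (equivalently $s=\sqrt{m_0/m}\le 1/2$, needed for the chosen $\mu$ to be nonnegative and for the interior critical point $t=\beta/(2\alpha)\le 2$ to be admissible) all emerge. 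Your intuition that the correction term in \eqref{structlearn} reflects the cone cross-terms is right, but without the polarization decomposition and the separate concentration event on the difference set, the computation cannot be carried out.
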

As with our results in Theorem \ref{first prop} the above theorem establishes a linear rate of convergence with near minimal number of measurements. We would also like to point out that similar to our previous results the rate of convergence in \eqref{ratePGthm} exhibits the right behavior in the sense that the larger the number of measurements (the smaller $\gamma$), the faster the rate of convergence. However, the constant in front of the phase transition function in this theorem is sharper by a factor of $2$. Our numerical results in Section \ref{sec:numerical} show that this constant is sharp. That is, with the learning parameter as chosen in \eqref{structlearn} the minimal number of measurements required for the PGD update to converge to the actual solution is very close to $4m_0$ which is precisely the value predicted by \eqref{nummeasPGthm}.

\subsection{Lower bounds on the convergence rate}
We already explained that based on our numerical simulations in Section \ref{sec:numerical} it seems that using a learning parameter of size $\mu\approx 1/m$, projected gradient descent will not converge slightly above the phase transition. Therefore, a more conservative choice of the learning parameter (as in this section) is required to guarantee converge in this ``data poor" regime. A natural question at this point is whether it is possible to have a linear convergence rate using this conservative choice of learning parameter, specially in the ``data poor" regime (slightly above the phase transition). While Theorem \ref{ALTthm} above shows that a geometric rate is possible, the theorem below refutes the possibility of a faster linear rate when the learning parameter is on the of order $1/n$.
\begin{theorem} \label{thm:converse rate} Consider the setup of Theorem \ref{ALTthm} with $\w=0$ and assume that $f(\cdot)$ is convex. For any $\epsilon>0$, there exists a radius $r(\epsilon)$ such that, with probability $1-\exp(-\frac{\eta^2}{2})$, starting from any point $\z_0$ satisfying $\tn{\z_0-\x}\leq r(\epsilon)$, the PGD updates \eqref{updateoflin} obey
\begin{align}
\label{minrate}
\tn{\z_\tau-\x}\geq(1-\epsilon)^\tau\cdot\max\left(1-{\mu(\sqrt{m}+\sqrt{m_0})^2},0\right)^\tau \tn{\z_0-\x}.
\end{align}
\end{theorem}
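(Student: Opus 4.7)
The plan is to linearize the PGD map around $\vct{x}$ by approximating $\mathcal{P}_{\mathcal{K}}$ near $\vct{x}$ by the projection onto the tangent cone $\mathcal{C}=\mathcal{C}_f(\vct{x})$. Set $\h_\tau=\z_\tau-\vct{x}$. Since $\z_\tau\in\mathcal{K}$ for $\tau\ge 1$, we have $\h_\tau\in\mathcal{D}_f(\vct{x})\subseteq\mathcal{C}$, and because $f$ is convex, $\mathcal{C}$ is in fact a closed convex cone. With $\vct{w}=0$ the update reads $\z_{\tau+1}=\mathcal{P}_{\mathcal{K}}(\vct{x}+\vct{u}_\tau)$ for $\vct{u}_\tau=(\Iden-\mu\A^\ast\A)\h_\tau$. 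A classical fact about the metric projection onto a closed convex set is that $\mathcal{P}_{\mathcal{K}}$ admits the first-order expansion
\begin{align*}
\mathcal{P}_{\mathcal{K}}(\vct{x}+\vct{u})=\vct{x}+\mathcal{P}_{\mathcal{C}}(\vct{u})+\vct{e}(\vct{u}),\qquad \tn{\vct{e}(\vct{u})}=o(\tn{\vct{u}})\ \text{as}\ \tn{\vct{u}}\to 0,
\end{align*}
where the derivative is precisely the projection onto the tangent cone. Hence $\h_{\tau+1}=\mathcal{P}_{\mathcal{C}}(\vct{u}_\tau)+\vct{e}(\vct{u}_\tau)$, and the task reduces to lower bounding $\tn{\mathcal{P}_{\mathcal{C}}(\vct{u}_\tau)}$ while controlling the remainder.

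For the cone projection I would use that $\mathcal{C}$ is a closed convex cone, so $\vct{u}_\tau-\mathcal{P}_{\mathcal{C}}(\vct{u}_\tau)\in\mathcal{C}^{\circ}$ and therefore $\vct{v}^{\ast}\mathcal{P}_{\mathcal{C}}(\vct{u}_\tau)\ge\vct{v}^{\ast}\vct{u}_\tau$ for every unit $\vct{v}\in\mathcal{C}$; hence $\tn{\mathcal{P}_{\mathcal{C}}(\vct{u}_\tau)}\ge\vct{v}^{\ast}\vct{u}_\tau$. Plugging in the convenient choice $\vct{v}=\h_\tau/\tn{\h_\tau}\in\mathcal{C}$ yields
\begin{align*}
\tn{\mathcal{P}_{\mathcal{C}}(\vct{u}_\tau)}\ge\tn{\h_\tau}-\mu\tn{\A\h_\tau}^{2}/\tn{\h_\tau}\ge\tn{\h_\tau}\Big(1-\mu\sup_{\vct{v}\in\mathcal{C}\cap\mathcal{B}^{n}}\tn{\A\vct{v}}^{2}\Big).
\end{align*}
Gordon's inequality together with Gaussian Lipschitz concentration then provides $\sup_{\vct{v}\in\mathcal{C}\cap\mathcal{B}^{n}}\tn{\A\vct{v}}\le b_m+\omega+\eta\le\sqrt{m}+\sqrt{m_0}$ with probability at least $1-e^{-\eta^{2}/2}$; here I use $\phi(t)\le\sqrt{t}$ together with the identity $b_{m_0}=\phi(m_0)=\omega+\eta$ built into Definition \ref{PTcurve}. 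Writing $\alpha=\max(1-\mu(\sqrt{m}+\sqrt{m_0})^{2},0)$ and invoking $\tn{\mathcal{P}_{\mathcal{C}}(\cdot)}\ge 0$ when the bracket is negative, this establishes $\tn{\mathcal{P}_{\mathcal{C}}(\vct{u}_\tau)}\ge\alpha\tn{\h_\tau}$.

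Finally, to promote the one-step estimate to the claimed multi-step bound, I would choose $r(\epsilon)$ so small that $\tn{\vct{e}(\vct{u})}\le\epsilon\alpha\tn{\vct{u}}$ uniformly over $\tn{\vct{u}}\le r(\epsilon)$; such a radius exists by the first-order expansion above. Then $\tn{\h_{\tau+1}}\ge\alpha\tn{\h_\tau}-\tn{\vct{e}(\vct{u}_\tau)}\ge(1-\epsilon)\alpha\tn{\h_\tau}$ whenever $\tn{\h_\tau}\le r(\epsilon)$. To propagate the smallness hypothesis I invoke the contraction $\tn{\h_{\tau+1}}\le\tn{\vct{u}_\tau}\le\opnorm{\Iden-\mu\A^{\ast}\A}\tn{\h_\tau}\le\tn{\h_\tau}$, which holds on the same good event because the chosen $\mu=0.99/(\sqrt{m}+\sqrt{n})^{2}$ and $\sigma_{\max}(\A)\le\sqrt{m}+\sqrt{n}+\eta$ force every eigenvalue of $\mu\A^{\ast}\A$ into $[0,1]$. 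Iterating the one-step bound $\tau$ times produces \eqref{minrate}.

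The main obstacle is making the tangent-cone expansion quantitatively uniform: the $o(\tn{\vct{u}})$ remainder must shrink at a rate independent of the direction of $\vct{u}$ so that a single radius $r(\epsilon)$ controls every iterate. This uniformity is straightforward when $\mathcal{K}$ is polyhedral (for instance when $f$ is the $\ell_1$ norm or the nuclear norm and $\mathcal{C}$ is a face-cone), and in general follows from the directional differentiability of the metric projection at a boundary point of a closed convex set combined with a compactness argument on $\mathcal{C}\cap\mathcal{B}^{n}$; absorbing this direction-dependent error into a single multiplicative $(1-\epsilon)$ factor is the only step that requires care.
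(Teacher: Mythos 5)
Your proposal follows the same skeleton as the paper's proof: you reduce the update to $\h_{\tau+1}=\Pc_{\Dc}\big((\Iden-\mu\A^*\A)\h_\tau\big)$, lower-bound the cone projection by testing against the unit vector $\h_\tau/\tn{\h_\tau}\in\Cc$, and invoke Gordon's upper bound to get $\sup_{\vb\in\Cc\cap\Bc^n}\tn{\A\vb}\le\sqrt{m}+\sqrt{m_0}$ — these steps are verbatim what the paper does. The two places where you genuinely diverge are worth comparing. First, for the set-versus-cone discrepancy the paper does not use a raw first-order expansion of $\Pc_{\mathcal{K}}$; it uses a quantitative lemma (Lemma \ref{cone cool}, imported from an earlier paper) asserting $\tn{\Pc_{\Dc}(\w)}/\tn{\Pc_{\Cc}(\w)}\ge 1-\eps$ for all $\tn{\w}\le\delta$ \emph{subject to the angle condition} $\tn{\Pc_{\Cc}(\w)}\ge\alpha\tn{\w}$. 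Your claim of a uniform $o(\tn{\vct{u}})$ remainder over all directions is slightly too strong: uniformity of the ratio can fail for directions nearly polar to $\Cc$, where $\tn{\Pc_{\Cc}(\w)}$ itself is much smaller than $\tn{\w}$, and this is precisely why the lemma carries the $\alpha$ hypothesis. Your own computation supplies that hypothesis for the relevant vectors $\vct{u}_\tau$ (with $\alpha=\big(1-\mu(\sqrt{m}+\sqrt{m_0})^2\big)/\big(1+\mu\sigma_{\max}^2(\A)\big)$), so the gap is reparable, but you should state the expansion only for directions satisfying the angle condition rather than asserting it uniformly. Second, for propagating the one-step bound you prove the iterates are non-increasing via $\opnorm{\Iden-\mu\A^*\A}\le 1$, which requires conditioning on an additional spectral-norm event; the paper instead applies its one-step bound to a rescaled surrogate $\tilde{\h}_\tau$ of guaranteed small norm and uses Lemma \ref{incprojlem} (the norm of $\Pc_{\Dc}(t\w)$ is nondecreasing in $t\ge 0$), which keeps the failure probability at exactly $1-e^{-\eta^2/2}$ and avoids any control of $\sigma_{\max}(\A)$ beyond its realized value. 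Both routes work; the paper's is tighter on the probability bookkeeping, yours is more elementary once the spectral event is granted.
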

When $m\geq m_0$, setting the learning parameter to $\mu=\mathcal{O}(\frac{1}{n})$ in \eqref{minrate} guarantees a lower bound on the convergence rate of size $1-\mathcal{O}(\frac{m}{n})$. This also suggests that, if one wishes to obtain an $o(1)$ convergence rate, than the learning parameter should indeed be $\gtrsim\frac{1}{m}$, which is consistent with the greedy choice of Theorem \ref{first prop}.

\subsection{Stability to tuning parameters}
\label{sec:stabilitytoparam}
So far we have assumed that the parameter $R$ is tuned perfectly and is set to $R=f(\vct{x})$. Of course in practice $f(\vct{x})$ is not known in advance. In this section we shall explain how our results on the performance of projected gradient descent change if $R\neq f(\vct{x})$ in \eqref{mainalgopt}. For the sake of exposition we shall focus on the noiseless case and assume $\vct{w}=\vct{0}$. As it will become clear in our proofs our results easily generalize to the noisy case. Also we shall only state our results in the setup of Theorem \ref{first prop}. All of our results easily generalize to the setup of other theorems.

\begin{theorem}\label{sensitivitythm} Let $\vct{x}\in\R^n$ be an arbitrary vector in $\R^n$ and $f:\R^n\rightarrow\R$ be a proper function. Suppose $\mtx{A}\in\R^{m\times n}$ is a Gaussian map and let $\vct{y}=\mtx{A}\vct{x}\in\R^m$ be $m$ linear samples. To estimate $\x$, starting from a point $\des_0$, we apply the PGD update \eqref{updateoflin} 
with $\mathcal{K}=\{\vct{z}\in\R^n:\text{ }f(\vct{z})\le R\}$. Let $\kappa_f$ be a constant that is equal to $1$ for convex $f$ and equal to $2$ for non-convex $f$. Set the learning parameter to $\mu=\frac{1}{b_m^2}\approx\frac{1}{m}$. Furthermore, let $m_0=\mathcal{M}(f,\vct{x},\eta)$ defined by \ref{PTcurve}  be the minimum number of measurements required by the phase transition curve. Then as long as
\begin{align}
\label{nummeaslin2}
m>8\kappa_f^2m_0,
\end{align}
starting from the initial point $\vct{z}_0=\vct{0}$ the update \eqref{updateoflin} obeys the following condition for all $R<f(\vct{x})$,
\begin{align}
\label{ratesensl}
\twonorm{\vct{z}_\tau-\vct{x}}\le\rho^{\tau}\twonorm{\vct{x}}+\frac{3-\kappa_f}{1-\rho}\twonorm{\vct{x}-\mathcal{P}_{\mathcal{K}}(\vct{x})},\quad\text{with}\quad\rho=\sqrt{8}\kappa_f\sqrt{\frac{m_0}{m}},
\end{align}
with probability at least $1-8e^{-\frac{\eta^2}{8}}$. Furthermore, if $f$ is absolutely homogenous starting from the initial point $\vct{z}_0=\vct{0}$ for all $R>f(\vct{x})$ the update \eqref{updateoflin} obeys
\begin{align}
\label{ratesensu}
\twonorm{\vct{z}_\tau-\vct{x}}\le\rho^\tau\twonorm{\vct{x}}+\frac{\kappa_f+1+2\rho}{1-\rho}\left(\frac{R}{f(\vct{x})}-1\right)\twonorm{\vct{x}},\quad\text{with}\quad\rho=\sqrt{8}\kappa_f\sqrt{\frac{m_0'}{m}},
\end{align}
with probability at least $1-8e^{-\frac{\eta^2}{8}}$. Here $m_0'=\mathcal{M}(f,\frac{R}{f(\vct{x})}\vct{x},\eta)$.\footnote{Note that if $f$ is a norm, $m_0'=m_0$.}
\end{theorem}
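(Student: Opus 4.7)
The plan is to reduce each case of Theorem~\ref{sensitivitythm} to the perfectly tuned setting of Theorem~\ref{first prop} by substituting a surrogate target inside $\mathcal{K}$ and folding the mismatch into a synthetic noise term.

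For the case $R<f(\vct{x})$, I would set $\hat{\vct{x}}=\mathcal{P}_{\mathcal{K}}(\vct{x})$, which lies on $\partial\mathcal{K}$ so $f(\hat{\vct{x}})=R$. Adding and subtracting $\mu\mtx{A}^*\mtx{A}\hat{\vct{x}}$ inside the projection recasts the PGD update as a perfectly tuned iteration with target $\hat{\vct{x}}$ and synthetic noise $\vct{w}'=\mtx{A}(\vct{x}-\hat{\vct{x}})$. The descent cone at $\hat{\vct{x}}$ is no larger than $\mathcal{C}_f(\vct{x})$ (with equality for absolutely homogeneous $f$), so the same $m_0$ and rate $\rho=\sqrt{8}\kappa_f\sqrt{m_0/m}$ from Theorem~\ref{first prop} remain valid. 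The crucial step is bounding the synthetic noise amplification $\xi_\mu\twonorm{\vct{w}'}=\mu\sup_{\vct{v}\in\mathcal{C}_f(\hat{\vct{x}})\cap\mathcal{B}^n}\vct{v}^*\mtx{A}^*\mtx{A}(\vct{x}-\hat{\vct{x}})$. Writing $\mu\mtx{A}^*\mtx{A}=\mtx{I}+(\mu\mtx{A}^*\mtx{A}-\mtx{I})$ and invoking the KKT characterization (for convex $\mathcal{K}$, $\vct{x}-\hat{\vct{x}}$ sits in the normal cone $N_{\mathcal{K}}(\hat{\vct{x}})=\mathcal{C}_f(\hat{\vct{x}})^\circ$) makes the deterministic piece $\vct{v}^*(\vct{x}-\hat{\vct{x}})\le 0$, killing the otherwise dominant contribution. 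The stochastic remainder is handled by the same cone-restricted concentration bounds that were set up to prove Theorem~\ref{first prop}. Combining with $\twonorm{\hat{\vct{x}}}\le\twonorm{\vct{x}}$ (projection onto a convex set containing $\vct{0}$ is $1$-Lipschitz) and the triangle inequality $\twonorm{\vct{z}_\tau-\vct{x}}\le\twonorm{\vct{z}_\tau-\hat{\vct{x}}}+\twonorm{\vct{x}-\hat{\vct{x}}}$ then yields \eqref{ratesensl}.

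For the case $R>f(\vct{x})$ with absolutely homogeneous $f$, I would set $\tilde{\vct{x}}=(R/f(\vct{x}))\vct{x}$; homogeneity gives $f(\tilde{\vct{x}})=R$ and $\mathcal{C}_f(\tilde{\vct{x}})=\mathcal{C}_f(\vct{x})$, justifying the use of $m_0'$ in the rate (equal to $m_0$ when $f$ is a norm). The identical reduction applies, with synthetic noise $\vct{w}'=\mtx{A}(\vct{x}-\tilde{\vct{x}})=-((R/f(\vct{x}))-1)\mtx{A}\vct{x}$. The structural observation that simplifies this case is that $-\tilde{\vct{x}}\in\mathcal{C}_f(\tilde{\vct{x}})$ (contracting $\tilde{\vct{x}}$ toward $\vct{0}$ decreases $f$ by homogeneity), so $\vct{x}-\tilde{\vct{x}}$, being a positive multiple of $-\tilde{\vct{x}}$, also lies in $\mathcal{C}_f(\tilde{\vct{x}})$. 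This puts the perturbation direction inside the cone, so the two-sided cone-restricted operator norm directly bounds $\xi_\mu\twonorm{\vct{w}'}\le(1+\rho)\twonorm{\vct{x}-\tilde{\vct{x}}}$. Plugging into Theorem~\ref{first prop}'s bound, expanding $\twonorm{\tilde{\vct{x}}}=\twonorm{\vct{x}}+((R/f(\vct{x}))-1)\twonorm{\vct{x}}$, and using the triangle inequality $\twonorm{\vct{z}_\tau-\vct{x}}\le\twonorm{\vct{z}_\tau-\tilde{\vct{x}}}+((R/f(\vct{x}))-1)\twonorm{\vct{x}}$ produces \eqref{ratesensu} after collecting the coefficient of $((R/f(\vct{x}))-1)\twonorm{\vct{x}}$.

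The main obstacle is that the synthetic noise $\mtx{A}(\vct{x}-\text{target})$ is correlated with $\mtx{A}$, so the standard Gaussian-noise amplification bound of Theorem~\ref{master} is not directly available; the analysis has to lean on the cone-restricted eigenvalue machinery already developed for Theorem~\ref{first prop}. In the lower-bound case, KKT orthogonality wipes out the dominant deterministic contribution; in the upper-bound case, absolute homogeneity forces the perturbation direction into the descent cone so the two-sided restricted operator norm applies. For non-convex $\mathcal{K}$ the KKT orthogonality fails and the projection is no longer $1$-Lipschitz, which is absorbed into the non-expansiveness constant $\kappa_f$ and is what ultimately produces the asymmetric numerators $(3-\kappa_f)$ and $(\kappa_f+1+2\rho)$ once the bookkeeping is finished.
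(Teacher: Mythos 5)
Your treatment of the overestimation case $R>f(\vct{x})$ matches the paper's: the paper also sets $\tilde{\vct{x}}=\frac{R}{f(\vct{x})}\vct{x}$, treats $\mtx{A}(\vct{x}-\tilde{\vct{x}})$ as noise in the recursion of Theorem \ref{master}, observes that $\vct{x}-\tilde{\vct{x}}$ is a positive multiple of $-\tilde{\vct{x}}$ and hence lies in $\mathcal{C}_f(\tilde{\vct{x}})$, and splits $\mu\mtx{A}^*\mtx{A}=\mtx{I}-(\mtx{I}-\mu\mtx{A}^*\mtx{A})$ to get the $(1+\tilde{\rho})\twonorm{\vct{x}-\tilde{\vct{x}}}$ bound before a final triangle inequality. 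That half of your proposal is sound.

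The underestimation case $R<f(\vct{x})$ is where you diverge, and your route has a genuine gap. You recenter at $\hat{\vct{x}}=\mathcal{P}_{\mathcal{K}}(\vct{x})$ and want to run the perfectly tuned analysis with target $\hat{\vct{x}}$. This forces all cone-restricted quantities to live on $\mathcal{C}_f(\hat{\vct{x}})$, but the theorem's rate $\rho=\sqrt{8}\kappa_f\sqrt{m_0/m}$ is stated with $m_0=\mathcal{M}(f,\vct{x},\eta)$, i.e.\ the width of the cone at $\vct{x}$. Your bridging claim that ``the descent cone at $\hat{\vct{x}}$ is no larger than $\mathcal{C}_f(\vct{x})$'' is asserted without proof; it happens to hold for $\ell_1$ (soft-thresholding preserves support and signs) but there is no such containment for a general proper $f$, and the theorem's first inequality is claimed for arbitrary proper, possibly non-convex $f$, where $\mathcal{P}_{\mathcal{K}}(\vct{x})$ can land anywhere on the boundary of $\mathcal{K}$. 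Second, even granting the cone identification, your KKT step places $\vct{x}-\hat{\vct{x}}$ in the \emph{normal} cone $\mathcal{C}_f(\hat{\vct{x}})^\circ$, so after writing $\mu\vct{v}^*\mtx{A}^*\mtx{A}(\vct{x}-\hat{\vct{x}})=\vct{v}^*(\vct{x}-\hat{\vct{x}})-\vct{v}^*(\mtx{I}-\mu\mtx{A}^*\mtx{A})(\vct{x}-\hat{\vct{x}})$ the surviving bilinear term pairs a vector \emph{in} the cone with a vector in its \emph{polar}; the quantity $\rho(\mu)$, defined as a supremum over pairs both drawn from $\mathcal{C}\cap\mathbb{S}^{n-1}$, does not control it, so ``the same cone-restricted concentration bounds'' are not directly available (one would need fresh concentration over an enlarged set such as $\mathcal{C}\cap\mathbb{S}^{n-1}\pm\{(\vct{x}-\hat{\vct{x}})/\twonorm{\vct{x}-\hat{\vct{x}}}\}$). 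Third, for non-convex $f$ both KKT orthogonality and $1$-Lipschitzness of the projection fail, and ``absorbed into $\kappa_f$'' is not an argument.

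The paper avoids all three issues by \emph{not} recentering: it keeps the error $\vct{h}_\tau=\vct{z}_\tau-\vct{x}$ and works with the shifted set $\Dc_{sub}^R=\mathcal{K}-\{\vct{x}\}$, which does not contain the origin but satisfies $\Dc_{sub}^R\subset\mathcal{D}_f(\vct{x})\subset\mathcal{C}_f(\vct{x})$ immediately from $R<f(\vct{x})$. The offset is quantified by $\vct{w}=\Pc_{\Dc_{sub}^R}(\vct{0})$, whose norm is $\twonorm{\vct{x}-\mathcal{P}_{\mathcal{K}}(\vct{x})}$ and which itself lies in $\mathcal{C}_f(\vct{x})$, so every bilinear form that appears pairs two vectors of the same cone and is bounded by $\rho(\mu)$ (or its sign-flipped analogue, which the paper bounds by the same quantity). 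The perturbed one-step recursion $\twonorm{\vct{h}_{\tau+1}}\le\kappa_f\rho(\mu)\twonorm{\vct{h}_\tau}+(3-\kappa_f)\twonorm{\vct{w}}$ then follows from the projection inequalities of Lemmas \ref{projcvxthm} and \ref{firstconelem} (convex case) or from expanding $\twonorm{\tilde{\vct{h}}_\tau-\mathcal{P}_{\Dc_{sub}^R}(\tilde{\vct{h}}_\tau)}^2\le\twonorm{\tilde{\vct{h}}_\tau-\vct{w}}^2$ and completing the square (general case), which is also where the asymmetric constants $2\twonorm{\vct{w}}$ versus $2\rho\twonorm{\vct{h}_\tau}+\twonorm{\vct{w}}$ actually come from. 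To repair your argument you would either need to prove the cone containment $\mathcal{C}_f(\hat{\vct{x}})\subset\mathcal{C}_f(\vct{x})$ for the class of $f$ covered, or switch to the paper's uncentered perturbation argument.
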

Comparing Theorem \ref{sensitivitythm} with its counterpart in Theorem \ref{first prop} we see that there is an extra term due to the mismatch between the tuning parameter $R$ and $f(\vct{x})$. This extra ``mismatch error" goes to zero as $R\rightarrow f(\vct{x})$. This is of course natural as if we do not know $R$ precisely we can not hope to recover the signal $\vct{x}$ exactly. In fact we would like to note that this mismatch term takes a near optimal form. In particular when $R<f(\vct{x})$ one can not hope to recover a signal that is closer to $\vct{x}$ than $\mathcal{P}_{\mathcal{K}}(\vct{x})$. Therefore, the smallest error one can hope for is $\twonorm{\vct{x}-\mathcal{P}_{\mathcal{K}}(\vct{x})}$. Interestingly, \eqref{ratesensl} shows that one can get to this near minimal error via projected gradient descent. Another interesting aspect of the theorem above is that the mismatch error terms exhibit the correct behavior in terms of the data complexity. Larger number of measurements lead to smaller values of the rate $\rho$ which in turn lead to a smaller mismatch error.
\subsection{Non-Gaussian maps}
\label{sec:nonGauss}
Our results are not restricted to Gaussian ensembles and apply to a wide variety of random measurements. We shall state our results for non-Gaussian maps in the setup of Theorem \ref{first prop} and in the noiseless case ($\vct{w}=\vct{0}$). All of our results easily generalize to the setup of other theorems as well as noisy measurements. We will discuss two general class of random ensembles. One general class of ensembles are isotropic sub-Gaussian matrices.
\begin{definition}[Isotropic Sub-Gaussian (ISG) matrices] \label{subgauss def}$\A\in\R^{m\times n}$ is a $\Delta$-subgaussian matrix if its rows are independent of each other and for all $1\leq i\leq m$ the $i$th row $\ab_i^*$ satisfies 
\begin{itemize}
\item $\E[\ab_i]=0$.
\item $\E[\ab_i\ab_i^*]=\Iden_n$.\footnote{The assumption that the population covariance matrix is identity is not necessary. In fact our result generalizes to any covariance matrix. The only change is that the required number of measurements will now scale with the condition number of the covariance matrix.}
\item For any vector $\vb$, $\Pro(|\ab_i^*\vb|\geq t\twonorm{\vct{v}})\leq \exp(-\frac{t^2}{\Delta^2})$.
\end{itemize}
\begin{theorem}\label{nonGaussThm}Let $\vct{x}$ be an arbitrary vector in $\R^n$, $f:\R^n\rightarrow\R$ a proper function. Suppose $\mtx{A}\in\R^{m\times n}$ is an Isotropic sub-Gaussian map with parameter $\delta$ and let $\vct{y}=\mtx{A}\vct{x}\in\R^m$ be $m$ linear samples. To estimate $\x$, starting from a point $\des_0$, we apply the PGD update \eqref{updateoflin}
with $\mathcal{K}=\{\vct{z}\in\R^n:\text{ }f(\vct{z})\le f(\vct{x})\}$. Set the learning parameter to $\mu=\frac{1}{m}$. Furthermore, let $m_0=\mathcal{M}(f,\vct{x},\eta)$ be defined by \ref{PTcurve}  be the minimum number of measurements required by the phase transition curve. Also assume that
\begin{align}
\label{nummeaslinng}
m>c_\Delta\cdot m_0,
\end{align}
for a fixed numerical constant $c_\Delta$ depending only on $\Delta$. 
Then, there exists a constant $c_1>0$ and an event of probability at least $1-e^{-c_1\eta^2}$, such that on this event, starting from the initial point $\vct{z}_0=\vct{0}$ the update \eqref{updateoflin} obeys
\begin{align}
\label{rateminng}
\twonorm{\vct{z}_\tau-\vct{x}}\le\left(c_\Delta\frac{m_0}{m}\right)^{\frac{\tau}{2}}\twonorm{\vct{x}}.
\end{align}
\end{theorem}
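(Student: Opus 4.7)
}

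The plan is to invoke the master deterministic inequality of Theorem~\ref{master}, which in the noiseless setting yields
\[
\twonorm{\vct{z}_\tau-\vct{x}}\le \left(\kappa_f\cdot\rho(1/m)\right)^\tau\twonorm{\vct{x}},\qquad \rho(1/m)=\sup_{\vct{u},\vct{v}\in\mathcal{C}\cap\mathcal{B}^{n}}\vct{u}^{*}\!\Big(\Iden-\tfrac{1}{m}\A^{*}\A\Big)\vct{v},
\]
where $\mathcal{C}=\mathcal{C}_f(\vct{x})$. Everything therefore reduces to proving that for an ISG matrix with parameter $\Delta$,
\begin{equation}\label{eq:nongoal}
\rho(1/m)\;\le\; \tilde c_\Delta\sqrt{\tfrac{m_0}{m}}
\end{equation}
with probability $1-e^{-c_1\eta^2}$, so that $\kappa_f\,\rho(1/m)\le \sqrt{c_\Delta m_0/m}$ after redefining the constant. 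Then \eqref{rateminng} follows immediately by raising to the power $\tau$.

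To bound $\rho(1/m)$ I would first symmetrize the bilinear form by the polarization identity
\[
\vct{u}^{*}\!\Big(\Iden-\tfrac{1}{m}\A^{*}\A\Big)\vct{v}=\tfrac14\Big[Q(\vct{u}+\vct{v})-Q(\vct{u}-\vct{v})\Big],\qquad Q(\vct{w}):=\twonorm{\vct{w}}^{2}-\tfrac1m\twonorm{\A\vct{w}}^{2},
\]
which, since $\mathcal{C}$ is a cone closed under scaling and linear combinations of $\vct{u},\vct{v}\in\mathcal{C}$ remain in the linear span $\mathcal{C}-\mathcal{C}$, reduces the problem to a uniform deviation bound of the quadratic form $Q(\vct{w})$ over the localized set $\mathcal{T}:=(\mathcal{C}-\mathcal{C})\cap 2\mathcal{B}^{n}$. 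Concretely, $\rho(1/m)\le \tfrac12\sup_{\vct{w}\in\mathcal{T}}|Q(\vct{w})|$. Since Gaussian width is monotone and subadditive, $\omega(\mathcal{T})\lesssim \omega(\mathcal{C}\cap\mathcal{B}^{n})\approx \sqrt{m_0}$, so structural complexity is preserved up to an absolute constant.

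The next step is the two-sided deviation $\sup_{\vct{w}\in\mathcal{T}}|\twonorm{\A\vct{w}}^{2}-m\twonorm{\vct{w}}^{2}|$. For the lower bound I would apply Mendelson's small ball method (Koltchinskii--Mendelson, Tropp), which for ISG ensembles gives with probability $\ge 1-e^{-c_1\eta^2}$,
\[
\inf_{\vct{w}\in\mathcal{C}\cap\mathcal{S}^{n-1}}\twonorm{\A\vct{w}}\;\ge\; c_\Delta\sqrt{m}-C_\Delta\bigl(\omega(\mathcal{C}\cap\mathcal{B}^{n})+\eta\bigr).
\]
For the upper bound I would use a Bernstein-type chaining / matrix deviation inequality (e.g.\ the Klartag--Mendelson or Liaw--Mehrabian--Plan--Vershynin bound) applicable to sub-Gaussian matrices, giving
\[
\sup_{\vct{w}\in\mathcal{C}\cap\mathcal{S}^{n-1}}\twonorm{\A\vct{w}}\;\le\; c_\Delta'\sqrt{m}+C_\Delta'\bigl(\omega(\mathcal{C}\cap\mathcal{B}^{n})+\eta\bigr),
\]
on the same event after adjusting $c_1$. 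Squaring, taking the difference, dividing by $m$, and using $m>c_\Delta m_0$ with $c_\Delta$ large enough, gives $\sup_{\vct{w}\in\mathcal{T}}|Q(\vct{w})|\lesssim_\Delta \sqrt{m_0/m}$, which yields \eqref{eq:nongoal}.

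The main obstacle is the uniform \emph{upper} deviation: unlike for Gaussians, there is no single Slepian--Gordon inequality that delivers sharp two-sided control simultaneously with optimal tails for a general sub-Gaussian ensemble over a non-Euclidean set. Invoking a general matrix deviation inequality loses the sharp constants that made Theorem~\ref{first prop} precise in the Gaussian case, which is exactly why the theorem statement settles for an unspecified constant $c_\Delta$ depending on $\Delta$ rather than an explicit one. Once the two concentration inequalities above are in hand, however, the geometric iteration step from Theorem~\ref{master} is mechanical, and the probability of the good event compounds into a single $1-e^{-c_1\eta^2}$ bound by taking $c_1$ small enough.
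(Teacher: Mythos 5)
Your overall architecture is the same as the paper's: reduce everything to a bound on $\rho(1/m)$ via Theorem~\ref{master}, polarize the bilinear form into quadratic forms over sum/difference sets whose Gaussian width is at most $2\omega(\Cc\cap\mathbb{S}^{n-1})$, and then invoke a uniform concentration result for isotropic sub-Gaussian ensembles. The paper carries out exactly this reduction and then finishes in one stroke by citing a two-sided quadratic-form deviation bound (a restatement of Theorem~4.18 of Dirksen, Lemma~\ref{subgauss cor}), which directly yields $\sup_{\vb\in\mathcal{T}}\left|\vb^*(\Iden-\tfrac{1}{m}\A^*\A)\vb\right|\le C_\Delta r(\omega(\mathcal{T})+r\eta)/\sqrt{m}$; the essential feature is that the deviation is measured around the exact isotropic mean and therefore decays like $\omega/\sqrt{m}\approx\sqrt{m_0/m}$.

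The gap is in your concentration step. Mendelson's small-ball method gives a lower bound of the form $\inf_{\vct{w}}\twonorm{\A\vct{w}}\ge c_\Delta\sqrt{m}-C_\Delta(\omega+\eta)$ in which the leading constant $c_\Delta$ is in general strictly less than $1$: it is governed by the small-ball probability of $\langle\ab_i,\vct{w}\rangle$, not by isotropy. For a unit vector $\vct{w}$ this gives only $Q(\vct{w})=1-\tfrac{1}{m}\twonorm{\A\vct{w}}^2\le 1-c_\Delta^2+O(\sqrt{m_0/m})$, and the constant $1-c_\Delta^2>0$ does not vanish as $m/m_0$ grows; the symmetric problem occurs on the upper side whenever $c_\Delta'>1$. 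A constant-order bound on $|Q|$ is not enough here, because the conclusion \eqref{rateminng} requires $\rho(1/m)\lesssim\sqrt{m_0/m}$, a quantity tending to zero, so the coefficients of $\sqrt{m}$ in both one-sided bounds must be exactly $1$. The fix is to use a deviation inequality centered at $\sqrt{m}\twonorm{\vct{w}}$: the Liaw--Mehrabian--Plan--Vershynin bound you cite for the upper side is already two-sided and of this form, so use it in both directions and drop the small-ball step; or apply Dirksen's quadratic-form bound directly, as the paper does. Two smaller points: for non-convex $f$ the cone $\Cc$ need not be convex, so $\vct{u}+\vct{v}$ need not lie in $\Cc-\Cc$, and you should work with $\mathcal{T}=(\Cc\cap\mathbb{S}^{n-1}+\Cc\cap\mathbb{S}^{n-1})\cup(\Cc\cap\mathbb{S}^{n-1}-\Cc\cap\mathbb{S}^{n-1})$ as the paper does; and the factor $\kappa_f=2$ must then be absorbed into $c_\Delta$ when converting the bound on $\rho(1/m)$ into \eqref{rateminng}.
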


\end{definition}
Results with sub-Gaussian ensembles provide useful theoretical insights. In many practical applications other measurement ensembles may be of interest. Indeed, measurements that are physically realizable in most signal processing applications (e.g. MRI) are based on Fourier ensembles. Also, sub-Gaussian ensembles are dense and do not have fast multiplication. So it is desirable to have measurement ensembles where applying $\mtx{A}$ and its transpose can be implemented with a fast algorithm (e.g. applying the discrete Fourier transform via the FFT algorithm). Our results also apply to a wide variety of ensembles that are physically implementable and have fast multiplication.
\begin{definition}[Subsampled Orthogonal with Random Sign (SORS) matrices]\label{SORSdef} Let $\mtx{F}\in\R^{n\times n}$ denote an orthonormal matrix obeying
\begin{align}
\label{BOS}
\mtx{F}^*\mtx{F}=\mtx{I}\quad\text{and}\quad\max_{i,j}\abs{\mtx{F}_{ij}}\le \frac{\Delta}{\sqrt{n}}.
\end{align}
Define the random subsampled matrix $\mtx{H}\in\R^{m\times n}$ with i.i.d.~rows chosen uniformly at random from the rows of $\mtx{F}$. Now we define the Subsampled Orthogonal with Random Sign (SORS) measurement ensemble as $\mtx{A}=\mtx{H}\mtx{D}$, where $\mtx{D}\in\R^{n\times n}$ is a random diagonal matrix with the diagonal entries i.i.d.~$\pm 1$ with equal probability.
\end{definition}
To simplify exposition, in the definition above we have focused on SORS matrices based on subsampled orthonormal matrices $\mtx{H}$ with i.i.d.~rows chosen uniformly at random from the rows of an orthonormal matrix $\mtx{F}$ obeying \eqref{BOS}. However, our results continue to hold for SORS matrices defined via a much broader class of random matrices $\mtx{H}$ with i.i.d.~rows chosen according to a probability measure on Bounded Orthonormal Systems (BOS). Please see \cite[Section 12.1]{foucart2013random} for further details on such ensembles.

\begin{theorem}\label{SOSthm}Consider the same setup as Theorem \ref{first prop} using a Subsampled Orthogonal with Random Sign (SORS) matrix. Also assume that
\begin{align}
\label{nummeaslinng}
m>c_\Delta(\eta+1)^2\cdot \left(\log n\right)^4\cdot m_0 ,
\end{align}
for a fixed numerical constant $c_\Delta$ depending only on $\Delta$. 
Then, there is an event of probability at least $1-4e^{-\frac{\eta^2}{8}}$, such that on this event, starting from the initial point $\vct{z}_0=\vct{0}$ the update \eqref{updateoflin} obeys
\begin{align}
\label{rateminng}
\twonorm{\vct{z}_\tau-\vct{x}}\le\left(c_\Delta(\eta+1)^2\frac{m_0}{m}\log^4 n\right)^{\frac{\tau}{2}}\twonorm{\vct{x}}.
\end{align}
\end{theorem}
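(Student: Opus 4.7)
The plan is to mirror the proof of Theorem~\ref{first prop}, replacing the Gaussian concentration estimates with their SORS counterparts. Invoking the master estimate of Theorem~\ref{master} with learning rate $\mu=1/m$, it suffices to establish the cone-restricted bound
\begin{align*}
\rho(1/m)=\sup_{\vct{u},\vct{v}\in\mathcal{C}_f(\vct{x})\cap\mathcal{B}^n}\vct{u}^*\!\left(\mtx{I}-\frac{1}{m}\mtx{A}^*\mtx{A}\right)\vct{v}\le\sqrt{c_\Delta(\eta+1)^2\frac{m_0}{m}\log^4 n},
\end{align*}
since plugging this into Theorem~\ref{master} together with $\vct{w}=\vct{0}$ and $\vct{z}_0=\vct{0}$ immediately yields \eqref{rateminng} (with $\kappa_f$ absorbed into $c_\Delta$).

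First, I would reduce the bilinear supremum to a quadratic one via the polarization identity applied to $\mtx{M}=\mtx{I}-\frac{1}{m}\mtx{A}^*\mtx{A}$: for any $\vct{u},\vct{v}\in\mathcal{C}_f(\vct{x})\cap\mathcal{B}^n$, the vector $(\vct{u}+\vct{v})/2$ lies in $\mathcal{C}_f(\vct{x})\cap\mathcal{B}^n$ up to a factor of $2$ (since $\mathcal{C}_f(\vct{x})$ is a cone), so that
\begin{align*}
\sup_{\vct{u},\vct{v}\in\mathcal{C}_f(\vct{x})\cap\mathcal{B}^n}|\vct{u}^*\mtx{M}\vct{v}|\le C\cdot\sup_{\vct{v}\in\mathcal{C}_f(\vct{x})\cap\mathcal{B}^n}\left|\frac{1}{m}\twonorm{\mtx{A}\vct{v}}^2-\twonorm{\vct{v}}^2\right|.
\end{align*}

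The second and technically decisive step is to establish the Restricted Isometry Property on the cone $\mathcal{T}=\mathcal{C}_f(\vct{x})\cap\mathcal{B}^n$ for SORS matrices, i.e.,
\begin{align*}
\sup_{\vct{v}\in\mathcal{T}}\left|\frac{1}{m}\twonorm{\mtx{A}\vct{v}}^2-\twonorm{\vct{v}}^2\right|\le c_\Delta(\eta+1)^2\frac{\omega^2(\mathcal{T})\log^4 n}{m}+(\text{lower order})
\end{align*}
with probability at least $1-4e^{-\eta^2/8}$. Using $\omega^2(\mathcal{T})\lesssim m_0$ and the assumption \eqref{nummeaslinng} (which absorbs the lower-order $\sqrt{m_0\log^4 n/m}$ term into the $m_0\log^4 n/m$ term) completes the reduction. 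This RIP-on-cones bound is exactly of the form established by Krahmer--Mendelson--Rauhut and the authors' own work on isometric sketching: the chaos process $\vct{v}\mapsto\twonorm{\mtx{A}\vct{v}}^2/m-1$ is controlled by Talagrand's generic chaining ($\gamma_2$-functional) together with Dudley's integral, where the $\pm 1$ Rademacher diagonal $\mtx{D}$ supplies the crucial symmetrization that turns the quadratic form into a bounded chaos, and the boundedness hypothesis $\max_{i,j}|\mtx{F}_{ij}|\le\Delta/\sqrt{n}$ produces the $\log^4 n$ overhead relative to the Gaussian case.

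The main obstacle is this second step: obtaining the sharp $\omega^2\log^4 n/m$ scaling with the correct $e^{-\eta^2/8}$ tail. Unlike the Gaussian case (Gordon's lemma), there is no simple ``min-max'' identity, so one must (i)~decompose $\vct{v}\mapsto\twonorm{\mtx{A}\vct{v}}^2-m\twonorm{\vct{v}}^2$ into a diagonal and off-diagonal chaos, (ii)~control the off-diagonal part via Hanson--Wright/decoupling and bound the resulting $\gamma_2$-functional by $\omega(\mathcal{T})\log^2 n$ using the entropy estimates for BOS matrices, and (iii)~derive subexponential deviation with the stated tail. Once this bound is in hand, combining it with the polarization reduction and Theorem~\ref{master} produces \eqref{rateminng} directly, and the probability of success is $1-4e^{-\eta^2/8}$ as claimed.
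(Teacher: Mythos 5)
Your proposal follows essentially the same route as the paper: reduce to bounding $\rho(1/m)$ via Theorem \ref{master}, polarize the bilinear form, and control the resulting restricted isometry/chaos bound for SORS matrices by generic chaining in the style of Krahmer--Mendelson--Rauhut — which is precisely the lemma the paper imports from its companion work \cite{oymak2015isometric} and then combines with the polarization argument of Section \ref{secprooffirstprop}. One small correction: the polarization identity produces both $(\vct{u}+\vct{v})$ and $(\vct{u}-\vct{v})$ terms, and the latter does \emph{not} lie in the cone, so the quadratic-form bound must be established on both Minkowski sets $\mathcal{C}_f(\vct{x})\cap\mathbb{S}^{n-1}\pm\mathcal{C}_f(\vct{x})\cap\mathbb{S}^{n-1}$ (as the paper does), not merely on $\mathcal{C}_f(\vct{x})\cap\mathcal{B}^n$; since both sets have Gaussian width at most $2\omega$ and radius at most $2$, this changes nothing beyond constants.
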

Theorems \ref{nonGaussThm} and \ref{SOSthm} above extend Theorem \ref{first prop} to the class of ISG and SORS matrices albeit at a loss in terms of the constants (log factors). We would like mention that our theoretical results apply to a more general class of matrices. Indeed as it will become clear in the proofs (Sections \ref{app:subgaussian} and \ref{proofSORS}) \eqref{nummeaslinng} and \eqref{rateminng} continue to hold for any matrix obeying the following isometry property.
\begin{align}
\label{isometry}
\max_{\vb\in \mathcal{S}}\abs{\twonorm{\A\vb}^2-\twonorm{\vb}^2}\leq \Delta,
\end{align}
where 
\begin{align}
\mathcal{S}=\mathcal{C}_{-}\bigcup\mathcal{C}_{+},\quad\text{with}\quad\Cc_-=\Cc_f(\vct{x})\cap \mathbb{S}^{n-1}-\Cc_f(\vct{x})\cap \mathbb{S}^{n-1},\quad \Cc_+=\Cc_f(\vct{x})\cap \mathbb{S}^{n-1}+\Cc_f(\vct{x})\cap \mathbb{S}^{n-1},\nn
\end{align}
and the columns of $\mtx{A}$ are properly normalized to have Euclidean norm equal to one.
We emphasize that finding maps satisfying \eqref{isometry} is an active and exciting area of research.

\section{Numerical results}\label{numerics}
In this section we corroborate the theoretical findings of the previous sections via experiments on synthetic data as well as a few experiments on natural images. We begin with some synthetic experiments.
\label{sec:numerical}

\subsection{Synthetic simulations}
In the simulations of this section we focus on sparse recovery problems using various cost functions $f$. We run our simulations with signals of fixed dimension $n=1000$. We however vary the sparsity level $s$, number of measurements $m$, and the learning parameter $\mu$. In our experiments we also use two different random ensembles for the sensing matrix $\mtx{A}\in\R^{m\times n}$: Gaussian with i.i.d.~entries $\mathcal{N}(0,1)$ and Bernoulli with i.i.d.~entries taking values $0$ and $1$ with equal probability.

\subsubsection{Phase transitions}
\label{secnumpt}

\begin{figure}[t]
\centering
\includegraphics[width=0.7\textwidth]{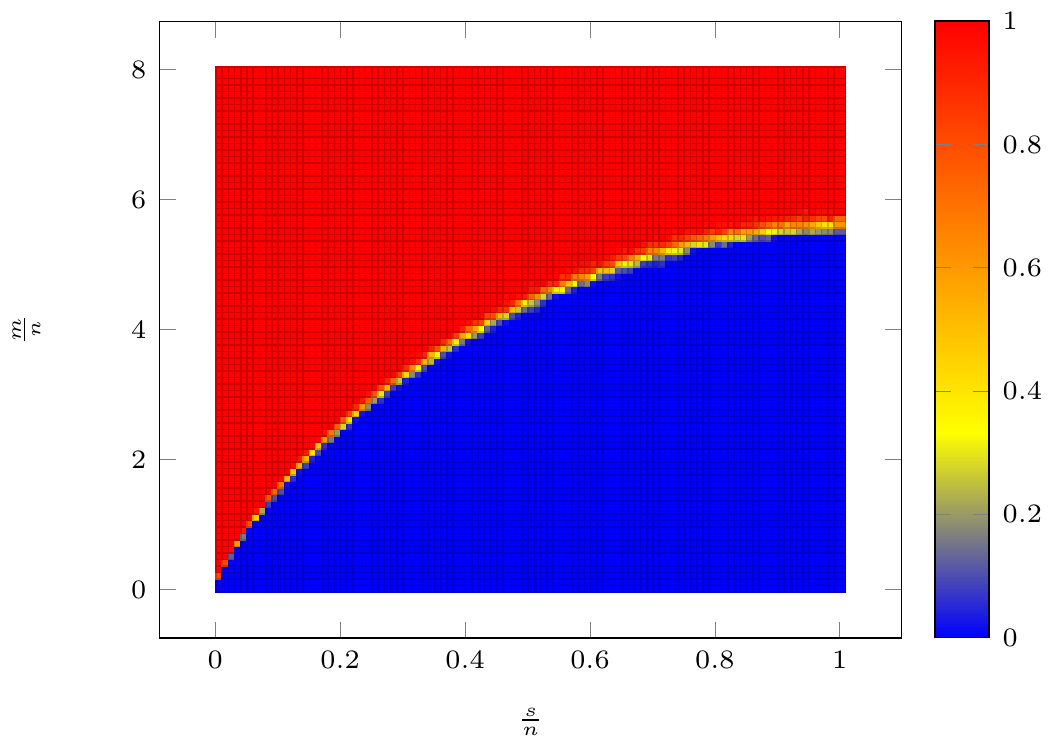}
\caption{
\textbf{Phase transition phenomenon for projected gradient iterations.}
This diagram shows the empirical probability that $\ell_1$-projected gradient with $\mu=1/m$ successfully identifies a vector $\vct{x}\in\R^{n=1000}$ with $s$ nonzero entries from m random measurements of the form $\vct{y} =\mtx{A}\vct{x}$. Here, $\mtx{A }\in\R^{m\times n}$ is a Gaussian matrix with entries i.i.d.~$\mathcal{N}(0,1)$. The colormap tapers between red and blue where red represents certain success, while blue represents certain failure.}
\label{PTmain}
\end{figure}
The aim of the simulations of this section is to explore how the phase transition curve changes with different functions $f$ and different learning parameters $\mu$. We use random signals $\vct{x}\in\R^n$ and random sensing matrices $\mtx{A}\in\R^{m\times n}$. The support of the signal is of size $s$ and is chosen at random with the values on the support distributed i.i.d.~$\mathcal{N}(0,1)$. We use $\vct{y}=\mtx{A}\vct{x}\in\R^m$ as our measurements and apply the PGD update with various learning parameters $\mu$ starting from $\vct{z}_0=\vct{0}$. We stop after $500$ iterations and record the empirical probability of success. The empirical probability of success is an average over $50$ trials, where in each instance, we generate new random signals and measurement matrices. We declare a trial successful if the relative error of the reconstruction $\twonorm{\hat{\vct{x}}-\vct{x}}/\twonorm{\vct{x}}$ falls below $10^{-3}$. 

Figure \ref{PTmain} depicts the empirical success probabilities via a color map for different sparsity levels $s$ and number of measurements $m$. Red represents certain success, while blue represents certain failure. In the experiments of this figure the measurements are Gaussians, the objective function is $f(\vct{z})=\onenorm{\vct{z}}$ and the learning parameter is set to $\mu=1/m$. This curve clearly shows that there is a phase transition curve for the number of measurements as a function of the sparsity. On one side of this curve PGD updates is successful with high probability on the other side it fails with high probability. In future depictions of phase transitions in this section we shall only plot the phase transition curve as found by a boundary detection routine in lieu of the complete colormap.\footnote{In particular MATLAB bwtraceboundary function.}

We now turn our attention to comparing the phase transition curves obtained for the sparse recovery problem with different cost functions and learning parameters. We shall use $\ell_p$-norms with $p=0,1/2,1$ as our cost functions (i.e.~$f(\vct{z})=\zeronorm{\vct{z}},\|\vct{z}\|_{\ell_{1/2}} , and \onenorm{\vct{z}}$).\footnote{We note that the proximal function of $\ell_{1/2}$ has a closed form solution e.g.~see \cite{zongben2012l1}. We have utilized this closed form solution of the proximal function to implement projection on to the one-half ball.} In our experiments we use both binary and Gaussian matrices.

\begin{figure}[t]
        \centering
\begin{tikzpicture}[scale=1] 
\begin{groupplot}[group style={group size=2 by 1,horizontal sep=1cm,xlabels at=edge bottom, ylabels at=edge left,xticklabels at=edge bottom},xlabel=$\frac{s}{n}$,
        ylabel=$\frac{m}{n}$, legend style={at={(0.72,0.38)},anchor=north,legend cell align=left}]
 \nextgroupplot[title={Gaussian matrices}]

        \addplot [red,line width=1pt] table[x index=0,y index=3]{./THMPT};\addlegendentry{p=1 (Thm \ref{first prop})}
        \addplot [blue,line width=1pt] table[x index=0,y index=1]{./PTmatopt_0_p_0_Reg_1};\addlegendentry{$p=0$}
         \addplot [teal,line width=1pt] table[x index=0,y index=1]{./PTmatopt_0_p_0.5_Reg_1};\addlegendentry{$p=\frac{1}{2}$}
          \addplot [magenta,line width=1pt] table[x index=0,y index=1]{./PTmatopt_0_p_1_Reg_1};\addlegendentry{$p=1$}

        \nextgroupplot[title={Bernoulli matrices}]

        \addplot [red,line width=1pt] table[x index=0,y index=3]{./THMPT};\addlegendentry{p=1 (Thm \ref{first prop})}
        \addplot [blue,line width=1pt] table[x index=0,y index=1]{./PTmatopt_1_p_0_Reg_1};\addlegendentry{$p=0$}
         \addplot [teal,line width=1pt] table[x index=0,y index=1]{./PTmatopt_1_p_0.5_Reg_1};\addlegendentry{$p=\frac{1}{2}$}
          \addplot [magenta,line width=1pt] table[x index=0,y index=1]{./PTmatopt_1_p_1_Reg_1};\addlegendentry{$p=1$}

        \end{groupplot}
\end{tikzpicture}
\caption{
\textbf{Phase transition curve with a greedy learning parameter.}\\
This diagram shows the empirical phase transition curve for $\ell_p$-projected gradient with $\mu=1/m$ for successful recovery of a vector $\vct{x}\in\R^{n=1000}$ with $s$ nonzero entries from m random measurements of the form $\vct{y} =\mtx{A}\vct{x}$. Here, $\mtx{A }\in\R^{m\times n}$ is a Gaussian or Bernouli random matrix. We demonstrate results using $\ell_0,\ell_{1/2},$ and $\ell_1$. We also draw the prediction of Theorem \ref{first prop} for $\ell_1$-PGD for comparison. We see that the prediction of Theorem \ref{first prop} is accurate up to a multiplicative factor of $1.18$.}
\label{fig:greedy}
\end{figure}

\begin{figure}
\centering
\begin{subfigure}[b]{0.95\textwidth} 
        \centering
\begin{tikzpicture}[scale=1] 
\begin{groupplot}[group style={group size=2 by 1,horizontal sep=1cm,xlabels at=edge bottom, ylabels at=edge left,xticklabels at=edge bottom},xlabel=$\frac{s}{n}$,
        ylabel=$\frac{m}{n}$, legend style={at={(0.84,0.38)},anchor=north,legend cell align=left}]
 \nextgroupplot[title={Gaussian matrices}]
        
        \addplot [gray!70,line width=2pt] table[x index=0,y index=1]{./DonohoCurve};\addlegendentry{$\ell_1$ PT}

        \addplot [blue,line width=1pt] table[x index=0,y index=1]{./PTmatopt_0_p_0_Reg_4};\addlegendentry{$p=0$}
         \addplot [teal,line width=1pt] table[x index=0,y index=1]{./PTmatopt_0_p_0.5_Reg_4};\addlegendentry{$p=\frac{1}{2}$}
          \addplot [magenta,line width=1pt] table[x index=0,y index=1]{./PTmatopt_0_p_1_Reg_4};\addlegendentry{$p=1$}

        \nextgroupplot[title={Binary matrices}]
        
        \addplot [gray!70,line width=2pt] table[x index=0,y index=1]{./DonohoCurve};\addlegendentry{$\ell_1$ PT}
        
        \addplot [blue,line width=1pt] table[x index=0,y index=1]{./PTmatopt_1_p_0_Reg_4};\addlegendentry{$p=0$}
         \addplot [teal,line width=1pt] table[x index=0,y index=1]{./PTmatopt_1_p_0.5_Reg_4};\addlegendentry{$p=\frac{1}{2}$}
          \addplot [magenta,line width=1pt] table[x index=0,y index=1]{./PTmatopt_1_p_1_Reg_4};\addlegendentry{$p=1$}
          
        \end{groupplot}
\end{tikzpicture}
\caption{\textbf{Phase transition curve with a conservative learning parameter.}
This diagram shows the empirical phase transition curve for $\ell_p$-projected gradient with $\mu=1/\left(\sqrt{m}+\sqrt{n}\right)^2$ for successful recovery of a vector $\vct{x}\in\R^{n=1000}$ with $s$ nonzero entries from m random measurements of the form $\vct{y} =\mtx{A}\vct{x}$. Here, $\mtx{A }\in\R^{m\times n}$ is a Gaussian or Bernouli random matrix. We demonstrate results using $\ell_0,\ell_{1/2},$ and $\ell_1$. We also draw the prediction of Theorem \ref{ALTthm} for $\ell_1$-PGD for comparison. We see that the prediction of Theorem \ref{ALTthm} is a near perfect match with the empirical simulations.} \label{fig:conservative} \end{subfigure} 
\\
\begin{subfigure}[b]{0.95\textwidth} 
       \centering
\begin{tikzpicture}[scale=1] 
\begin{groupplot}[group style={group size=2 by 1,horizontal sep=1cm,xlabels at=edge bottom, ylabels at=edge left,xticklabels at=edge bottom},xlabel=$\frac{s}{n}$,
        ylabel=$\frac{m}{n}$, legend style={font=\tiny,at={(0.24,1)},anchor=north,legend cell align=left}]
 \nextgroupplot[title={Gaussian matrices}]
        
        \addplot [gray!70,line width=2pt] table[x index=0,y index=2]{./THMPT};\addlegendentry{pred. Thm \ref{PGthm}}
        \addplot [blue,line width=1pt] table[x index=0,y index=1]{./PTCVXmatopt_0_p_1_Reg_1};\addlegendentry{$\mu=1/m$}

          \addplot [magenta,line width=1pt] table[x index=0,y index=1]{./PTCVXmatopt_0_p_1_Reg_3};\addlegendentry{$\mu$ as in Theorem \ref{PGthm}}
          \addplot [teal,line width=1pt] table[x index=0,y index=1]{./PTmatopt_0_p_1_Reg_4};\addlegendentry{$\mu=1/\left(\sqrt{m}+\sqrt{n}\right)^2$}
     
        \nextgroupplot[title={Binary matrices}]
        
        \addplot [gray!70,line width=2pt] table[x index=0,y index=2]{./THMPT};\addlegendentry{p=1 (Thm \ref{PGthm})}
        \addplot [blue,line width=1pt] table[x index=0,y index=1]{./PTCVXmatopt_1_p_1_Reg_1};\addlegendentry{$\mu=1/m$}

          \addplot [magenta,line width=1pt] table[x index=0,y index=1]{./PTCVXmatopt_1_p_1_Reg_3};\addlegendentry{$\mu$ as in Theorem \ref{PGthm}}
          \addplot [teal,line width=1pt] table[x index=0,y index=1]{./PTmatopt_1_p_1_Reg_4};\addlegendentry{$\mu=1/\left(\sqrt{m}+\sqrt{n}\right)^2$} 
        \end{groupplot}
\end{tikzpicture}
\caption{\textbf{Comparison of different learning parameters $\mu$ for $\ell_1$-PGD.}\\
This diagram shows the empirical phase transition curve for $\ell_1$-projected gradient with three different learning parameters: greedy choice $\mu=1/m$, a conservative choice $\mu=1/\left(\sqrt{m}+\sqrt{n}\right)^2$ as well as the structure dependent choice of Theorem \ref{PGthm}. This curves are about recovery of a vector $\vct{x}\in\R^{n=1000}$ with $s$ nonzero entries from m random measurements of the form $\vct{y} =\mtx{A}\vct{x}$. Here, $\mtx{A }\in\R^{m\times n}$ is a Gaussian or Bernouli random matrix. We also draw the prediction of Theorem \ref{PGthm} for $\ell_1$-PGD for comparison. We see that the prediction of Theorem \ref{PGthm} is a near perfect match with the empirical simulations.
}
\label{fig:choices}
\end{subfigure} 
\end{figure}

We first consider the greedy learning parameter $\mu= \frac{1}{m}$ and show the empirical phase transition curves in Figure \ref{fig:greedy}. These figures show that the phase transition curves for Bernoulli and Gaussian matrices are essentially identical. This is in line with the universality of phase transitions observed previously, e.g.~see \cite{universal}. These plots also suggest that non-convex cost functions may require fewer measurements for successful recovery. While the curves are close to each other, this experiment shows that $\ell_{1/2}$-PGD requires the smallest sample complexity and $\ell_0$-PGD outperforms $\ell_1$-PGD. This is perhaps surprising as one would expect $p=1/2$ to perform somewhere between $p=0$ and $p=1$.  We also plot our theoretical prediction for the phase transition for $\mu=1/m$ (equation \eqref{nummeaslin} from Theorem \ref{first prop}). We observe that the empirical phase transitions and our theoretical predictions are rather close. In particular, our theoretical prediction for $p=1$ is off by at most a factor of $1.18$. Our predictions for $p=0$ and $p=1$ is not as sharp due to the extra factor $\kappa_f=2$ in \eqref{nummeaslin}.

We now focus on phase transition curve for the more conservative learning parameter $\mu=1/(\sqrt{m}+\sqrt{n})^2$. For PGD with $\ell_1$ projection, Theorem \ref{ALTthm} predicts that PGD phase transition matches that of $\ell_1$-minimization (a.k.a.~the Donoho-Tanner curve \cite{DonPhaseTrans,DonNonNegative}). In fact as mentioned earlier the statement of Theorem \ref{ALTthm} holds for any convex function and not just $\ell_1$. Figure \ref{fig:conservative} confirms this fact empirically and shows that our theoretical prediction is indeed tight. This plot also shows the phase transition curves of $\ell_0$-PGD and $\ell_{1/2}$-PGD is below that of $\ell_1$-PGD. These curves confirm that starting from zero nonconvex PGD updates beat the phase transition of $\ell_1$ minimization (i.e.~the Donoho-Tanner curve)! We note that our theory for the conservative tuning parameter only applies to convex regularizers. Characterizing $\ell_0$-PGD or $\ell_{1/2}$-PGD curves via the conservative choice of tuning parameter $\mu= \frac{1}{(\sqrt{m}+\sqrt{n})^2}$ is left to future research.

Finally, in Figure \ref{fig:choices}, we compare the empirical phase transitions of different choices of tuning parameters for $\ell_1$ PGD. In addition to the greedy ($\mu=1/m$) and conservative ($\mu=1/\left(\sqrt{m}+\sqrt{n}\right)^2$) choices of the learning parameter we also use the structure dependent choice of the learning parameter (equation \eqref{structlearn} of Theorem \ref{PGthm}). The structure dependent choice of the learning parameter aimed to have a linear convergence rate (similar to the greedy choice $\mu=1/m$) but with improved sample complexity. Figure \ref{fig:choices} shows that this is indeed the case, as the phase transition curve of this choice of learning parameter is below that of the greedy choice. In particular our theoretical prediction in Theorem \ref{PGthm}, is a near identical match with the empirical phase transition. The difference between various choices of the learning parameter $\mu$ shows that there is a tradeoff between convergence rate and sample complexity. The greedy choice of learning parameter leads to a linear convergence rate but requires more measurements for successful recovery when compared to the phase transition of $\ell_1$ minimization (note that this phase transition is not drawn in this figure but it essentially an identical match with the curve of $\mu=1/(\sqrt{m}+\sqrt{n})^2$). In comparison the conservative choice of learning parameter does lead to exact recovery precisely above the $\ell_1$ phase transition. The convergence rate is no longer linear (it is however still geometric). Theorem \ref{thm:converse rate} shows that in fact this behavior is sharp and when using PGD updates a linear rate of convergence can not be obtained exactly above the phase transition.

\subsubsection{Rates of convergence}
\label{sec:rateofconvnum}
\begin{figure}[t]
        \centering
        \includegraphics[width=0.5\textwidth]{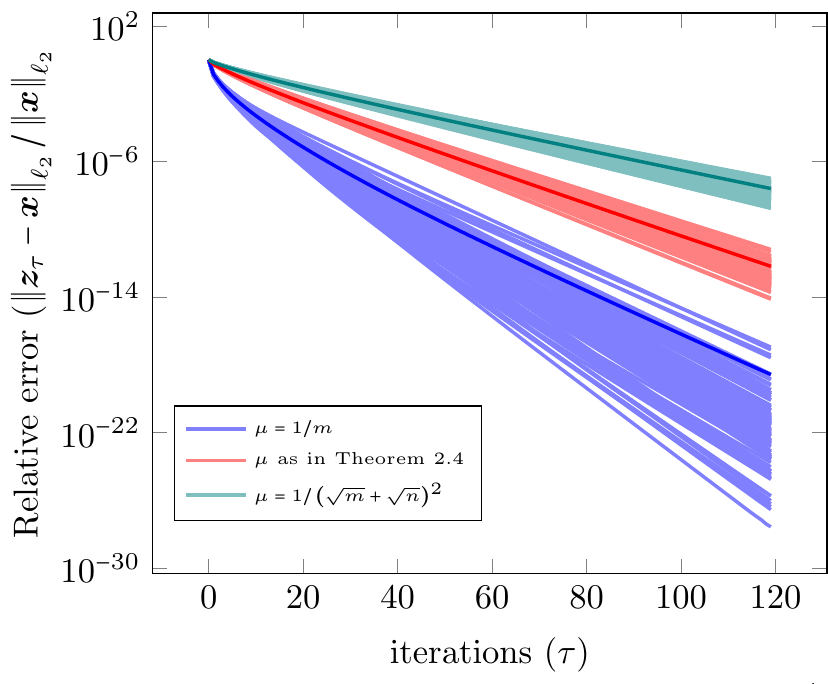}

\caption{\textbf{Rate of convergence of $\ell_1$-PGD for different learning parameters.}\\
This diagram shows the empirical rates of convergence for $\ell_1$-projected gradient with three different learning parameters: greedy choice $\mu=1/m$, a conservative choice $\mu=1/\left(\sqrt{m}+\sqrt{n}\right)^2$ as well as the structure dependent choice of Theorem \ref{PGthm}. These curves are about recovery of a vector $\vct{x}\in\R^{n=2000}$ with $s=40$ nonzero entries from $m=1200$ random measurements of the form $\vct{y} =\mtx{A}\vct{x}$. Here, $\mtx{A }\in\R^{m\times n}$ is a Gaussian random matrix. Curves show relative error of $\ell_1$ projected gradient descent for $50$ iterations and show the convergence rate for $100$ trials as well as the average over these trials in bold. We see that all three step sizes show geometric rate of convergence and that the larger step sizes lead to a faster convergence rate.
}
\label{fig:choices2}
\end{figure}

The aim of the simulations of this section is to explore how the rate of convergence of project gradient depends on the number of measurements $m$ and the structure complexity of the unknown signal $m_0$ as well as the learning parameter $\mu$. Similar to the simulations of the previous section we use random signals $\vct{x}\in\R^n$ and random sensing matrices $\mtx{A}\in\R^{m\times n}$. The support of the signal is of size $s$ and is chosen at random with the values on the support distributed i.i.d.~$\mathcal{N}(0,1)$. We use $\vct{y}=\mtx{A}\vct{x}\in\R^m$ as our measurements and apply the PGD update with various learning parameters $\mu$ starting from $\vct{z}_0=\vct{0}$. In the experiments of this section $n=2000$.

In our experiment we shall focus on sparse recovery using $\ell_1$ minimization. We run $\ell_1$ projected gradient descent for $50$ iterations and record the convergence rate for $100$ trials. We depict these curves in Figure \ref{fig:choices2} as well as the average over these trials in bold. We see that all three step sizes show geometric rate of convergence and that the larger step sizes lead to a faster convergence rate.

 \begin{figure}[t]
        \centering
\begin{tikzpicture}[scale=1] 
\begin{groupplot}[group style={group size=1 by 1,horizontal sep=1cm,xlabels at=edge bottom, ylabels at=edge left,xticklabels at=edge bottom},xlabel=iterations ($\tau$),
        ylabel=Normalized relative error, legend style={font=\tiny,at={(0.84,0.98)},anchor=north,legend cell align=left}]
        
        \nextgroupplot[title={Normalized rates of convergence}]
        
        \addplot [magenta,line width=1pt] table[x index=0,y index=1]{./convRateGreedy};\addlegendentry{$s=0.01n$}
        \addplot [teal,line width=1pt] table[x index=0,y index=2]{./convRateGreedy};\addlegendentry{$s=0.02n$}
        \addplot [blue,line width=1pt] table[x index=0,y index=3]{./convRateGreedy};\addlegendentry{$s=0.05n$}
        \addplot [orange,line width=1pt] table[x index=0,y index=4]{./convRateGreedy};\addlegendentry{$s=0.1n$}
        \addplot [black,line width=1pt] table[x index=0,y index=5]{./convRateGreedy};\addlegendentry{$s=0.2n$}
        \addplot [red,line width=1pt] table[x index=0,y index=6]{./convRateGreedy};\addlegendentry{$s=0.5n$}

        \end{groupplot}
\end{tikzpicture}
\caption{\textbf{Normalized convergence rates for different sparsity levels.}\\
This diagram shows the empirical rates of convergence for $\ell_1$-projected gradient. These curves are about recovery of a vector $\vct{x}\in\R^{n=2000}$ with $s$ nonzero entries from m random measurements of the form $\vct{y} =\mtx{A}\vct{x}$. Here, $\mtx{A }\in\R^{m\times n}$ is a Gaussian random matrix and different curves denote different values of $s$. The horizontal axis corresponds to the number of iterations. The vertical axis represents the median relative error from $100$ trials normalized by the ratio ${\left((1-\gamma)\frac{\tilde{m}_0}{m}\right)}^{\frac{\tau}{2}}$ for different sparsity levels. Here, $\gamma=0.075$ and $\tilde{m}_0$ is the value obtained from the the empirical phase transition for the greedy choice $\mu=1/m$ (the curve is depicted in Figure \ref{PTmain})). These curves confirm that the convergence rate scales like $\sqrt{\frac{\tilde{m}_0}{m}}$ as predicted by Theorem \ref{first prop}.}
\label{fig:rates}
\end{figure}

To investigate the nature of the dependence of the geometric rate of convergence on the number of measurements and the minimum required number of measurements (which in turn depends on sparsity) we consider different sparsity levels $s=0.01n, 0.02n, 0.05n, 0.1n, 0.2n, $ and $0.5n$. The number of measurements used for each sparsity level is set to $m=8m_0$, where $m_0$ is obtained from the phase transition of $\ell_1$ (Donoho-Tanner curve). For each sparsity level we run $\ell_1$-PGD with learning parameter $\mu=1/m$ for $50$ iterations and record the relative error. In Figure \ref{fig:rates} we show the median relative error from $100$ trials normalized by the ratio ${\left((1-\gamma)\frac{\tilde{m}_0}{m}\right)}^{\frac{\tau}{2}}$ for different sparsity levels. Here, $\gamma=0.075$ and $\tilde{m}_0$ is the value obtained from the empirical phase transition for the greedy choice $\mu=1/m$ (the curve is depicted in Figure \ref{PTmain})). We see that except for a burn-in period across different sparsity levels and many iterations the normalized relative error is constant. This shows that the convergence rate scales like $\sqrt{\frac{\tilde{m}_0}{m}}$. This is exactly the type of behavior predicted by Theorem \ref{first prop}. In particular, Theorem \ref{first prop} finds an upper bound on the convergence rate of the form $(\frac{8m_0}{m})^{\frac{\tau}{2}}$ where $8m_0$ is an upper bound on the empirical phase transition $\tilde{m}_0$. Figure \ref{fig:rates} indicates that our bounds can be sharpened in practice by replacing the upper bound $8m_0$ by the empirical phase transition $\tilde{m_0}$.

\subsubsection{More data, less work}
Our results suggests interesting tradeoffs between data and computational resources. To see this for the moment let us focus on Gaussian design matrices and convex $f$. As we mentioned in real applications of linear inverse problems we observe noisy samples $\vct{y}=\mtx{A}\vct{x}+\vct{w}$ and wish to solve
\begin{align}
\label{optlass}
\hat{\vct{x}}=\underset{\vct{z}}{\arg\min}\quad\twonorm{\vct{y}-\mtx{A}\vct{z}}^2\quad\text{subject to}\quad f(\vct{z})\le f(\vct{x}).
\end{align}
Let $\epsilon$ be the desired relative accuracy of the optimal solution, i.e. $\epsilon=\twonorm{\hat{\vct{x}}-\vct{x}}/\twonorm{\vct{x}}$. It is known (e.g. see \cite{StojLAS, oymak2013simple,plan2015generalized}) that for Gaussian measurements and convex $f$  
\begin{align*}
\frac{\twonorm{\hat{\vct{x}}-\vct{x}}}{\twonorm{\vct{x}}}=\epsilon\lesssim \frac{1}{\sqrt{m}-\sqrt{m_0}}\frac{\twonorm{\vct{w}}}{\twonorm{\vct{x}}}.
\end{align*}
We can think of $\epsilon$ as the ``statistical accuracy" of our solution that is the answer we would get if we solved the optimization problem \eqref{optlass} with arbitrary precession. Now let us compare this answer with our results for projected gradient descent which states that using $\mu\approx \frac{1}{m}$ after $\tau$ iterations for $m\ge 6.85m_0$ we have
\begin{align*}
\frac{\twonorm{\vct{z}_\tau-\vct{x}}}{\twonorm{\vct{x}}}\le \left(6.85\frac{m_0}{m}\right)^{\frac{\tau}{2}}+\frac{\sqrt{\pi}}{8}(3+\sqrt{5})\frac{1}{\left(1-\sqrt{6.85\frac{m_0}{m}}\right)}\frac{\sqrt{m_0}}{m}\frac{\twonorm{\vct{w}}}{\twonorm{\vct{x}}}.
\end{align*}
Simple manipulations imply that for $m\ge 6.85m_0$ we have
\begin{align}
\label{precision}
\frac{\twonorm{\vct{z}_\tau-\vct{x}}}{\twonorm{\vct{x}}}\le \left(6.85\frac{m_0}{m}\right)^{\frac{\tau}{2}}+9\epsilon.
\end{align}
The latter expression is an upper bound on the ``numerical accuracy" we get when using projected gradient descent. Since our statistically accuracy is $\epsilon$, it is natural to aim for a numerical precision of the same order e.g.~$10\epsilon$. In order to guarantee this numerical accuracy, based on \eqref{precision} the number of iterations must obey
\begin{align*}
\tau\ge 2\frac{\log\left(\frac{1}{\epsilon}\right)}{\log\left(\frac{m}{6.85m_0}\right)}.
\end{align*}
This expression is rather interesting as it shows that the larger $m$ or data complexity is, the smaller the number of iterations required to get to a specific numerical accuracy of $10\epsilon$. To gain insight on the computational complexity note that in each iteration we have to apply $\mtx{A}$ and its transpose $\mtx{A}^*$ once and also perform a projection onto the set $\mathcal{K}$. Using $\mathcal{X}$ to denote the computational complexity of an operation, the total computation complexity to arrive at a numerical accuracy of $\epsilon$ is
\begin{align*}
\text{Time budget}\propto&\mathcal{X}(\text{numerical accuracy of } \epsilon)\\
&\ge2\frac{\log\left(\frac{10}{\epsilon}\right)}{\log\left(\frac{m}{6.85m_0}\right)}\bigg(\mathcal{X}(\text{apply }\mtx{A})+\mathcal{X}(\text{apply }\mtx{A}^*)+\mathcal{X}(\text{projection on to }\mathcal{K})\bigg).
\end{align*}
As a result we can characterize the required number of measurements as a function of the structural complexity (via $m_0$), time budget and the cost of applying $\mtx{A}$, its transpose and projection onto the set $\mathcal{K}$. Focusing on Gaussian matrices and assuming that the projection is negligible compared to applying $\mtx{A}$ and its transpose. We note that for many interesting problems that arise in practice this is true. For example, when $f$ is the $\ell_1$ norm the computational complexity of projection is on the order of $n\log n$ which is less costly than applying $\mtx{A}$ and its transpose. Therefore, we can deduce that
\begin{align*}
\text{computational complexity}\propto \frac{m}{\log(\frac{m}{6.85m_0})}n.
\end{align*}
This formula shows that as we increase the data complexity $m$ the computational complexity decreases up to a certain point ($m\le 21.75m_0$). However, if we increase the number of measurements further the computational complexity will start increasing again.

The tradeoff between data and computational resources is even more interesting for SORS ensembles specially when applying $\mtx{A}$ and its transpose is independent of the number of measurements $m$. For example, for subsampled Fourier with random sign or subsampled Hadamard and random sign we have
\begin{align*}
\text{computational complexity}\propto \frac{n\log n}{\log(\frac{m}{c(\log n)^4m_0})}.
\end{align*}
The further we increase the data complexity the smaller the computational complexity: More data means less work!

To demonstrate that this is indeed the correct behavior let us focus on the computational complexity when $f(\vct{z})=\onenorm{\vct{z}}$ and when the structured signal has sparsity $s$. We run projected gradient descent until the relative error is small, i.e.~$\twonorm{\vct{z}_\tau-\vct{x}}/\twonorm{\vct{x}}\le10^{-3}$. We use a sparsity level of $s=0.025n$ and compare the real run time of the algorithm with theoretical predictions of the form $\alpha_G\frac{m}{\log\left(\frac{m}{\beta_G m_0}\right)}$ and $\alpha_F\frac{1}{\log\left(\frac{m}{\beta_F m_0}\right)}$ for Gaussian and Fourier with random sign ensembles. The values of $\alpha$ and $\beta$ are chosen to minimize the Euclidean norm of the run time as compared to these theoretical predictions. We note that the best choices are $\beta_G=4.054$ and $\beta_F=3.5$. Figure shows that our theoretical predictions is a near ideal match with the actual run time of the algorithm (This is with the caveat that we can not predict $\beta_G$ and $\beta_F$ precisely of course. Our precise theoretical prediction for $\beta_G$ was $8$ but as we mentioned before this constant is only an upper bound and not precise but interestingly only off by a factor of $1.7$). As a result these curves confirm our earlier observation that for Gaussian ensembles increasing the number of measurements (or data complexity) decreases the runtime as long as $m\le 10.25m_0$ but further increases in the data complexity will lead to an increase in the run time. For the Fourier with random sign ensemble however, increasing the data complexity will only lead to further decrease in the run time. More data, less work!

 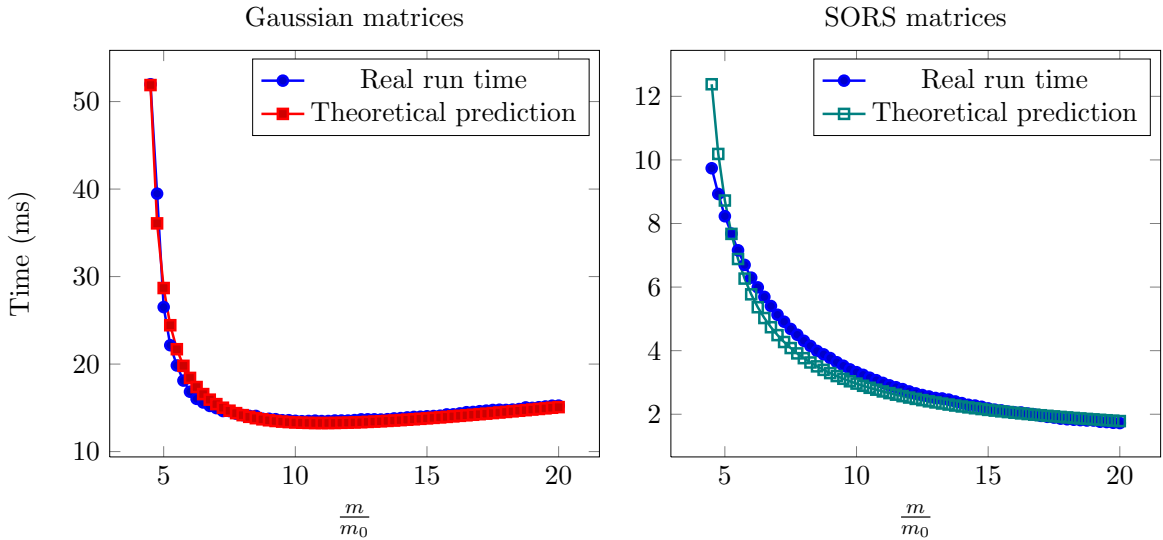
\begin{figure}[!h]
        \centering
\begin{tikzpicture}[scale=0.95] 
\begin{groupplot}[group style={group size=2 by 1,horizontal sep=1cm,xlabels at=edge bottom, ylabels at=edge left,xticklabels at=edge bottom},xlabel=$\frac{m}{m_0}$,
        ylabel=Time (ms)]
 \nextgroupplot[title={Gaussian matrices}]
        
        \addplot +[blue,line width=1pt] table[x index=0,y index=1]{./mdlwgr};\addlegendentry{Real run time}
        \addplot +[red,line width=1pt] table[x index=0,y index=1]{./mdlwgt};\addlegendentry{Theoretical prediction}
        
        \nextgroupplot[title={SORS matrices}]
        \addplot +[mark=*,solid,blue,line width=1pt] table[x index=0,y index=1]{./mdlwfr};\addlegendentry{Real run time}
        \addplot +[mark=square ,solid,teal,line width=1pt] table[x index=0,y index=1]{./mdlwft};\addlegendentry{Theoretical prediction}
        \end{groupplot}
\end{tikzpicture}
\caption{Blue curves show actual run time of projected gradient descent in milliseconds. Red curves shows the best scaling of the functions $\frac{m}{\log\left(\frac{m}{4.054m_0}\right)}$ and $\frac{1}{\log\left(\frac{m}{3.5m_0}\right)}$ for the Gaussian and sub-sampled Fourier with random sign ensembles, respectively.}
\label{SNRs}
\end{figure}

\subsection{Experiments on natural images}

In this section we demonstrate the utility of an image denoiser for image recovery from compressed measurements. We refer the reader to \cite{Montanaritalk, metzler2014denoising} for similar simulations with the AMP algorithm. For this purpose we shall use the CBM3D denoiser of \cite{dabov2007image}. First let us demonstrate that CBM3D is indeed well suited for denoising. To this aim we consider two images and add Gaussian noise to them. The first image is of the Eram Garden in the central Iranian city of Shiraz. The second image depicts fine nerve fibres highlighted with a fluorescent dye. Figure \ref{imdenoisf} shows the performance of CBMD3 on these two images with added Gaussian noise. The signal to noise ratio in these images is reported in terms of PSNR ($10\log_{10}($number of pixels $/\twonorm{\hat{\vct{x}}-\vct{x}}^2)$). These figures clearly show that CBMD3 is a good denoiser for images in the presence of Gaussian noise.

\begin{figure}
\centering
\begin{subfigure}[b]{0.3\textwidth} \includegraphics[width=\textwidth]{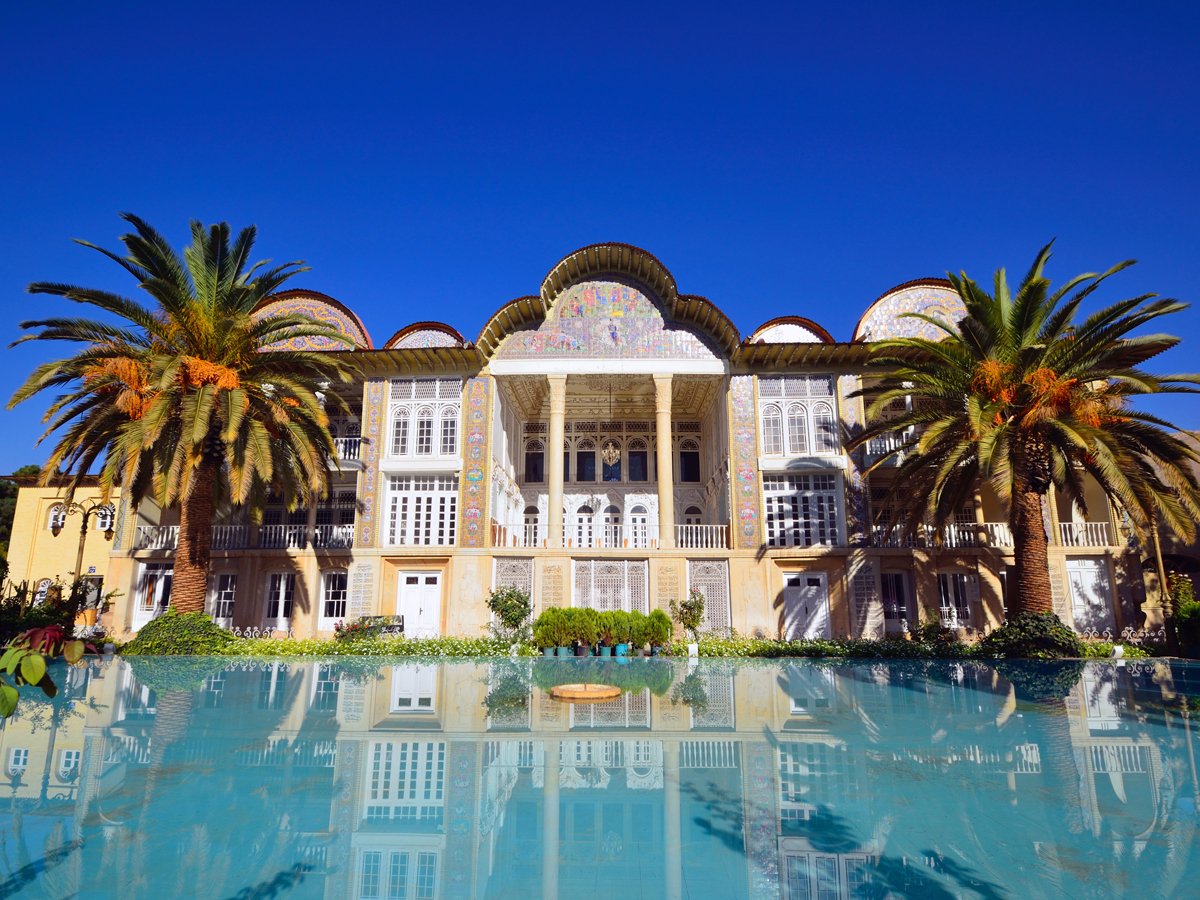} \caption{Eram Garden, Shiraz.} \label{fig:shiraz} \end{subfigure} 
\begin{subfigure}[b]{0.3\textwidth} \includegraphics[width=\textwidth]{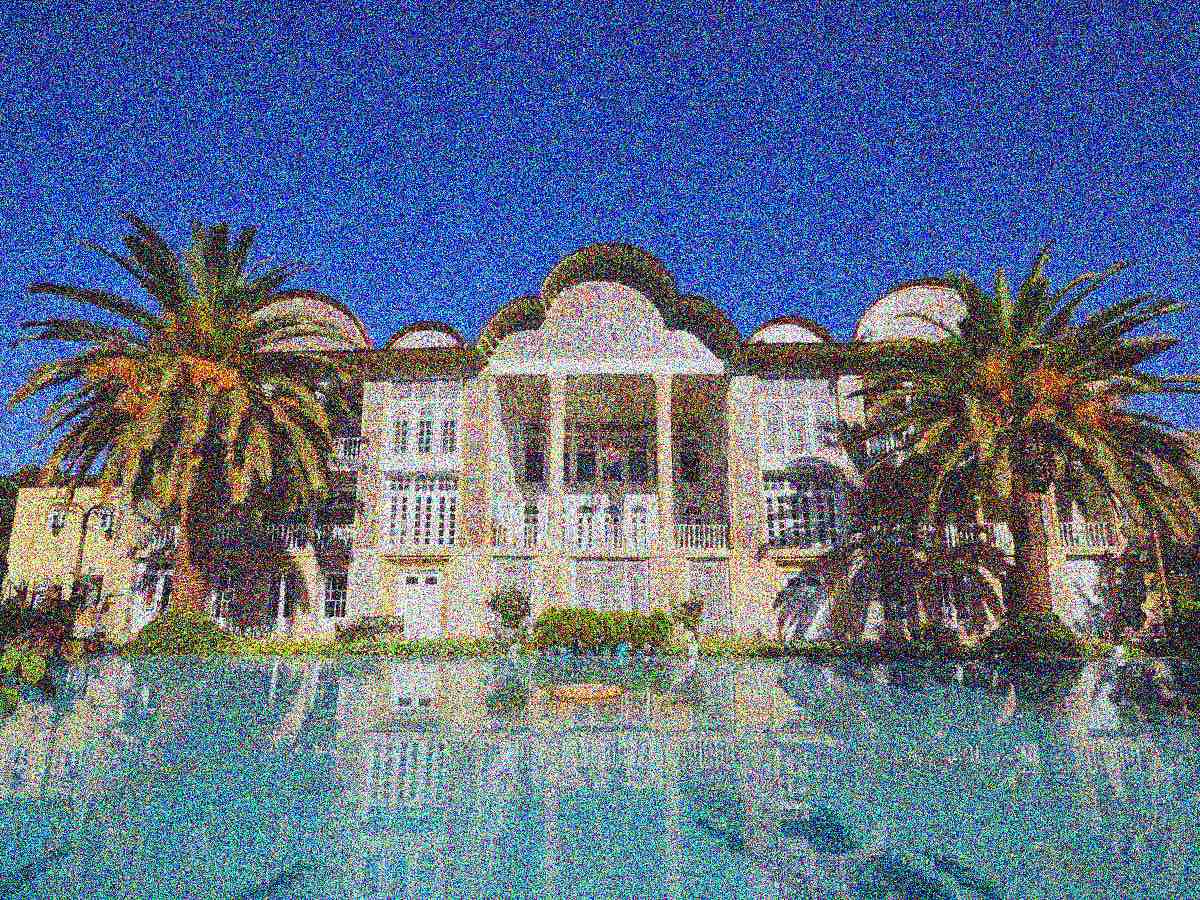} \caption{Noisy image, PSNR=8.13.} \label{fig:shiraznoisy} \end{subfigure} 
\begin{subfigure}[b]{0.3\textwidth} \includegraphics[width=\textwidth]{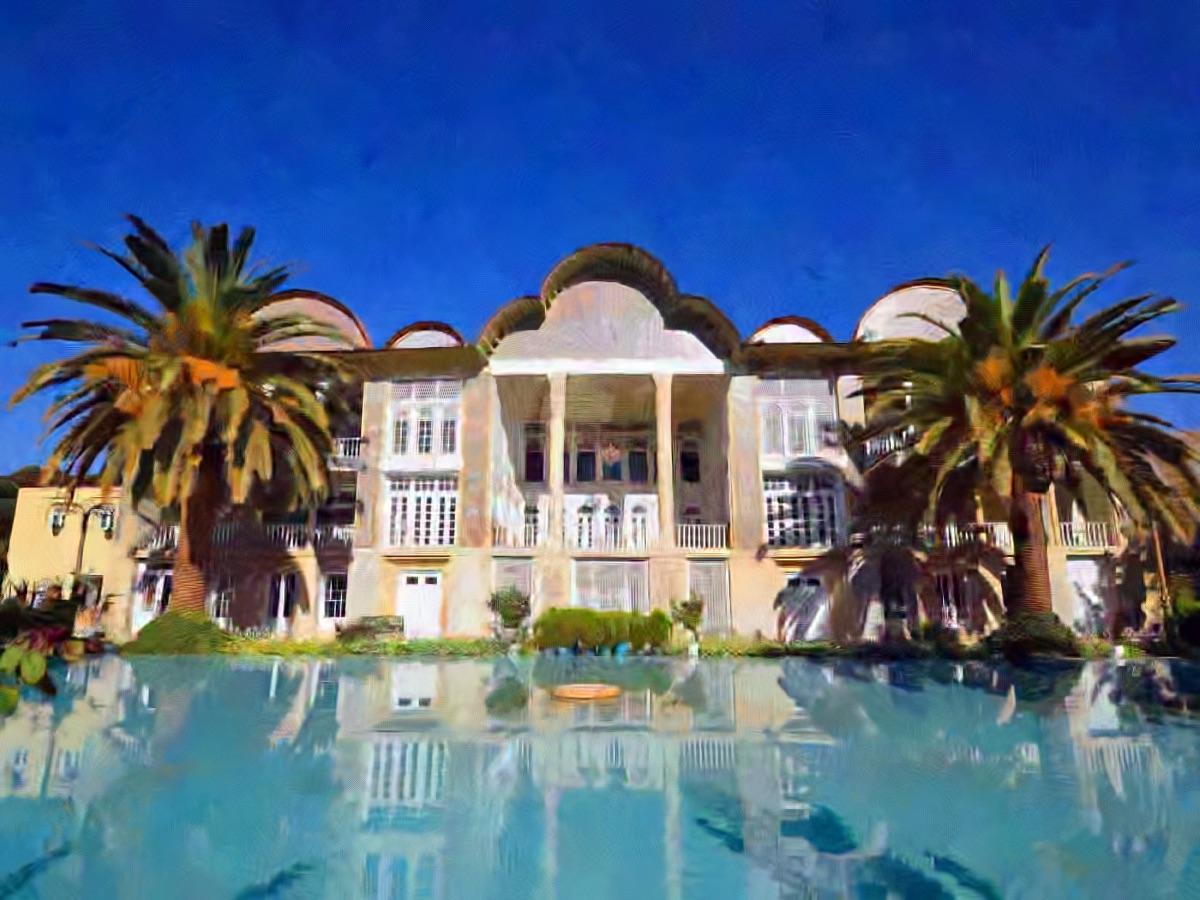} \caption{\hspace{-0.1cm}Denoised image, PSNR=24.3.} \label{fig:shirazdenoisy} \end{subfigure} 
\\
\begin{subfigure}[b]{0.3\textwidth} \includegraphics[width=\textwidth]{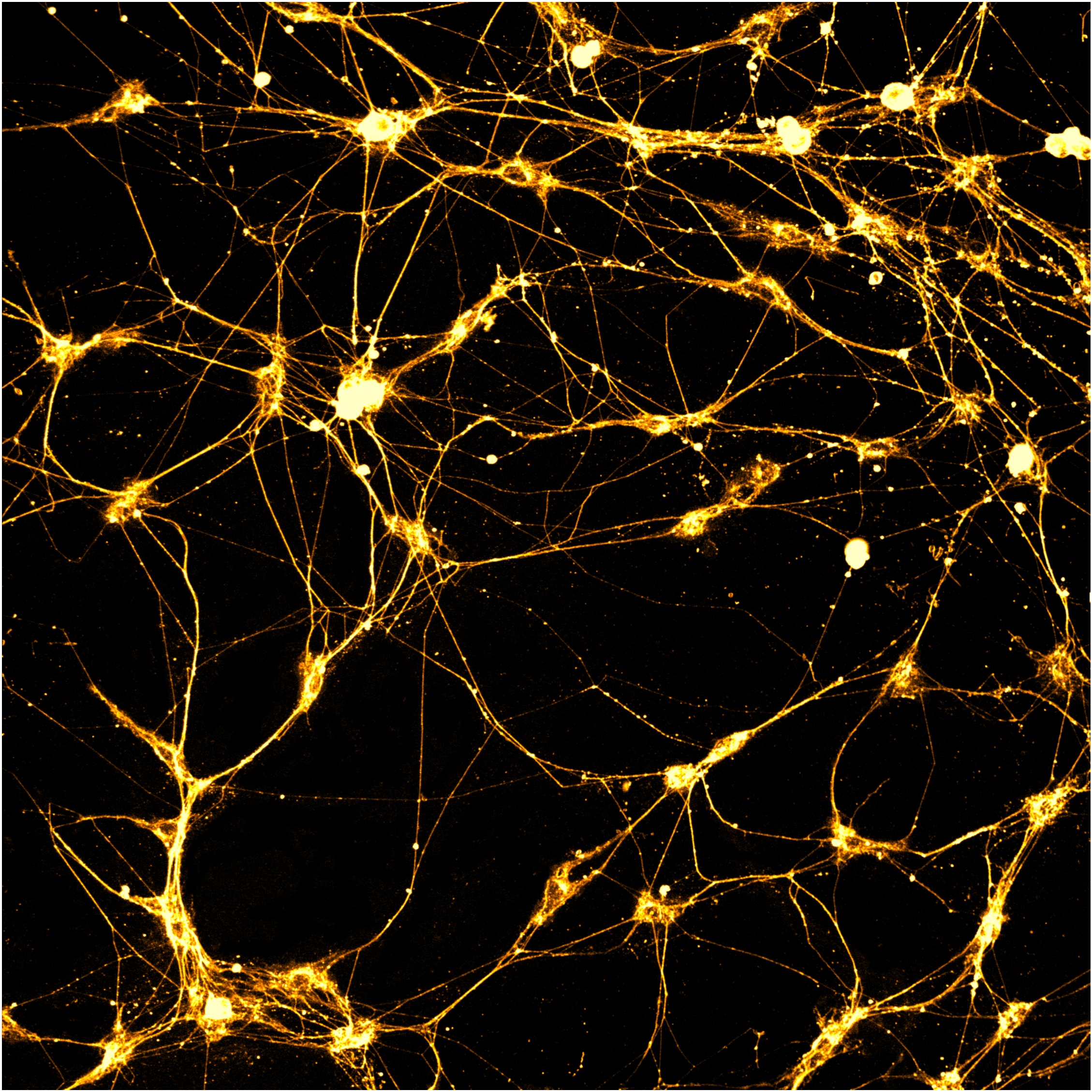} \caption{Fine nerve fibres.} \label{fig:gull} \end{subfigure}
\begin{subfigure}[b]{0.3\textwidth} \includegraphics[width=\textwidth]{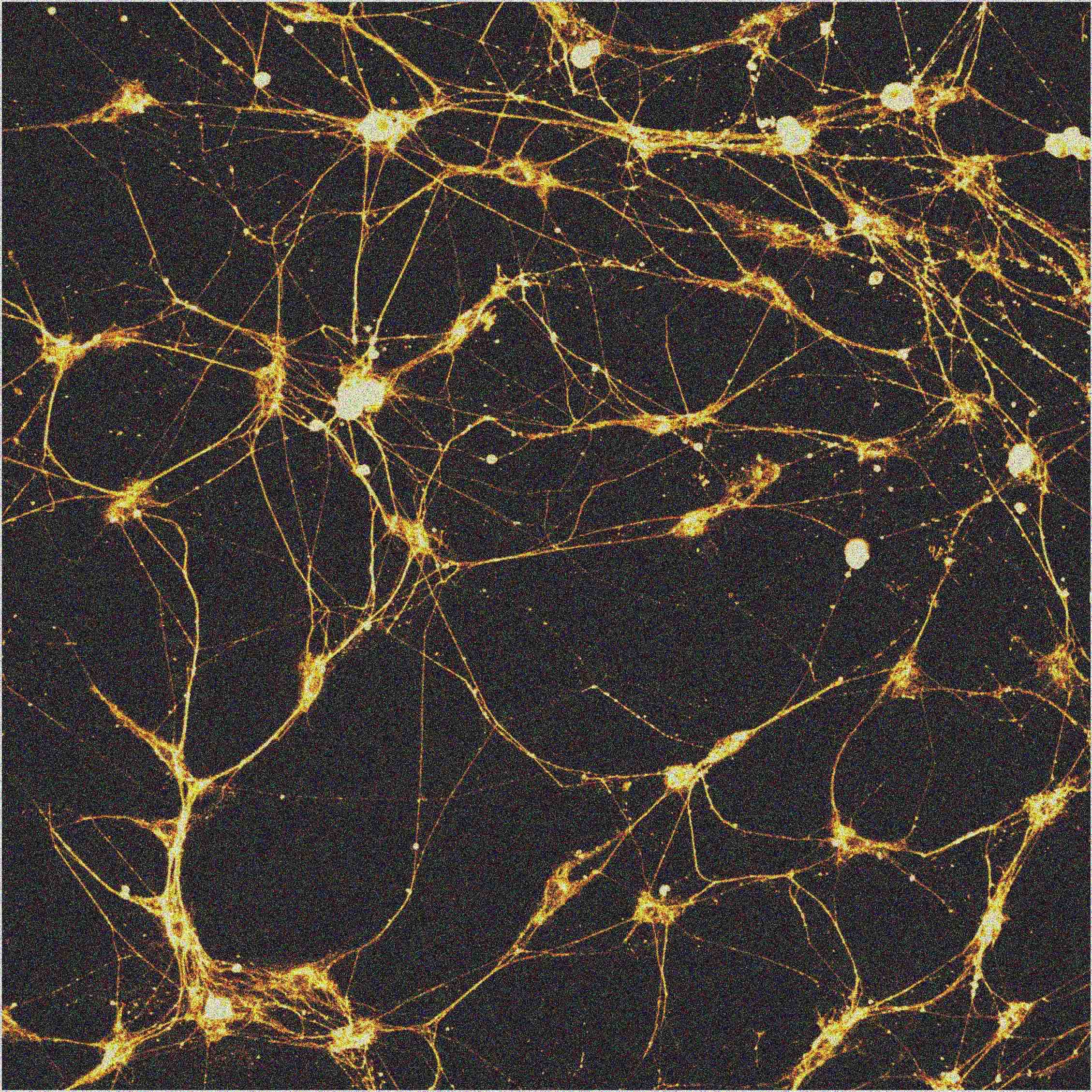} \caption{Noisy image, PSNR = 8.13.} \label{fig:tiger} \end{subfigure} 
\begin{subfigure}[b]{0.3\textwidth} \includegraphics[width=\textwidth]{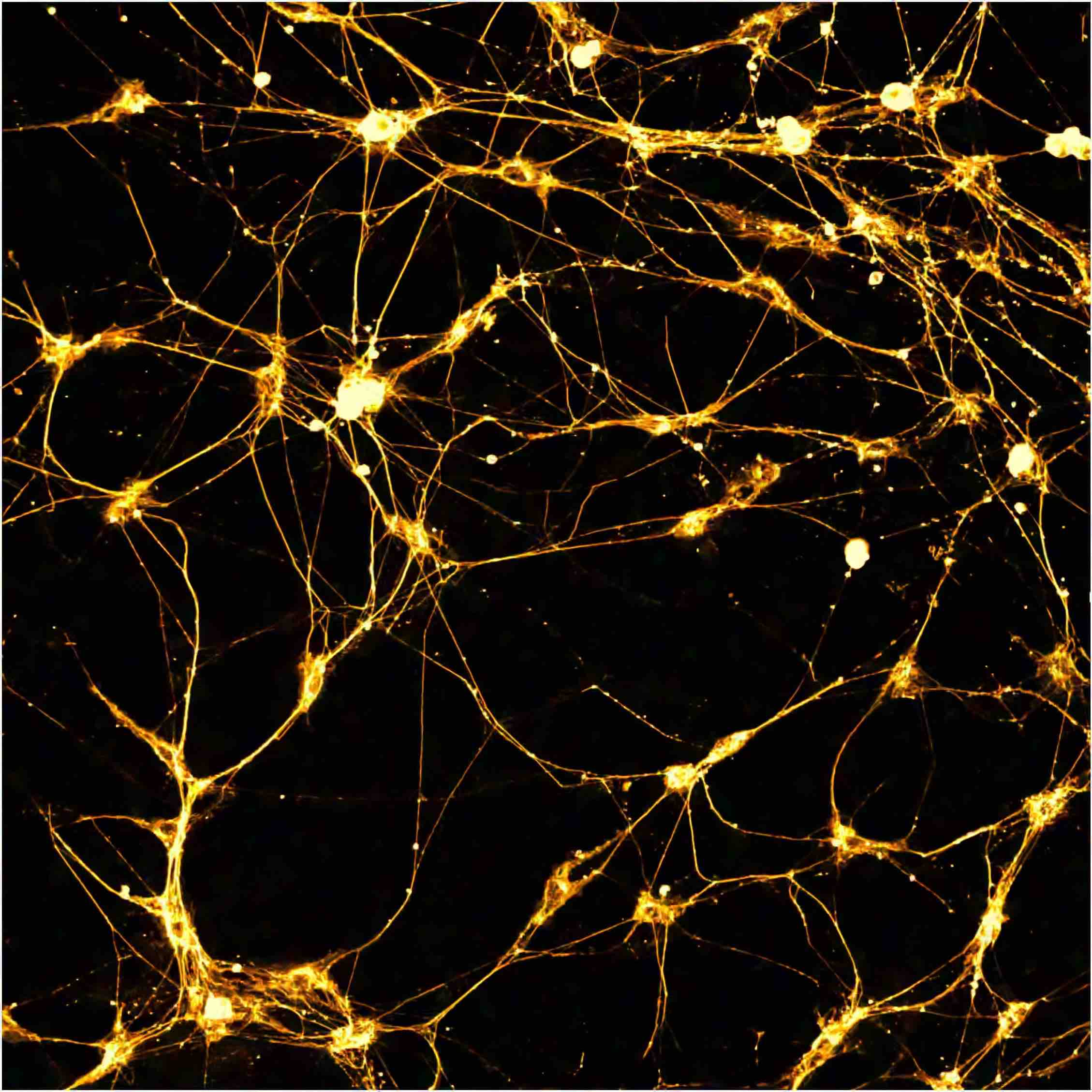} \caption{\hspace{-0.1cm}Denoised image, PSNR=26.2.} \label{fig:mouse} \end{subfigure} 
\caption{Image denoising using CBM3D.}\label{imdenoisf}
\end{figure}

We now turn our attention to using the CBM3D denoiser for recovery of an image from undersampled linear measurements. Our measurement process consists of modulating each pixel of the image by a random i.i.d. $\pm 1$ mask, applying a two dimensional Discrete Fourier Transform (DCT), and then picking a random subset of size $m$ of these DCT coefficients. Since each photograph is in color, we apply this measurement process to each color band for a total of $3m$ measurements. We shall use the short-hand $\mathcal{A}: \R^{3n}\rightarrow\R^{3m}$ to denote this linear measurement process, where $\mathcal{A}(\vct{x})$ takes a color image $\vct{x}\in\R^{3n}$ (with a total of $3n$ pixels) and outputs a vector $\vct{y}=\mathcal{A}(\vct{x})\in\R^{3m}$ containing $3m$ measurements. We note that this measurement process is an example of SORS matrices obtained from Bounded Orthogonal Systems (as discussed in Section \ref{sec:nonGauss}) and as a result our theory applies to this measurement process. 

We now wish to recover the original image from such under-sampled measurements. We start from $\vct{z}_0=\vct{0}$ and apply the following proximal gradient updates
\begin{align}
\label{proximalupdate}
\vct{z}_{\tau+1}=\mathcal{S}\left(\vct{z}_\tau+\mu\mathcal{A}^*(\vct{y}-\mathcal{A}(\vct{z}_\tau));\lambda_\tau\right).
\end{align}
Here $\mathcal{S}$ is the denoising procedure with tunning parameter $\lambda_\tau$. We shall use the CBM3D denoiser for $\mathcal{S}$. We note that this iterative update is directly related to the projected gradient update discussed throughout this paper. Indeed, the projected gradient update with the level $R$ set to $f(\vct{x})$ can be viewed as a denoiser $\mathcal{S}$ with $\lambda_\tau$ tuned optimally. In a companion paper \cite{ORScomp}, we will discuss how the theoretical framework of this paper generalizes to the proximal update in \eqref{proximalupdate} even when $\lambda_\tau$ is not optimally tuned. We shall use $\lambda_\tau=\lambda_0 \gamma^\tau$ in our experiments. This choice is based on results in \cite{ORScomp} which suggest that this is a good tuning strategy. We now apply the proximal update with the CBM3D denoiser with $\lambda_0=28.8675\twonorm{\vct{x}}\frac{\sqrt{n}}{m}$ and $\gamma=0.95$. We remind the reader that the image has $3n$ total pixels ($n$ pixels in each color band) we vary the ratio $m/n$ in the interval $[0,1]$. For each value of $m$ we run $200$ iterations of the proximal update \eqref{proximalupdate} and record the relative error $\frac{\twonorm{\hat{\vct{x}}-\vct{x}}}{\twonorm{\vct{x}}}$ (color images are viewed as a large vector). We report the results in Figure \ref{ims}. This figure indicates that the relative error decreases as a function of the number of measurements. Furthermore, this value falls below $5\%$ for $m\ge0.3n$ and $m\ge0.4n$ for the Eram Garden and Fine nerve fibres images, suggesting that $30\%$ and $40\%$ under-sampling may be enough to approximately recover the image. Figure \ref{imdenois} shows that indeed the recovered images are good. We note that we did not expect projected gradient to recover the images exactly as the de-noiser may not be perfect in capturing the structure of a real image. The correct analogy in sparse recovery literature is that the image is not ``exactly" sparse. Rather it is only approximately sparse.

\begin{figure}
\centering
\begin{tikzpicture}[scale=1] 
\begin{groupplot}[group style={group size=1 by 1,horizontal sep=1cm,xlabels at=edge bottom, ylabels at=edge left,xticklabels at=edge bottom},xlabel=under-sampling ratio ($\frac{m}{n}$),
        ylabel=Relative error, legend style={font=\tiny,at={(0.775,0.98)},anchor=north,legend cell align=left}]
        \nextgroupplot[title={Relative error vs. number of measurements.}]
        \addplot [magenta,line width=1pt] table[x index=0,y index=1]{./shirazerrs};\addlegendentry{Eram Garden}
        \addplot [teal,line width=1pt] table[x index=0,y index=1]{./braincellerrs};\addlegendentry{Fine nerve fibres}
 \end{groupplot}
\end{tikzpicture}
\caption{Performance of CBM3D based proximal gradient descent on two images. Figure depicts relative error of the recovered image ($\twonorm{\hat{\vct{x}}-\vct{x}}/\twonorm{\vct{x}}$) as a function of the number of measurements $m$.}\label{ims}
\end{figure}
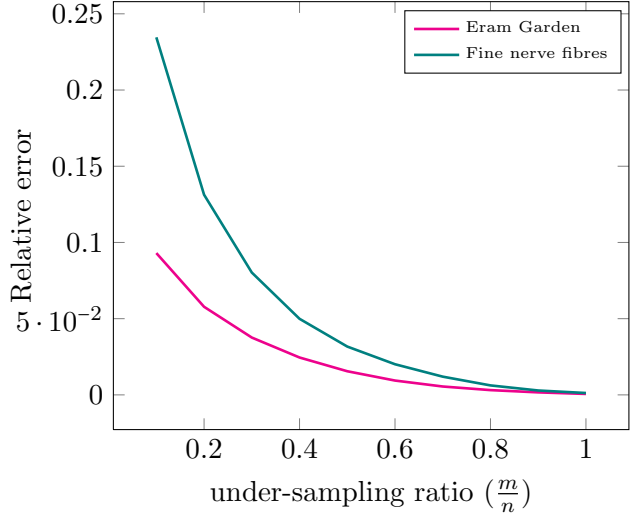


\begin{figure}
\centering
\begin{subfigure}[b]{0.475\textwidth} \centering\includegraphics[width=1\textwidth]{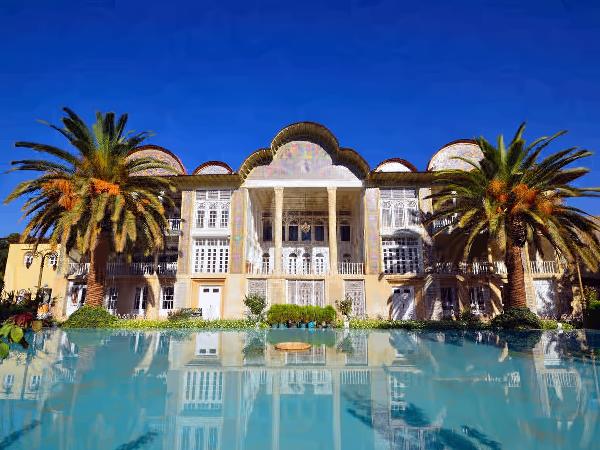} \caption{$m=0.3n$, PSNR=$29.1007$.} \label{fig:shiraz} \end{subfigure} 
\begin{subfigure}[b]{0.475\textwidth} \centering\includegraphics[width=0.75\textwidth]{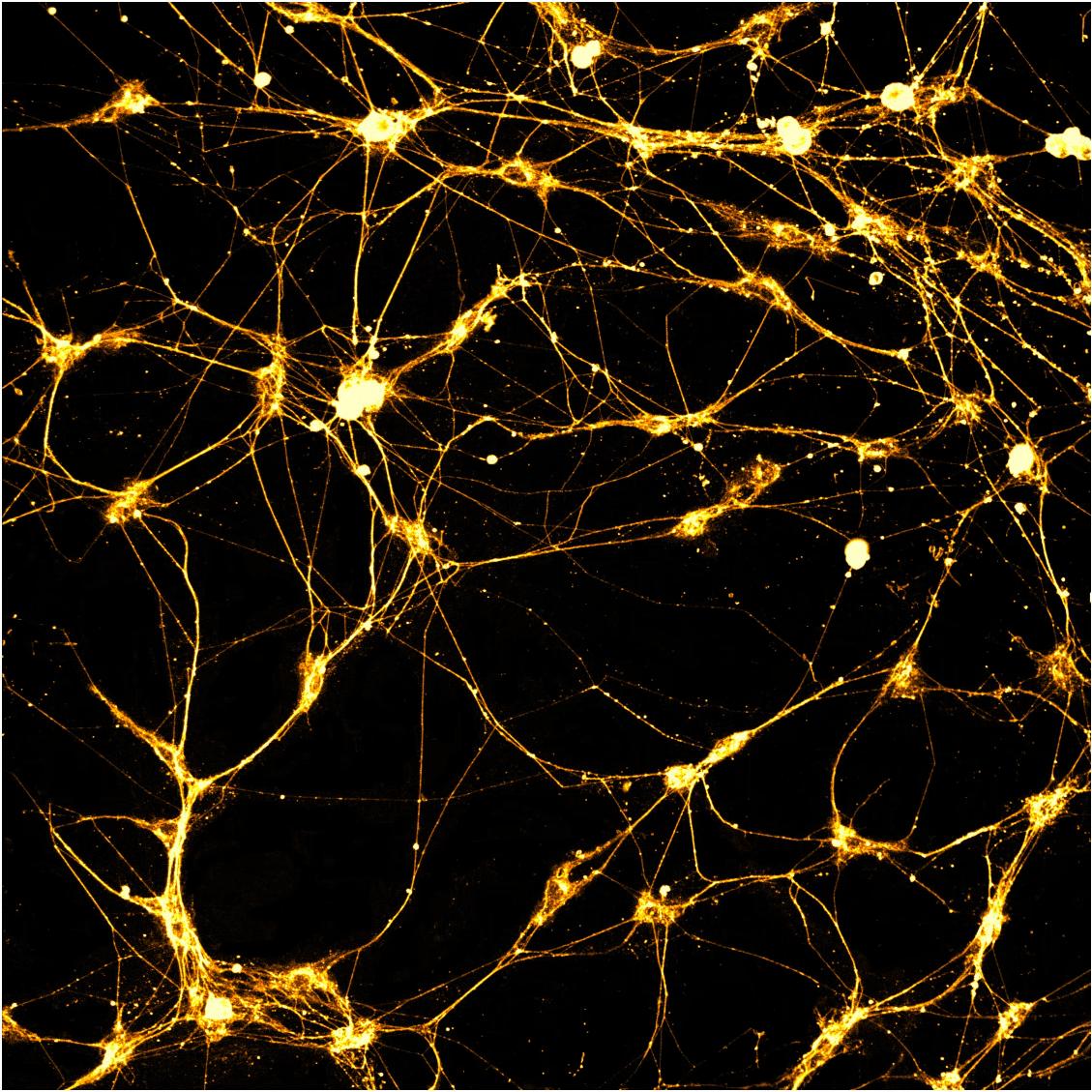} \caption{$m=0.4n$, PSNR=$38.2208$.} \label{fig:shiraznoisy} \end{subfigure} 
\caption{Recovered images from $3m$ under-sampled DCT measurements using CBM3D proximal gradient descent. Here, $3n$ is the total number of pixels, $n$ pixels for each color band.}\label{imdenois}
\end{figure}

\section{Prior Art}
\label{sec:prior}
Optimization problems such as \eqref{opt1} and in particular $\ell_1$-minimization techniques have been emppirically used to find structured solutions to underdetermined linear systems in multiple scientific fields \cite{claerbout1973robust, hogbom1974aperture,Tikh, DonAtom}. While there were some theoretical results describing the success of these algorithms in recovering structured solutions, the required number of measurements were sub-optimal. The last decade has seen a flurry of activity surrounding these problems in part due to the sample optimal results obtained for certain random measurement ensembles in sparse recovery \cite{Comp, Don1} and rank minimization \cite{RechtFazel, MatCom, keshavan2009matrix, gross2011recovering} problems. 

A significant focus in the literature has been on characterizing the precise number of measurements required for recovering the structured signal exactly via \eqref{opt1} in the special case where the entries of the measurement matrix are i.i.d.~real-valued standard normal variables. For sparse/matrix recovery problem the papers \cite{Don1, candes2006near, RudelSparse, RechtFazel} established sample optimal results up to numerical constants. For sparse recovery problems . These constants were later made more precise in \cite{Cha, candes_recht12}. Earlier Donoho \cite{DonCentSym} and Donoho and Tanner \cite{DonPhaseTrans, DonNonNegative, Don3} showed that the precise number of measurements (as a function of sparsity) required for the success of $\ell_1$ minimization can be characterized by an asymptotic phase transition (a.k.a.~the Donoho-Tanner curve). More recently, heuristic arguments from statistical physics where used to explain this phase transition \cite{AMP}. These heuristics were justified in \cite{Mon, BayMon} for Gaussian matrices and in \cite{BLM12} for matrices with sub-Gaussian entries perturbed by small Gaussian noise. Related, in \cite{Sto1} Stojnic established lower bounds on the number of measurements that matched the Donoho-Tanner curve in an asymptotic regime. Please also see \cite{rechtxu, Oym2} for related results for rank minimization via the Nuclear norm. More recently, Chandrasekaran, Recht, Parrilo and Willsky \cite{Cha} derived the first precise non-asymptotic bounds via Gordon's lemma \cite{Gor}. It was shown in \cite{McCoy} that these lower bound are asymptotically sharp. Please also see \cite{StojLAS, StojSOCP, StojSharp} for results of a similar flavor in special cases. In Theorem \ref{ALTthm} we have shown that projected gradient descent with a properly chosen step size can match these asymptotically sharp results for convex functions, providing an algorithmic proof. Furthermore, our results in Theorem \ref{first prop} are more generally applicable and also apply to any function including non-convex functions albeit with a loss in terms of a small constant. 

For more structured measurement ensembles, such as subsampled Fourier matrices, Candes and Tao \cite{Comp} obtained near optimal sample complexity results for $\ell_1$ minimization that were sharp upto logarithmic factors. These results were further improved in \cite{RudelSparse, cheraghchi2013restricted}. Please also see \cite{WainLasso, wainwright2009information, PlanRIPless, rauhut2010, foucart2013random, krahmer2014suprema} and references therein for similar results for many other ensembles. Please also see \cite{MatCom, candes2014phase, WF} for other structured ensembles arising in rank minimization problems. Moving beyond sparse recovery and rank minimization \cite{negahban2009unified, agarwal2010fast} establishes near sample optimal results (up to logarithmic factors) for decomposable norms. Such norms are convex and include important norms such as $\ell_1$ and nuclear norms. In this paper utilizing our results from a companion paper \cite{oymak2015isometric} we have established near sample optimal results for SORS matrices that apply to any function convex or non-convex.  To the extent of our
knowledge this is the first sample optimal result using a computational friendly matrix that applies to general signal structures and functions. Indeed, we are unaware of any sample optimal results even for general convex functions.

The prior works we have discussed so far are not algorithmic in nature in the sense that often the properties of a convex optimization problem have been studied without focus on how such problems are actually solved. While in principal such problems can be solved in polynomial time via interior point methods for many application first order methods may be more suitable. Bounds on convergence rates of first order schemes \cite{nesterov2004introductory} for general convex functions are known. However, since the penalty functions used for solving linear inverse problems are non-smooth, these bounds are often very pessimistic and rather far from the actual rates of convergence. For example \cite{becker2011nesta,beck2009fast,ji2009accelerated}, establish sub-geometric upper bounds on convergence rate for sparse and low-rank estimation problems via proximal/projected gradient algorithms such as \cite{nesterov2007gradient}. Related see also interesting works by Goldfarb, Osher, Yin and collaborators \cite{osher2005iterative, yin2008bregman, chartrand2008iteratively}. In practice the convergence rate of these algorithms are often geometric. Agarwal, Negahban and Wainwright \cite{agarwal2010fast} showed that when the cost function is a decomposable norm the rate of convergence of projected gradient descent is indeed geometric. See also \cite{loh2013regularized} for related results/generalizations. The authors obtained geometric convergence rates in terms of restricted eigenvalues. Specialized to Gaussian measurements these results do not come with explicit or sharp constants. In contrast we have developed precise convergence rates that applies to any function (convex or non-convex). In the case of Gaussian measurements we have also provided sharp constants. More recently, a few papers focus on a penalized formulation of \eqref{opt1}. For example, Xiao and Zhang \cite{xiao2013proximal} obtained geometric rates of convergence based on the Restricted Isometry Property. These results were further generalized by Eghbali and Fazel to decomposable norms \cite{eghbali2015decomposable}. We defer the study of proximial algorithms for general functions to a future publication. 
Convex functions are not the only way to enforce structure and often it may be more suitable to use non-convex functions. For sparse recovery problems the first results on geometric (and in fact linear) convergence of such problems are due to Tropp and Gilbert \cite{OMP2} and Garg and Khandekar \cite{garg2009gradient} who studied convergence of greedy hard-thresholding algorithms. See also related greedy strategies \cite{CoSamp, needell2009uniform} which have similar guarantees. More broadly convergence of iterative hard thresholding and its variants such as singular value hard thresholding have been studied by multiple authors \cite{blumensath2009, jain2009guaranteed, cai2010singular,jain2014iterative, figueiredo2007gradient, mazumder2010spectral, bahmani2013unifying}. These authors show that under RIP or matrix RIP conditions such algorithms have a linear convergence rate to the structured solution. Please also see related results \cite{foucart2010gelfand, saab2008stable} on characterizing the properties of the optimal solution when using non-convex $\ell_p$ (with $p<1$) minimization problems for recovering sparse signals. We note that the latter results do not ensure that $\ell_p$ minimization is tractable as they do not have convergence guarantees. Our general framework covers any function and thus allows for precise analysis of rates of convergence such algorithms as well. In particular, for Gaussian ensembles we are able to also provide sharp constants.

A few recent papers study computation-statistical tradeoffs for machine learning problems. For example, in \cite{ChaJor} Chandrasekaran and Jordan focus on such tradeoffs in the context of denoising via hierarchies of convex relaxations. In contrast, our focus is on a fixed cost function and linear inverse problems. More specifically, we focus on computation-statistical tradeoffs based on the convergence rate of projected gradients. We refer to Agarwal's dissertation \cite{agarwal2012computational} and references therein for other examples. Finally, in \cite{bruer2014time,bruer2014designing} the authors study computation-statistical tradeoffs for linear inverse problems. See also the very recent paper characterizing statistical properties of the optimal solution \cite{li2015geometric} based on results in \cite{vershynin2014estimation} brought to our attention by Volkan Cevher. The focus in these papers are not directly on projected gradients. Rather, the authors focus on ``lowering the computational cost through additional smoothing". In particular, the rates of convergence used in the analysis of these papers are not geometric and are based on more classic results where the error is inversely proportional to the square of the number of iterations.

Finally, we would like to mention related work surrounding the Approximate Message Passing (AMP) by Donoho, Montanari and Maleki \cite{AMPmain}. See also \cite{rangan2011generalized, rangan2013fixed} for generalizations of AMP. The updates in AMP are similar to a projected gradient update with an additional memory term similar in nature to Nesterov's accelerated scheme \cite{nesterov1983method}. However, motivated by ideas in statistical physics this algorithm comes with a very specific recipe for the coefficient of this memory term (a.k.a.~the Onsager term) \cite{DonAcu}. For Gaussian measurement ensembles and separable functions of the form $f(\vct{x})=\sum_{i=1}^n g(\vct{x}_i)$ with $g$ a pseudo-Lipschitz function it has been shown that the convergence of this algorithm can be described asymptotically via certain state evolution equations \cite{BayMon, Mon}. An important consequence of this analysis is that AMP asymptotically achieves a linear rate of convergence of $\sqrt{\frac{m_0}{m}}$ for pseudo-Lipschitz and separable functions exactly above the phase transition of \eqref{opt1} ($m>m_0$).  See also \cite{DonAcu, zheng2015does} for predictions of performance of the AMP framework for general functions using highly plausible but non-rigorous statistical physics arguments. In contrast, our results are non-asymptotic, rigorous and apply to any function. However, our theoretical results and empirical studies show that projected gradient descent can not achieve a linear rate of convergence exactly above the phase transition of \eqref{opt1} even when the cost function is separable. We believe that our theoretical framework may provide useful insights for precise, non-asymptotic analysis of AMP for general functions.

\section{Discussion}\label{discuss}
In this paper we have discussed sharp time-data tradeoffs for linear inverse problems. We would like to pause to discuss how our results can be further improved.

\begin{itemize}
\item \textbf{From denoising to compressive sensing.}
In addition to the connection between the rate of convergence and sample complexity, our results also connect de-noising and compressed sensing. That is, we establish a precise connection between the capability of a function $f$ when used for de-noising and its capability for recovering a structured signal via the PGD updates. To be more specific, it is easy to show
\begin{theorem}\label{denoising}
Consider a structured signal $\vct{x}\in\R^n$ and assume we observe a noisy version of this signal $\vct{x}+\vct{w}$, where $\vct{w}\in\R^n$ is Gaussian noise distributed as $\mathcal{N}(\vct{0},\sigma^2\mtx{I})$. Furthermore, let $\mathcal{P}_{\mathcal{K}}$ with $\mathcal{K}=\{\vct{z}\in\R^n \text{ }|\text{ }f(\vct{z})\le f(\vct{x})\}$. Then
\begin{align*}
\sup_{\sigma>0}\frac{\mathbb{E}\bigg[\twonorm{\mathcal{P}_{\mathcal{K}}\left(\vct{x}+\vct{w}\right)-\vct{x}}^2\bigg]}{\sigma^2}\le 4\omega^2(\mathcal{C}_f(\vct{x})\cap\Bc^n)\approx 4m_0.
\end{align*} 
\end{theorem}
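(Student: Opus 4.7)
The plan is to run the standard denoising analysis that reduces the estimation error to a single Gaussian-width supremum. Set $\hat{\vct{x}}=\mathcal{P}_{\mathcal{K}}(\vct{x}+\vct{w})$, $\vct{h}=\hat{\vct{x}}-\vct{x}$, and introduce
\[
F(\vct{w}) \;:=\; \sup_{\vct{v}\in\mathcal{C}_f(\vct{x})\cap\mathcal{B}^n}\langle\vct{v},\vct{w}\rangle.
\]
The target is a deterministic bound $\twonorm{\vct{h}}\le 2F(\vct{w})$, after which Gaussian concentration delivers the result.

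For the deterministic step, $\vct{x}\in\mathcal{K}$ holds trivially (since $f(\vct{x})\le f(\vct{x})$), so the optimality of $\hat{\vct{x}}$ as the nearest point of $\mathcal{K}$ to $\vct{x}+\vct{w}$ yields $\twonorm{\hat{\vct{x}}-\vct{x}-\vct{w}}^2\le\twonorm{\vct{w}}^2$; expanding the square gives the basic inequality $\twonorm{\vct{h}}^2\le 2\langle\vct{h},\vct{w}\rangle$. Note that this step does not require $\mathcal{K}$ to be convex. Next, $\hat{\vct{x}}\in\mathcal{K}$ means $\vct{h}\in\mathcal{D}_f(\vct{x})\subset\mathcal{C}_f(\vct{x})$; since $\mathcal{C}_f(\vct{x})$ is a cone, $\vct{h}/\twonorm{\vct{h}}\in\mathcal{C}_f(\vct{x})\cap\mathcal{B}^n$, so $\langle\vct{h},\vct{w}\rangle\le\twonorm{\vct{h}}\cdot F(\vct{w})$. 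Dividing the basic inequality by $\twonorm{\vct{h}}$ produces $\twonorm{\vct{h}}\le 2F(\vct{w})$, with the case $\vct{h}=\vct{0}$ being immediate.

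To finish, I would write $\vct{w}=\sigma\vct{g}$ with $\vct{g}\sim\mathcal{N}(\vct{0},\mtx{I})$; by homogeneity $F(\sigma\vct{g})=\sigma F(\vct{g})$, so $\E[\twonorm{\vct{h}}^2]\le 4\sigma^2\E[F(\vct{g})^2]$. Since $F$ is a $1$-Lipschitz function of $\vct{g}$ (a supremum of inner products with unit-norm vectors), the Gaussian Poincar\'e inequality gives $\Var(F(\vct{g}))\le 1$, and by definition $\E[F(\vct{g})]=\omega(\mathcal{C}_f(\vct{x})\cap\mathcal{B}^n)$; hence $\E[F(\vct{g})^2]\le\omega^2(\mathcal{C}_f(\vct{x})\cap\mathcal{B}^n)+1$. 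Dividing by $\sigma^2$ (which then drops out) and taking the supremum over $\sigma>0$ yields the claimed bound. The only real obstacle is cosmetic: the clean constant $4\omega^2$ in the statement is really $4(\omega^2+\Var F)$, with the slack arising from having to pass from $\E[F]$ to $\E[F^2]$. This additive $+1$ is harmless whenever the structural complexity is nontrivial and is precisely what is absorbed by the ``$\approx 4m_0$'' in the statement.
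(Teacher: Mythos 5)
Your argument is correct in substance and is exactly the route the paper intends: the paper offers no written proof of Theorem \ref{denoising} (it is prefaced only by ``it is easy to show''), but the three ingredients you use are precisely the paper's own machinery. Your translation step $\mathcal{P}_{\mathcal{K}}(\vct{x}+\vct{w})-\vct{x}=\mathcal{P}_{\mathcal{D}_f(\vct{x})}(\vct{w})$ is Lemma \ref{simplem}; your ``basic inequality'' derivation of $\twonorm{\vct{h}}\le 2F(\vct{w})$ reproduces, almost line for line, the proof of the nonconvex case \eqref{firstsetcomp} of Lemma \ref{prop compare} combined with the identity \eqref{lem:identity} of Lemma \ref{firstconelem}; and the reduction to $4\sigma^2\,\E[F(\vct{g})^2]$ with the Poincar\'e/Lipschitz bound $\E[F(\vct{g})^2]\le\omega^2+1$ is the standard finish. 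The one point worth being precise about is the constant, which you already flag: since $F(\vct{g})=\twonorm{\mathcal{P}_{\mathcal{C}}(\vct{g})}$ and $\E[F(\vct{g})^2]\ge(\E[F(\vct{g})])^2=\omega^2$ by Jensen (with equality only in degenerate cases), this route yields $4(\omega^2+\Var F)\le 4(\omega^2+1)$ and cannot literally deliver the displayed bound $4\omega^2$; the quantity $\E[F(\vct{g})^2]$ is the statistical dimension of $\mathcal{C}$, which exceeds $\omega^2$ by at most $1$. The paper's statement silently absorbs this additive slack into the ``$\approx 4m_0$'' (note $m_0\approx(\omega+\eta)^2\ge\omega^2+1$ for the relevant $\eta$), so your proof establishes the theorem in the sense the authors intend, and your explicit acknowledgment of the $+1$ is the more honest rendering of the result.
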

This theorem combined with the results in this paper shows that the min-max rate of denoising is intimately related with the phase transition curve for exact signal recovery when using $f$ (up to a small constant less than 4). To the extent of our knowledge this is the first result to establish this connection for general $f$. For convex $f$ please see \cite{ChaJor} for analogous results to Theorem \ref{denoising} above. We note that the results for convex $f$ are sharper in the sense that it is known that the the min-max rate of denoising coincides with the phase transition curve for exact recovery. The connection between compressive sensing and denoising was conjectured in an asymptotic setting for the Approximate Message Passing (AMP) algorithm \cite{AMP,AMPmain} by Donoho, Johnson and Montanari in \cite{DonAcu}. Please also see \cite{metzler2014denoising} for related discussions and numerical simulations. Theorem \ref{denoising} above combined with the results proven in this paper provides strong evidence that this conjecture holds non-asymptotically for projected gradients using any function (it essentially proves this conjecture holds upto a small multiplicative constant). For convex $f$, prior work \cite{BayMon, Mon} established this connection first for LASSO and later on for general $f$ \cite{oymakProx, McCoy, Chandrasekaran2012} with a precise constant of $1$. We believe that our analysis may shed light on a complete proof of the conjecture of \cite{DonAcu} and we consider this a very interesting future research direction. 
\item \textbf{Sharper sample complexity for non-Gaussians.} Compared to our results for Gaussian ensembles our sample complexity results and convergences rates for non-Gaussian ensembles (ISG and SOS) are sub-optimal by constant/log factors. In particular we did not have results with sharp constants for these ensembles. Providing sharp constants in our sample complexity bounds is an interesting direction for future research. In particular, our experiments suggest that the sample complexity for SOS matrices matches that of Gaussian ensembles, i.e.~the phase transition for Gaussian and SOS matrices are essentially identical.

\item \textbf{Sharper time--data tradeoffs.} The numerical simulations in Section \ref{sec:rateofconvnum} and in particular Figure \ref{fig:choices2} suggest that the rate of convergence roughly scales like $\sqrt{\tilde{m}_0/m}$ where $\tilde{m}_0$ is the empirical phase transition and $m$ is the number of measurements. Establishing such a precise upper bound on the convergence rate is an interesting future direction as it would lead to more precise time-data tradeoffs. Related, the sample complexity of Theorems \ref{ALTthm} and \ref{PGthm} seem to match the phase transition curves obtained by numerical simulations in Section \ref{secnumpt}. However, these two theorems where only limited to convex functions. It would be interesting to establish a similarly sharp result for non-convex functions. While we did prove results for non-convex functions in Theorem \ref{first prop} our numerical results in Figure \ref{fig:greedy} seem to suggest that (a) $\kappa_f<2$ for non-convex functions and (b) the convergence rate even for convex functions can possibly be improved to $\rho(\mu)\le\frac{1}{\sqrt{2}-1}\sqrt{\frac{m_0}{m}}\approx\sqrt{5.8\frac{m_0}{m}}$ (and in turn the requirement on sample complexity can be improved to $m\ge \frac{1}{(\sqrt{2}-1)^2}m_0\approx 5.8 m_0$). Closing these gaps for the non-convex case is particularly important as it would help explain the surprising numerical results in Figure \ref{fig:choices} which seems to suggest that  $\ell_{1/2}$ works better than both $\ell_0$ and $\ell_1$ for sparse recovery problems.

\item \textbf{Extensions to other update strategies and general loss functions.} While in this paper we have focused on projected gradient updates we would like to point out that our results naturally extend to many other update strategies such as proximal gradients, stochastic updates etc. These results will be the subject of a companion paper \cite{ORScomp}. Also our results can be generalized to other loss functions (i.e.~non-$\ell_2$) in a straightforward manner which will be detailed in a future paper.
\end{itemize}

\section{Proofs}\label{secproofs}
In this section we shall prove all of the stated theorems. We will first prove the deterministic result of Theorem \ref{master} and then show in later sections how all other theorems follow by bounding (with high probability) the convergence rate $\rho(\mu)$ and noise interaction term $\xi_\mu(\mtx{A})$ of this theorem for various random ensembles and learning parameters $\mu$. 

Throughout we shall use $\mathbb{S}^{n-1}$ and $\mathcal{B}^n$ to denote the unit sphere and Euclidean ball of $\R^n$. For a set $\mathcal{K}\in\R^n$ and a point $\vct{x}\in\R^n$, we use $\mathcal{P}_{\mathcal{K}}(\vct{x})$ to denote the projection of $\vct{x}$ onto the set $\mathcal{K}$. For a set $\mathcal{K}$ and point $\vct{v}\in\R^n$ we define
\begin{align*}
\text{dist}(\vct{v},\mathcal{K})=\min_{\vct{u}\in\mathcal{K}}\text{ }\twonorm{\vct{v}-\vct{u}}.
\end{align*}
We pause to remind the reader of the definition of the descent set and tangent cone from Definition \ref{decsetcone}. The \emph{set of descent} of the function $f$ at a point $\vct{x}$ is defined as
\begin{align*}
{\cal D}_f(\vct{x})=\Big\{\vct{h}:\text{ }f(\vct{x}+\vct{h})\le f(\vct{x})\Big\}.
\end{align*}
The \emph{tangent cone} $\mathcal{C}_f(\vct{x})$ of $f$ at a point $\vct{x}$ is the conic hull of the descent set. That is, the smallest closed cone $\mathcal{C}_f(\vct{x})$ obeying $\mathcal{D}_f(\vct{x})\subset\mathcal{C}_f(\vct{x})$. For ease of exposition throughout the proof of this theorem we shall use the shorthand $\mathcal{D}$ and $\mathcal{C}$ to refer to $\mathcal{D}_f(\vct{x})$ and $\mathcal{C}_f(\vct{x})$.

We also remind the reader that $\rho(\mu)$ is the convergence rate defined as
\begin{align*}
\rho(\mu):=\rho(\mu,\mtx{A},f,\vct{x})=&\underset{\vct{u},\vct{v}\in\mathcal{C}\cap \mathbb{S}^{n-1}}{\sup}\vct{u}^*\left(\mtx{I}-\mu\mtx{A}^*\mtx{A}\right)\vct{v},
\end{align*}
$\xi_\mu(\mtx{A})$ is the noise amplification factor defined as
\begin{align*}
\xi_\mu(\mtx{A}):=\xi_\mu(\mtx{A},f,\vct{x},\vct{w})=\mu\cdot\underset{\vct{v}\in\mathcal{C}\cap\mathbb{S}^{n-1}}{\sup}\vct{v}^*\mtx{A}\frac{\vct{w}}{\twonorm{\vct{w}}},
\end{align*}
and $\kappa_f$ is a constant equal to one for convex functions and equal to two for non-convex functions. In order to prove our results we first state and prove a few results regarding cones in higher dimension.

\subsection{Cone and projection identities}
In this section we will gather a few results regarding higher dimensional cones and projections that are used throughout the proofs. We begin with a well known lemma about projections onto convex sets. This result is standard and we skip the proof.
\begin{lemma}\label{projcvxthm} Assume $\mathcal{K}\in\R^n$ is a convex set and $\vct{v}\in\R^n$. Then for every $\vct{u}\in\mathcal{K}$ we have
\begin{align}\label{projcvxthm1}
\twonorm{\vct{v}-\mathcal{P}_{\mathcal{K}}(\vct{v})}^2+\twonorm{\vct{u}-\mathcal{P}_{\mathcal{K}}(\vct{v})}^2\le\twonorm{\vct{v}-\vct{u}}^2.
\end{align}
In particular if $\vct{0}\in\mathcal{K}$, then
\begin{align}\label{projcvxthm2}
\twonorm{\vct{v}-\mathcal{P}_{\mathcal{K}}(\vct{v})}^2+\twonorm{\mathcal{P}_{\mathcal{K}}(\vct{v})}^2\le\twonorm{\vct{v}}^2.
\end{align}
\end{lemma}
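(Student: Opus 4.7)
The plan is to establish \eqref{projcvxthm1} via the first-order variational characterization of the metric projection onto a convex set, and then obtain \eqref{projcvxthm2} as the special case $\vct{u}=\vct{0}$.

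First I would record the optimality condition for $\vct{p}:=\mathcal{P}_{\mathcal{K}}(\vct{v})$. By definition $\vct{p}$ minimizes $\twonorm{\vct{v}-\vct{z}}^2$ over $\vct{z}\in\mathcal{K}$. Fixing any $\vct{u}\in\mathcal{K}$ and using convexity of $\mathcal{K}$, the path $\vct{p}_t:=\vct{p}+t(\vct{u}-\vct{p})$ stays in $\mathcal{K}$ for $t\in[0,1]$, so the scalar function $t\mapsto \twonorm{\vct{v}-\vct{p}_t}^2$ attains its minimum on $[0,1]$ at $t=0$. Differentiating at $t=0^+$ yields the variational inequality
\begin{equation*}
\langle \vct{v}-\vct{p},\,\vct{u}-\vct{p}\rangle \le 0 \qquad \text{for every } \vct{u}\in\mathcal{K}.
\end{equation*}

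Next I would decompose $\vct{v}-\vct{u}=(\vct{v}-\vct{p})+(\vct{p}-\vct{u})$ and expand the squared norm:
\begin{equation*}
\twonorm{\vct{v}-\vct{u}}^2=\twonorm{\vct{v}-\vct{p}}^2+\twonorm{\vct{u}-\vct{p}}^2 - 2\langle \vct{v}-\vct{p},\,\vct{u}-\vct{p}\rangle.
\end{equation*}
The variational inequality above says the last inner-product term is nonpositive, so dropping it gives exactly \eqref{projcvxthm1}. Specializing to $\vct{u}=\vct{0}$ (which lies in $\mathcal{K}$ by assumption) yields \eqref{projcvxthm2}.

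There is no real obstacle here: the only ingredient beyond the definition of a convex set and elementary algebra is the first-order optimality condition for projection, which is classical. The entire argument is a two-line computation once the variational inequality is stated, which is presumably why the authors elect to omit it.
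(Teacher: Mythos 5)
Your proof is correct: the variational inequality $\langle \vct{v}-\mathcal{P}_{\mathcal{K}}(\vct{v}),\,\vct{u}-\mathcal{P}_{\mathcal{K}}(\vct{v})\rangle\le 0$ followed by expanding $\twonorm{\vct{v}-\vct{u}}^2$ is exactly the classical argument this lemma rests on, and the paper itself declares the result standard and omits the proof entirely. Your two-step derivation is the proof the authors are implicitly invoking, so there is nothing to compare and nothing missing.
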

We now state a result concerning projection onto cones.
\begin{lemma}\label{firstconelem} Let $\Cc\subset\R^n$ be a closed cone and $\vb\in\R^n$. The followings two identities hold
\begin{align}
\tn{\vb}^2=&\tn{\vb-\mathcal{P}_\Cc(\vb)}^2+\tn{\mathcal{P}_\Cc(\vb)}^2\label{orthogonal},\\
\tn{\mathcal{P}_\Cc(\vb)}=&\sup_{\ub\in\Cc\cap\Bc^{n}} \ub^*\vb\label{lem:identity}.
\end{align}
\end{lemma}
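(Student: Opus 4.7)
The plan is to prove each identity separately, using only the conic (scaling-invariance) structure of $\Cc$ and the defining variational property of the projection, so the argument works regardless of whether $\Cc$ is convex.

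For \eqref{orthogonal}, I would first reduce to showing the orthogonality $\langle \mathcal{P}_\Cc(\vb),\, \vb - \mathcal{P}_\Cc(\vb)\rangle = 0$, since this immediately yields the Pythagorean decomposition by expanding $\|\vb\|^2 = \|(\vb-\mathcal{P}_\Cc(\vb)) + \mathcal{P}_\Cc(\vb)\|^2$. To establish orthogonality, I would exploit that $\Cc$ is a cone: for every $t\ge 0$, the point $t\,\mathcal{P}_\Cc(\vb)$ lies in $\Cc$, so the scalar function $g(t) := \tn{\vb - t\,\mathcal{P}_\Cc(\vb)}^2$ attains its minimum over $[0,\infty)$ at $t=1$ (the case $\mathcal{P}_\Cc(\vb)=\vct 0$ is trivial). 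Since this is an interior minimum of a smooth quadratic, $g'(1)=0$ yields the desired inner product identity.

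For \eqref{lem:identity}, I would prove the two matching inequalities. The lower bound is obtained by plugging in the explicit candidate $\ub = \mathcal{P}_\Cc(\vb)/\tn{\mathcal{P}_\Cc(\vb)} \in \Cc\cap\Bc^n$, for which the orthogonality from the previous part gives $\ub^*\vb = \tn{\mathcal{P}_\Cc(\vb)}$. For the matching upper bound, I would take an arbitrary $\ub\in\Cc\cap\Bc^n$ and exploit the conic scaling again: since $t\ub\in\Cc$ for every $t\ge 0$, the minimization property of $\mathcal{P}_\Cc(\vb)$ gives $\tn{\vb-t\ub}^2 \ge \tn{\vb-\mathcal{P}_\Cc(\vb)}^2$ for all $t\ge 0$. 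Expanding both sides and using $\mathcal{P}_\Cc(\vb)^*\vb = \tn{\mathcal{P}_\Cc(\vb)}^2$ (from \eqref{orthogonal}) reduces this to the quadratic inequality $t^2\tn{\ub}^2 - 2t\,\ub^*\vb + \tn{\mathcal{P}_\Cc(\vb)}^2 \ge 0$ for all $t\ge 0$. Minimizing in $t$ (treating the cases $\ub^*\vb\le 0$ and $\ub^*\vb>0$ separately) yields $\ub^*\vb \le \tn{\ub}\,\tn{\mathcal{P}_\Cc(\vb)}\le \tn{\mathcal{P}_\Cc(\vb)}$, completing the proof.

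The main obstacle I would anticipate is the temptation to invoke standard convex-projection facts (e.g.\ that $\vb - \mathcal{P}_\Cc(\vb)$ lies in the polar cone, so $\ub^*(\vb-\mathcal{P}_\Cc(\vb))\le 0$ for all $\ub\in\Cc$); these require $\Cc$ to be convex, which is \emph{not} assumed in the statement (indeed the paper applies this lemma also for non-convex cones of descent). Working only with the scaled direction $t\mathcal{P}_\Cc(\vb)$ and $t\ub$ instead of arbitrary convex combinations circumvents this issue entirely, at the cost of having to argue through the scalar quadratic and handle the small edge cases ($\mathcal{P}_\Cc(\vb)=\vct 0$, $\ub=\vct 0$, or $\ub^*\vb\le 0$) by inspection.
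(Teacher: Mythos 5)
Your proposal is correct and follows essentially the same route as the paper: the lower bound in \eqref{lem:identity} uses the same normalized-projection witness, and your upper bound via the quadratic $t\mapsto\tn{\vb-t\ub}^2$ is just the direct-inequality version of the paper's contradiction argument, which constructs exactly the point $(\vb^*\ub)\ub/\tn{\ub}^2$ that minimizes this quadratic. The only addition is that you actually prove the Pythagorean identity \eqref{orthogonal} via the first-order condition along the ray $t\,\mathcal{P}_\Cc(\vb)$, whereas the paper cites it as standard (Moreau's decomposition); your argument is valid and, as you note, correctly avoids any appeal to convexity.
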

\begin{proof} 
The first identity is known as Moreau's decomposition and is standard. Let us turn our attention to proving \eqref{lem:identity}. If $\mathcal{P}_\Cc(\vb)=\vct{0}$, then $\vct{0}$ is the closest point to $\vb$. This implies that for all $t\ge0$ and all $\vct{u}\in\mathcal{C}$ we have
\begin{align*}
\twonorm{\vct{v}}^2\le\twonorm{\vct{v}-t\vct{u}}^2\quad\Rightarrow\quad\vct{u}^*\vct{v}\le t\twonorm{\vct{u}}^2.
\end{align*}
Taking the limit $t\rightarrow 0$, we conclude that for all $\vct{u}\in\mathcal{C}$ we have $\vct{u}^*\vct{v}\le 0$. Also since $\mathcal{P}_\Cc(\vb)=\vct{0}\in(\mathcal{C}\cap\mathcal{B}^n)$
\begin{align*}
\sup_{\ub\in\Cc\cap\Bc^{n}} \ub^*\vb=0=\tn{\mathcal{P}_\Cc(\vb)},
\end{align*}
holds which proves \eqref{lem:identity} when $\mathcal{P}_{\mathcal{C}}(\vct{v})=\vct{0}$. To address the case when $\mathcal{P}_{\mathcal{C}}(\vct{v})\neq\vct{0}$ set $\hat{\vct{u}}=\frac{\mathcal{P}_\Cc(\vb)}{\tn{\mathcal{P}_\Cc(\vb)}}$. Clearly $\hat{\vct{u}}^*\vb=\tn{\mathcal{P}_\Cc(\vb)}$ and $\hat{\vct{u}}\in\Cc\cap\mathcal{B}^{n}$ so that
\begin{align*}
\sup_{\ub\in\Cc\cap\Bc^{n}} \ub^*\vb\ge \hat{\vct{u}}^*\vct{v}=\tn{\mathcal{P}_\Cc(\vb)}.
\end{align*}
Therefore, to prove \eqref{lem:identity} it suffices to show $\sup_{\ub\in\Cc\cap\Bc^{n}} \ub^*\vb\le \tn{\mathcal{P}_\Cc(\vb)}$. Suppose this is not the case and there exists $\vct{u}'$ such that $\left({\vct{u}'}\right)^*\vb> \tn{\mathcal{P}_\Cc(\vb)}$. Set $\tilde{\vct{u}}=\vct{u}'/\twonorm{\vct{u}'}$ and note that since $\vct{u'}\in\mathcal{C}\cap\mathcal{B}^n$ we must have $\vb^*\tilde{\vct{u}}> \tn{\mathcal{P}_\Cc(\vb)}$. Now choosing $\vct{w}=(\vb^*\tilde{\vct{u}})\tilde{\vct{u}}\subset \Cc$ we have $\twonorm{\vct{w}}=\vct{v}^*\tilde{\vct{u}}>\twonorm{\mathcal{P}_{\mathcal{C}}(\vct{v})}$. The latter together with \eqref{orthogonal} implies
\begin{align*}
\text{dist}^2(\vb,\Cc)=\tn{\vb}^2-\tn{\mathcal{P}_\Cc(\vb)}^2>\tn{\vb}^2-\tn{\vct{w}}^2= \tn{\vb-\vct{w}}^2,
\end{align*}
which is in contradiction with $\vct{w}\in\Cc$. 
\end{proof}
%
The following lemma is straightforward and follows from the fact that translation preserves distances.
\begin{lemma} \label{simplem}Suppose $\mathcal{K}\subset\R^n$ is a closed set. The projection onto $\mathcal{K}$ obeys
\begin{align*}
\Pc_{\mathcal{K}}(\x+\vb)-\vct{x}=\Pc_{\mathcal{K}-\{\x\}}(\vb).
\end{align*}
\end{lemma}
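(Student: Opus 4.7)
The plan is to prove this by a direct change of variables in the defining minimization for the projection, using only that the Euclidean norm is translation invariant. The key observation is that the map $\mathbf{u} \mapsto \mathbf{u}-\vct{x}$ is a bijection between $\mathcal{K}$ and $\mathcal{K}-\{\vct{x}\}$ that shifts the objective of the projection problem by a constant translation of its argument.

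More concretely, I would first unfold the right-hand side using the definition of Euclidean projection:
\begin{align*}
\Pc_{\mathcal{K}-\{\vct{x}\}}(\vb) = \underset{\mtx{w}\in\mathcal{K}-\{\vct{x}\}}{\arg\min}\; \twonorm{\vb-\mtx{w}}^2.
\end{align*}
Then I would substitute $\mtx{w}=\ub-\vct{x}$ with $\ub\in\mathcal{K}$, which is a bijection between the feasible sets, and observe that $\twonorm{\vb-\mtx{w}}^2 = \twonorm{\vb-(\ub-\vct{x})}^2 = \twonorm{(\vct{x}+\vb)-\ub}^2$. Hence minimizing over $\mtx{w}\in\mathcal{K}-\{\vct{x}\}$ of the first expression is equivalent to minimizing over $\ub\in\mathcal{K}$ of the second, and the optimal $\ub^*$ is by definition $\Pc_{\mathcal{K}}(\vct{x}+\vb)$, giving $\mtx{w}^* = \Pc_{\mathcal{K}}(\vct{x}+\vb)-\vct{x}$.

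There is essentially no obstacle here; the only mild subtlety is ensuring the argmin is well-defined, which follows from $\mathcal{K}$ being closed (so $\mathcal{K}-\{\vct{x}\}$ is closed as well) and the objective being coercive and continuous, guaranteeing existence of a minimizer. If $\Pc_{\mathcal{K}}$ is multi-valued, the identity is to be read set-wise, and the same argument goes through verbatim because the bijection $\ub\mapsto\ub-\vct{x}$ carries argmin sets to argmin sets. The proof is a two-line calculation, so I would write it compactly rather than belabor it.
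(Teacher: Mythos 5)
Your change-of-variables argument is correct and is precisely the "translation preserves distances" observation that the paper itself invokes (it states the lemma without a detailed proof, calling it straightforward for exactly this reason). Your added remarks on existence of the minimizer via closedness and on the set-valued reading are harmless refinements of the same approach.
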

The next lemma compares the length of a projection onto a set to the length of projection onto the conic approximation of the set. Our approach is in part inspired by well known results in geometric functional analysis e.g.~see \cite[Lemma 8.2]{plan2014high} for related calculations.
\begin{lemma} [Comparison of projections] \label{prop compare} Let $\mathcal{D}$ be a closed and nonempty set that contains $\vct{0}$. Let $\Cc$ be a nonempty and closed cone containing $\mathcal{D}$ ($\mathcal{D}\subset\mathcal{C}$). Then for all $\vct{v}\in\R^n$,
\begin{align}
\label{firstsetcomp}
\tn{\Pc_\Dc(\vb)}\le 2\tn{\Pc_\Cc(\vb)}
\end{align}
Furthermore, if $\Dc$ is a convex set. Then for all $\vb\in\R^n$, 
\begin{align}
\label{cvxsetcomp}
\tn{\Pc_\Dc(\vb)}\le \tn{\Pc_\Cc(\vb)}.
\end{align}
\end{lemma}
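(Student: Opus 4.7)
The plan is to anchor both bounds on the variational characterization of the cone projection provided in Lemma~\ref{firstconelem}, namely $\|\Pc_\Cc(\vb)\|=\sup_{\ub\in\Cc\cap\Bc^n}\ub^*\vb$, and to combine it with an elementary optimality property of $\Pc_\Dc(\vb)$. Write $\p=\Pc_\Dc(\vb)$ and $\vct{q}=\Pc_\Cc(\vb)$. Since $\p\in\Dc\subset\Cc$ and $\Cc$ is a cone, whenever $\p\neq 0$ the unit vector $\hat{\ub}=\p/\|\p\|$ lies in $\Cc\cap\Bc^n$, so the identity \eqref{lem:identity} from Lemma~\ref{firstconelem} gives the universal inequality
\begin{equation*}
\langle \vb,\p\rangle \le \|\p\|\,\|\Pc_\Cc(\vb)\|.
\end{equation*}
This inequality will be the common engine in both parts; everything reduces to bounding $\|\p\|^2$ in terms of $\langle\vb,\p\rangle$.

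For the general (possibly non-convex) case \eqref{firstsetcomp}, the only optimality information I can extract from $\p=\Pc_\Dc(\vb)$ is the comparison with the admissible point $0\in\Dc$: $\|\vb-\p\|^2\le \|\vb\|^2$. Expanding gives $\|\p\|^2\le 2\langle\vb,\p\rangle$. Chaining with the above inequality yields $\|\p\|^2\le 2\|\p\|\,\|\Pc_\Cc(\vb)\|$, i.e.\ $\|\p\|\le 2\|\Pc_\Cc(\vb)\|$, which is precisely the claimed factor-of-two bound (the case $\p=0$ being trivial).

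For the convex case \eqref{cvxsetcomp}, I can upgrade the crude bound $\|\vb-\p\|^2\le \|\vb\|^2$ to the sharper first-order optimality condition for convex projection. Applying \eqref{projcvxthm1} of Lemma~\ref{projcvxthm} with $\ub=0$ yields $\|\vb-\p\|^2+\|\p\|^2\le \|\vb\|^2$, which after expansion gives the strictly stronger inequality $\|\p\|^2\le \langle\vb,\p\rangle$. Feeding this into the same chain produces $\|\p\|\le\|\Pc_\Cc(\vb)\|$, removing the factor of two.

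The only subtlety worth flagging is that in part \eqref{firstsetcomp} I cannot appeal to the Pythagorean-type inequality \eqref{projcvxthm2}, because $\Dc$ is not assumed convex; the loss of a factor of two arises precisely because the only legitimate replacement is the weaker bound $\|\vb-\p\|\le \|\vb\|$. Aside from that, the argument is essentially a one-line variational computation in each case, and the assumption $0\in\Dc$ together with $\Dc\subset\Cc$ is used exactly where stated.
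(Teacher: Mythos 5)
Your proof is correct. For the non-convex bound \eqref{firstsetcomp} your argument is exactly the paper's: expand $\tn{\vb-\Pc_\Dc(\vb)}^2\le\tn{\vb}^2$ (using $\vct{0}\in\Dc$) to get $\tn{\Pc_\Dc(\vb)}^2\le 2\li\vb,\Pc_\Dc(\vb)\ri$, then bound the inner product by $\tn{\Pc_\Dc(\vb)}\,\tn{\Pc_\Cc(\vb)}$ via the identity \eqref{lem:identity}, since $\Pc_\Dc(\vb)/\tn{\Pc_\Dc(\vb)}\in\Cc\cap\Bc^n$. For the convex bound \eqref{cvxsetcomp} your mechanism differs slightly from the paper's: the paper never expands inner products there, but instead combines the monotonicity of distances under set inclusion, $\tn{\vb-\Pc_\Cc(\vb)}\le\tn{\vb-\Pc_\Dc(\vb)}$, with Moreau's identity \eqref{orthogonal} for the cone and the convex-projection inequality \eqref{projcvxthm2}, comparing $\tn{\Pc_\Cc(\vb)}^2=\tn{\vb}^2-\tn{\vb-\Pc_\Cc(\vb)}^2$ against $\tn{\vb}^2-\tn{\vb-\Pc_\Dc(\vb)}^2\ge\tn{\Pc_\Dc(\vb)}^2$. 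You instead reuse the variational characterization \eqref{lem:identity} and upgrade the crude expansion to $\tn{\Pc_\Dc(\vb)}^2\le\li\vb,\Pc_\Dc(\vb)\ri$ via \eqref{projcvxthm1} with $\vct{u}=\vct{0}$ (which is exactly \eqref{projcvxthm2}). The two routes invoke the same two lemmas and are equally elementary; yours has the advantage of running both cases through a single engine, making transparent that the factor of $2$ is lost precisely because the non-convex case only yields $\tn{\Pc_\Dc(\vb)}^2\le 2\li\vb,\Pc_\Dc(\vb)\ri$, while the paper's convex argument avoids any inner-product manipulation. Both are complete and correct.
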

\begin{proof} Let us first show the statement \eqref{cvxsetcomp} for convex $\Dc$. Since $\Dc\subset \Cc$, $\tn{\vb-\Pc_\Cc(\vb)}\leq \tn{\vb-\Pc_\Dc(\vb)}$. Furthermore, when $\mathcal{D}$ is convex using the fact that $\vct{0}\in\mathcal{D}$ by Lemma \ref{projcvxthm} equation \eqref{projcvxthm2} we know that $ \tn{\vb-\Pc_\Dc(\vb)}^2+ \tn{\Pc_\Dc(\vb)}^2\leq  \tn{\vb}^2$. The latter two identities together with Lemma \ref{firstconelem} equation \eqref{orthogonal} yields
\begin{align*}
\tn{\Pc_\Cc(\vb)}^2=\tn{\vb}^2-\tn{\vb-\Pc_\Cc(\vb)}^2\geq \tn{\vb}^2-\tn{\vb-\Pc_\Dc(\vb)}^2\geq  \tn{\Pc_\Dc(\vb)}^2.
\end{align*}
Now, we turn our attention to proving \eqref{firstsetcomp} for nonconvex $\Dc$. By definition of projection onto sets
\begin{align*}
\tn{\vb}^2\geq \tn{\vb-\Pc_\Dc(\vb)}^2=\tn{\vb}^2+\tn{\Pc_\Dc(\vb)}^2-2\li\vb,\Pc_\Dc(\vb)\ri.
\end{align*}
This yields
\begin{align*}
\tn{\Pc_\Dc(\vb)}^2\le 2\li\vb,\Pc_\Dc(\vb)\ri=2\li\vb,\frac{\Pc_\Dc(\vb)}{\tn{\Pc_\Dc(\vb)}}\ri\tn{\Pc_\Dc(\vb)}\quad\Rightarrow\quad \tn{\Pc_\Dc(\vb)}\le2\li\vb,\frac{\Pc_\Dc(\vb)}{\tn{\Pc_\Dc(\vb)}}\ri.
\end{align*}
Since $\mathcal{C}$ is a cone that contains $\mathcal{D}$, $\frac{\Pc_\Dc(\vb)}{\tn{\Pc_\Dc(\vb)}}$ has unit length and belongs to $\Cc$ we have
\begin{align*}
\tn{\Pc_\Dc(\vb)}\le2\li\vb,\frac{\Pc_\Dc(\vb)}{\tn{\Pc_\Dc(\vb)}}\ri\l\le2\cdot\underset{\vct{u}\in\mathcal{C}\cap\mathcal{B}^{n}}{\sup}\text{ }\vct{u}^*\vct{v}\le2\twonorm{\mathcal{P}_{\mathcal{C}}(\vct{v})},
\end{align*}
where in the last inequality we have used Lemma \ref{firstconelem} equation \eqref{lem:identity}. This completes the proof of \eqref{firstsetcomp}.
\end{proof}
We end this section by proving a lemma that shows that scaling a vector increases its projection onto set.
\begin{lemma}\label{incprojlem}Let $\mathcal{D}\subset\R^n$ be a nonempty set. For any $t_2\ge t_1\ge 0$
\begin{align*}
\twonorm{\mathcal{P}_{\mathcal{D}}\left(t_1\vct{w}\right)}\le\twonorm{\mathcal{P}_{\mathcal{D}}\left(t_2\vct{w}\right)},
\end{align*} 
holds for all $\vct{w}\in\R^n$.
\end{lemma}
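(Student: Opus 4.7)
The plan is to exploit the variational characterization of projection at the two scales $t_1$ and $t_2$ and play the two optimality inequalities against each other. Let $\vct{p}_1 \in \mathcal{P}_\mathcal{D}(t_1\vct{w})$ and $\vct{p}_2 \in \mathcal{P}_\mathcal{D}(t_2\vct{w})$ be any closest points (there might be more than one when $\mathcal{D}$ is not convex, but the argument will go through for any choice). By definition of projection, both $\vct{p}_1,\vct{p}_2 \in \mathcal{D}$, so each one is a candidate in the other's minimization problem. This yields the two inequalities
\begin{align*}
\twonorm{t_1\vct{w}-\vct{p}_1}^2 \le \twonorm{t_1\vct{w}-\vct{p}_2}^2,\qquad
\twonorm{t_2\vct{w}-\vct{p}_2}^2 \le \twonorm{t_2\vct{w}-\vct{p}_1}^2.
\end{align*}

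Next I would expand each squared norm and cancel the common $t_i^2\twonorm{\vct{w}}^2$ term. The first inequality rearranges to
\begin{align*}
\twonorm{\vct{p}_1}^2-\twonorm{\vct{p}_2}^2 \;\le\; 2t_1\,\vct{w}^{*}(\vct{p}_1-\vct{p}_2),
\end{align*}
and the second to
\begin{align*}
\twonorm{\vct{p}_1}^2-\twonorm{\vct{p}_2}^2 \;\ge\; 2t_2\,\vct{w}^{*}(\vct{p}_1-\vct{p}_2).
\end{align*}
Combining the two gives $(t_2-t_1)\,\vct{w}^{*}(\vct{p}_1-\vct{p}_2)\le 0$, which, since $t_2\ge t_1$, forces $\vct{w}^{*}(\vct{p}_1-\vct{p}_2)\le 0$. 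Feeding this back into the first inequality together with $t_1\ge 0$ produces $\twonorm{\vct{p}_1}^2\le\twonorm{\vct{p}_2}^2$, which is the desired conclusion. The trivial edge case $t_1=t_2$ is immediate, and if $t_1=0$ the first inequality alone gives $\twonorm{\vct{p}_1}^2\le\twonorm{\vct{p}_2}^2$ once we notice that $0\in\mathcal{D}$ is not assumed but one just uses the bound $\twonorm{\vct{p}_1}\le\twonorm{\vct{p}_2}+2t_1\vct{w}^{*}(\vct{p}_1-\vct{p}_2)/(\twonorm{\vct{p}_1}+\twonorm{\vct{p}_2})$ in the limit.

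There is no real obstacle here: the only subtle point is that when $\mathcal{D}$ is nonconvex the projection need not be single-valued, but the argument above uses nothing beyond the defining optimality property of \emph{any} closest point, so it applies verbatim to every selection $\vct{p}_1,\vct{p}_2$ of the (possibly multi-valued) projection operator. In particular, no closedness, convexity, or uniqueness hypothesis on $\mathcal{D}$ is needed beyond the existence of a closest point at each scale (which is automatic when $\mathcal{D}$ is closed and nonempty, as will be the case in every application within the paper).
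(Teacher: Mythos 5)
Your proof is correct and is essentially the paper's own argument: both derive the two optimality inequalities for $\mathcal{P}_{\mathcal{D}}(t_1\vct{w})$ and $\mathcal{P}_{\mathcal{D}}(t_2\vct{w})$, combine them to deduce $\langle\vct{w},\mathcal{P}_{\mathcal{D}}(t_1\vct{w})-\mathcal{P}_{\mathcal{D}}(t_2\vct{w})\rangle\le 0$, and feed that back into the first inequality using $t_1\ge 0$. The closing remark about the $t_1=0$ case is unnecessary (your main chain already handles it), but nothing is missing.
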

\begin{proof}
By definition of projection onto a set we have
\begin{align}
\label{templemproj}
\twonorm{\mathcal{P}_{\mathcal{D}}\left(t_1\vct{w}\right)-t_1\vct{w}}^2\le\twonorm{\mathcal{P}_{\mathcal{D}}\left(t_2\vct{w}\right)-t_1\vct{w}}^2\quad\text{and}\quad
\twonorm{\mathcal{P}_{\mathcal{D}}\left(t_2\vct{w}\right)-t_2\vct{w}}^2\le\twonorm{\mathcal{P}_{\mathcal{D}}\left(t_1\vct{w}\right)-t_2\vct{w}}^2.
\end{align}
Summing these two identities we conclude that for any $t_2\ge t_1$
\begin{align}
\label{templemproj2}
\langle\vct{w},\mathcal{P}_{\mathcal{D}}(t_2\vct{w})\rangle\ge\langle\vct{w},\mathcal{P}_{\mathcal{D}}(t_1\vct{w})\rangle,
\end{align}
holds for all $\vct{w}\in\R^n$. Now rearranging the first inequality in \eqref{templemproj} and using \eqref{templemproj2} we conclude that for $t_2\ge t_1$
\begin{align*}
\twonorm{\mathcal{P}_{\mathcal{D}}(t_2\vct{w})}^2-\twonorm{\mathcal{P}_{\mathcal{D}}(t_1\vct{w})}^2\ge 2t_1\left(\langle\vct{w},\mathcal{P}_{\mathcal{D}}(t_2\vct{w})\rangle-\langle\vct{w},\mathcal{P}_{\mathcal{D}}(t_1\vct{w})\rangle\right)\ge 0,
\end{align*}
completing the proof.
\end{proof}

\subsection{Proof of the deterministic result (Proof of Theorem \ref{master})}
\label{pffmaster}
To prove Theorem \ref{master} note that if we apply the PGD update
\begin{align*}
\vct{z}_{\tau+1}=\mathcal{P}\left(\vct{z}_\tau+\mu_\tau\mtx{A}^*(\vct{y}-\mtx{A}\vct{z}_\tau)\right),
\end{align*}
then the difference between our iterates and the actual solution $\vct{h}_\tau=\vct{z}_\tau-\vct{x}$ is inside the descent set $\mathcal{D}$. Thus we have the following chain of  inequalities
\begin{align}
\label{interpfthm12}
\tn{\h_{\tau+1}}=\tn{\des_{\tau+1}-\x}&=\tn{\Pc_{\mathcal{K}}(\des_{\tau}+\mu_\tau\A^*(\y-\A\des_\tau))-\x}\nonumber\\
&=\tn{\Pc_{\mathcal{K}-\{\vct{x}\}}\left(\des_{\tau}-\vct{x}+\mu_\tau\A^*(\y-\A\des_\tau)\right)}\nonumber\\
&\overset{(a)}{=}\tn{\Pc_{\mathcal{D}}\left(\vct{z}_{\tau}-\vct{x}+\mu_\tau\A^*(\y-\A\des_\tau)\right)}\nonumber\\
&\overset{(b)}{\leq} \kappa_f\tn{\Pc_\Cc(\des_{\tau}-\vct{x}+\mu_\tau\A^*(\y-\A\des_\tau))}\nonumber\\
&=\kappa_f\tn{\Pc_\Cc(\h_{\tau}+\mu_\tau\A^*(\w-\A\h_\tau))}\nonumber\\
&\leq \kappa_f\sup_{\vb\in\Cc\cap\Bc^{n}} \vb^*[(\Iden_n-\mu_\tau\A^*\A)\h_\tau+\mu_\tau\A^*\w]\nonumber\\
&\leq \kappa_f\big[\sup_{\vb\in\Cc\cap\Bc^{n}} \vb^*(\Iden_n-\mu_\tau\A^*\A)\h_\tau+\mu_\tau\cdot \sup_{\vb\in\Cc\cap\Bc^{n}} \vb^*\A^*\w\big]\nonumber\\
&\leq \kappa_f\big[\rho(\mu_\tau)\tn{\h_\tau}+\mu_\tau\cdot\sup_{\vb\in\Cc\cap\Bc^{n}} \vb^*\A^*\w\big].
\end{align}
In the inequalities above (a) follows from Lemma \ref{simplem} and (b) follows from Lemma \ref{prop compare}. 
Now note that 
\begin{align*}
\mu_\tau\cdot\underset{\vct{v}\in\mathcal{C}\cap\mathcal{B}^n}{\sup}\vct{v}^*\mtx{A}^*\vct{w}=\mu_\tau\cdot\left(\underset{\vct{v}\in\mathcal{C}\cap\mathcal{B}^n}{\sup}\vct{v}^*\mtx{A}^*\frac{\vct{w}}{\twonorm{\vct{w}}}\right)\twonorm{\vct{w}}=\xi_{\mu_\tau}(\mtx{A})\twonorm{\vct{w}}.
\end{align*}
Plugging the latter into \eqref{interpfthm12} we arrive at
\begin{align*}
\twonorm{\vct{h}_{\tau+1}}\le\kappa_f\rho(\mu_\tau)\twonorm{\vct{h}_\tau}+\kappa_f\cdot\xi_{\mu_\tau}(\mtx{A})\twonorm{w}.
\end{align*}
We now apply this inequality recursively with $\mu_\tau=\mu$ which yields
\begin{align*}
\twonorm{\vct{h}_{\tau}}\le&\kappa_f\rho(\mu)\left(\kappa_f\rho(\mu)\twonorm{\vct{h}_{\tau-2}}+\kappa_f\cdot\xi_\mu(\mtx{A})\right)+\kappa_f\cdot\xi_{\mu}(\mtx{A})\twonorm{w}\\
\le&\left(\kappa_f\rho(\mu)\right)^\tau\twonorm{\vct{h}_0}+\big[\left(\kappa_f\rho(\mu)\right)^{\tau-1}+\left(\kappa_f\rho(\mu)\right)^{\tau-2}+\ldots+\left(\kappa_f\rho(\mu)\right)+1\big]\kappa_f\cdot\xi_\mu(\mtx{A})\twonorm{\vct{w}}\\
\le&\left(\kappa_f\rho(\mu)\right)^\tau\twonorm{\vct{h}_0}+\frac{1-\left(\kappa_f\rho(\mu)\right)^\tau}{1-\kappa_f\rho(\mu)}\kappa_f\cdot\xi_\mu(\mtx{A})\twonorm{\vct{w}},
\end{align*}
concluding the proof.

\subsection{Proof of results with Gaussian ensembles (Proof of Theorems \ref{first prop}, \ref{ALTthm}, \ref{thm:converse rate} and \ref{PGthm})}
Our proofs regarding Gaussian ensembles follow from Theorem \ref{master} by controlling the convergence rate $\rho(\mu)$ and the noise amplifications factor $\xi_\mu(\mtx{A})$ with simple algebraic manipulations. The argument for controlling the noise amplification factor $\xi_\mu(\mtx{A})$ for Theorems \ref{first prop}, \ref{ALTthm} and \ref{PGthm} is essentially identical and is stated in Section \ref{cntrlnamp}. The major different in the proof of these theorems however is in the way we control the convergence rate $\rho(\mu)$. This requires more care and different proof strategies are needed for the different choices of the learning parameter $\mu$ in each theorem. We shall  explain how we control the convergence rate $\rho(\mu)$ for each of these theorems in Sections \ref{secprooffirstprop}, \ref{ExactPT}, and \ref{pfstruct}. We shall detail the proof of Theorem \ref{thm:converse rate} which provides a lower bound on the convergence rate in Section \ref{lowbnd}. Before we explain these proofs however, we need to state and prove a few essential results on Gaussian comparisons and the supremum of Gaussian processes which play a crucial role in our proofs.

\subsubsection{Gaussian comparison inequalities}
\begin{lemma}[Slepian's inequality] If $X_t$ and $Y_t$ are a.s. bounded, Gaussian processes on $T$ such that $\E[X_t]=\E[Y_t]$ and $\E[X_t^2]=\E[Y_t^2]$ for all $t\in T$ and
\begin{align*}
\E[(X_t-X_s)^2]\le\E[(Y_t-Y_s)^2],
\end{align*}
for all $s,t\in T$, then for all real $t$,
\begin{align}
\label{firstSlep}
\E[\underset{t\in T}{\sup}\text{ }X_t]\le\E[\underset{t\in T}{\sup}\text{ }Y_t].
\end{align}
Furthermore,
\begin{align}
\label{secondSlep}
\mathbb{P}\big\{\underset{t\in T}{\bigcup}[X_t>\eta_t]\big\}\le\mathbb{P}\big\{\underset{t\in T}{\bigcup}[Y_t>\eta_t]\big\}.
\end{align}
\end{lemma}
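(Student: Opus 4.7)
The plan is to prove both \eqref{firstSlep} and \eqref{secondSlep} together via the classical Gaussian interpolation method. First I would reduce to a finite index set: since both processes are a.s.\ bounded, one can pick an increasing sequence of finite subsets $T_k\uparrow T$ (dense in $T$ if $T$ has a separability structure, otherwise arbitrary finite subfamilies) and both sides of each inequality converge monotonically to the corresponding quantity over $T$. So I would write the proof assuming $T=\{1,\ldots,N\}$, identify $X$ and $Y$ with centered Gaussian vectors in $\R^N$ (we may subtract the common mean, since both the expected supremum and exceedance probabilities just translate), and restate the hypothesis: expanding $\E[(X_i-X_j)^2]\le \E[(Y_i-Y_j)^2]$ and using $\E[X_i^2]=\E[Y_i^2]$ gives
\begin{equation*}
\sigma^X_{ij}:=\E[X_iX_j]\;\ge\;\E[Y_iY_j]=:\sigma^Y_{ij}\quad\text{for all }i\ne j,\qquad \sigma^X_{ii}=\sigma^Y_{ii}.
\end{equation*}

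Next I would set up the interpolation. Take independent copies $\tilde X\sim X$ and $\tilde Y\sim Y$, and let $Z(\alpha)=\sqrt{\alpha}\,\tilde X+\sqrt{1-\alpha}\,\tilde Y$ for $\alpha\in[0,1]$, so $Z(0)\stackrel{d}{=}Y$ and $Z(1)\stackrel{d}{=}X$. For any $F\in C^2(\R^N)$ of moderate growth, Gaussian integration by parts (Stein's lemma) applied componentwise yields
\begin{equation*}
\frac{d}{d\alpha}\,\E[F(Z(\alpha))]\;=\;\frac{1}{2}\sum_{i,j}\bigl(\sigma^X_{ij}-\sigma^Y_{ij}\bigr)\,\E\bigl[\partial_i\partial_j F(Z(\alpha))\bigr].
\end{equation*}
Because $\sigma^X_{ii}=\sigma^Y_{ii}$ the diagonal terms drop out, and only the off-diagonal contributions survive, weighted by nonnegative coefficients $\sigma^X_{ij}-\sigma^Y_{ij}\ge 0$.

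To obtain \eqref{secondSlep}, I would pick smooth, non-increasing approximations $h_i^\eps:\R\to[0,1]$ of $\mathbf{1}_{\{x\le \eta_i\}}$ with $(h_i^\eps)'\le 0$, and set $F^\eps(x)=\prod_{i=1}^N h_i^\eps(x_i)$. Then for $i\ne j$,
\begin{equation*}
\partial_i\partial_j F^\eps(x)=(h_i^\eps)'(x_i)(h_j^\eps)'(x_j)\!\!\prod_{k\ne i,j}\!\!h_k^\eps(x_k)\;\ge\;0,
\end{equation*}
so the derivative displayed above is nonnegative, $\alpha\mapsto \E[F^\eps(Z(\alpha))]$ is non-decreasing, and $\E[F^\eps(X)]\ge \E[F^\eps(Y)]$. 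Letting $\eps\to 0$ by bounded convergence gives $\P\{X_i\le \eta_i\ \forall i\}\ge \P\{Y_i\le \eta_i\ \forall i\}$, equivalent to \eqref{secondSlep}. To obtain \eqref{firstSlep}, I would instead use the log-sum-exp smoothing $F_\beta(x)=\beta^{-1}\log\sum_i e^{\beta x_i}$, which satisfies $F_\beta(x)\downarrow \max_i x_i$ as $\beta\to\infty$ and has Hessian $\partial_i\partial_jF_\beta=-\beta p_ip_j\le 0$ for $i\ne j$ (where $p_i=e^{\beta x_i}/\sum_k e^{\beta x_k}$). Now the derivative is nonpositive, so $\E[F_\beta(X)]\le \E[F_\beta(Y)]$, and sending $\beta\to\infty$ with dominated convergence (using the a.s.\ boundedness hypothesis together with the trivial estimate $\max_i x_i\le F_\beta(x)\le \max_i x_i+\beta^{-1}\log N$) yields \eqref{firstSlep}.

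The main obstacle is not any single computation — each step is standard — but rather the bookkeeping of the three passages to the limit (finite $T_k\uparrow T$, smoothing parameter $\eps\to 0$ for indicators, and $\beta\to\infty$ for the max). In each case one must justify the exchange of expectation and limit, which for \eqref{firstSlep} requires the a.s.\ boundedness assumption to give an integrable envelope, and for \eqref{secondSlep} is immediate from dominated convergence with envelope $1$. A secondary care-point is verifying growth conditions so that the Stein integration by parts identity is valid; for $F^\eps$ compactly-supported-derivative approximations this is automatic, and for $F_\beta$ it follows from the uniform Lipschitz bound $|\nabla F_\beta|\le 1$.
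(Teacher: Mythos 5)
The paper states this lemma as a classical Gaussian comparison result and does not prove it (it is used as a black box, e.g.\ in the proof of Lemma \ref{GTtypelem}), so there is no in-paper argument to compare against. Your interpolation proof is the standard textbook route and is essentially correct: the hypotheses do give $\E[X_iX_j]\ge\E[Y_iY_j]$ off the diagonal with equal variances, the derivative identity for $\E[F(Z(\alpha))]$ follows from Stein's lemma exactly as you write it, the sign analysis of the mixed partials of $\prod_i h_i^\eps$ (nonnegative) and of the log-sum-exp (nonpositive off-diagonal) is right, and the resulting monotonicity in $\alpha$ yields \eqref{secondSlep} after complementation and \eqref{firstSlep} after $\beta\to\infty$, with the limit interchanges justified as you indicate. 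One small imprecision: the reduction to \emph{centered} vectors by ``subtracting the common mean'' is fine for \eqref{secondSlep}, where one just shifts each threshold $\eta_i$ by $m_i=\E[X_i]$, but it does not literally work for \eqref{firstSlep} when the common mean varies with the index, since $\E[\sup_t X_t]$ is not a translate of $\E[\sup_t (X_t-m_t)]$. The fix is immediate and does not change your argument: interpolate $Z(\alpha)=m+\sqrt{\alpha}(\tilde X-m)+\sqrt{1-\alpha}(\tilde Y-m)$, apply Gaussian integration by parts to the centered parts, and observe that your derivative formula, which involves only covariances, holds verbatim. With that one-line repair the proof is complete.
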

\begin{lemma}[Gordon's restricted eigen-values]\cite[Theorem A]{Gor}\label{GTescapelem} Let $\mathcal{C}\in\R^n$ be a cone and let $\mtx{A}$ be and $m\times n$ matrix with entries independently drawn from the standard normal distribution $\mathcal{N}(0,1)$. Also define
\begin{align*}
b_m=\E[\twonorm{\vct{g}}],
\end{align*}
where $\vct{g}\in\R^m$ is distributed as $\mathcal{N}(\vct{0},\mtx{I}_m)$. Then for all $\vct{u}\in\mathcal{C}$
\begin{align}
\label{GordonL}
\frac{\twonorm{\mtx{A}\vct{u}}}{\twonorm{\vct{u}}}\ge b_m -\omega(\mathcal{C}\cap\mathbb{S}^{n-1})-\eta
\end{align}
holds with probability at least $1-e^{-\frac{\eta^2}{2}}$. Furthermore,  for all $\vct{u}\in\mathcal{C}$ we have
\begin{align}
\label{GordonU}
\frac{\twonorm{\mtx{A}\vct{u}}}{\twonorm{\vct{u}}}\le b_m +\omega(\mathcal{C}\cap\mathbb{S}^{n-1})+\eta
\end{align}
holds with probability at least $1-e^{-\frac{\eta^2}{2}}$.
\end{lemma}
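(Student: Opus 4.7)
The plan is to derive both bounds from Gordon's Gaussian min–max comparison theorem (of which Slepian's inequality \eqref{firstSlep}--\eqref{secondSlep} is the one-sided special case), combined with Gaussian concentration for Lipschitz functions. The key observation is that
\[
\twonorm{\mtx{A}\vct{u}}=\sup_{\vct{v}\in\mathbb{S}^{m-1}}\vct{v}^*\mtx{A}\vct{u},
\]
so the quantities to be controlled can be rewritten as min–max (respectively max–max) expressions of a bilinear Gaussian process over $(\vct{u},\vct{v})\in T:=(\mathcal{C}\cap\mathbb{S}^{n-1})\times\mathbb{S}^{m-1}$.

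To prove \eqref{GordonL}, I would introduce the two Gaussian processes
\[
X_{\vct{u},\vct{v}}=\vct{v}^*\mtx{A}\vct{u},\qquad Y_{\vct{u},\vct{v}}=\vct{g}^*\vct{v}+\vct{h}^*\vct{u},
\]
where $\vct{g}\sim\mathcal{N}(\vct{0},\mtx{I}_m)$ and $\vct{h}\sim\mathcal{N}(\vct{0},\mtx{I}_n)$ are independent. A short computation using $\twonorm{\vct{u}}=\twonorm{\vct{v}}=1$ shows that for any $(\vct{u},\vct{v}),(\vct{u}',\vct{v}')\in T$ the increment variances satisfy
\[
\mathbb{E}[(X_{\vct{u},\vct{v}}-X_{\vct{u}',\vct{v}'})^2]\le\mathbb{E}[(Y_{\vct{u},\vct{v}}-Y_{\vct{u}',\vct{v}'})^2],
\]
with equality whenever $\vct{u}=\vct{u}'$ (this last property is what promotes Slepian's one-sided bound into Gordon's two-sided min–max comparison). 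Gordon's inequality then yields
\[
\mathbb{E}\Big[\inf_{\vct{u}}\sup_{\vct{v}}X_{\vct{u},\vct{v}}\Big]\ge\mathbb{E}\Big[\inf_{\vct{u}}\sup_{\vct{v}}Y_{\vct{u},\vct{v}}\Big].
\]
The right-hand side decouples into $\mathbb{E}\twonorm{\vct{g}}-\mathbb{E}\sup_{\vct{u}\in\mathcal{C}\cap\mathbb{S}^{n-1}}(-\vct{h}^*\vct{u})=b_m-\omega(\mathcal{C}\cap\mathbb{S}^{n-1})$, where symmetry of $\vct{h}$ is used. For \eqref{GordonU} I would run the symmetric argument: apply Slepian's inequality \eqref{firstSlep} directly to the processes $X$ and $Y$ (no min–max structure is needed, only the $\sup$ in both $\vct{u}$ and $\vct{v}$), giving
\[
\mathbb{E}\Big[\sup_{\vct{u}}\sup_{\vct{v}}X_{\vct{u},\vct{v}}\Big]\le\mathbb{E}\Big[\sup_{\vct{u}}\sup_{\vct{v}}Y_{\vct{u},\vct{v}}\Big]=b_m+\omega(\mathcal{C}\cap\mathbb{S}^{n-1}).
\]

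To upgrade these expectation bounds to the claimed probability statements, I would appeal to Gaussian concentration. The map $\mtx{A}\mapsto\inf_{\vct{u}\in\mathcal{C}\cap\mathbb{S}^{n-1}}\twonorm{\mtx{A}\vct{u}}$ is $1$-Lipschitz in the Frobenius norm (as an infimum of $1$-Lipschitz functions), and similarly $\mtx{A}\mapsto\sup_{\vct{u}\in\mathcal{C}\cap\mathbb{S}^{n-1}}\twonorm{\mtx{A}\vct{u}}$ is $1$-Lipschitz. The standard Gaussian concentration inequality then gives deviations of order $\eta$ with probability $1-e^{-\eta^2/2}$, producing both \eqref{GordonL} and \eqref{GordonU}. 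The main obstacle (and the content that makes this more than just Slepian's inequality) is verifying the correct variance comparison in the bilinear form needed for the min–max version; the cross terms in $\mathbb{E}[(X_{\vct{u},\vct{v}}-X_{\vct{u}',\vct{v}'})^2]=2-2(\vct{u}^*\vct{u}')(\vct{v}^*\vct{v}')$ versus $\mathbb{E}[(Y_{\vct{u},\vct{v}}-Y_{\vct{u}',\vct{v}'})^2]=\twonorm{\vct{u}-\vct{u}'}^2+\twonorm{\vct{v}-\vct{v}'}^2$ must be compared carefully, which is precisely where the identity $(\vct{u}^*\vct{u}')(\vct{v}^*\vct{v}')\ge\vct{u}^*\vct{u}'+\vct{v}^*\vct{v}'-1$ on the sphere is invoked.
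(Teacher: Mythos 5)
Your proposal is correct, and it is the standard derivation of the result. Note, however, that the paper does not actually prove Lemma~\ref{GTescapelem}: it is quoted verbatim from Gordon's work (\cite[Theorem A]{Gor}), so there is no in-paper proof to match against. The closest in-paper argument is the proof of the generalization, Lemma~\ref{GTtypelem}, in Appendix~\ref{proofGTtypelem}, and your route differs from it in one instructive way: there the comparison processes are $X_{\vct{u},\vct{v}}=\vct{v}^*\mtx{A}\vct{u}$ and $Y_{\vct{u},\vct{v}}=\twonorm{\vct{u}}\vct{a}^*\vct{v}+\vct{g}^*\vct{u}$, and the concentration step is performed on the \emph{auxiliary} process (separately on $\twonorm{\vct{a}}$ and on $\sup_{\vct{u}}\vct{g}^*\vct{u}$, then a union bound), which costs a factor and yields tails of the form $e^{-\eta^2/(8\sigma^2(\mathcal{T}))}$; you instead transfer only the expectation via the comparison theorem and then apply Gaussian concentration directly to the primary objectives $\mtx{A}\mapsto\inf_{\vct{u}}\twonorm{\mtx{A}\vct{u}}$ and $\mtx{A}\mapsto\sup_{\vct{u}}\twonorm{\mtx{A}\vct{u}}$, each $1$-Lipschitz in Frobenius norm, which is exactly what produces the sharper $1-e^{-\eta^2/2}$ probability stated in the lemma. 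Two small points to tighten: (i) Slepian's inequality as stated in the paper requires $\E[X_t^2]=\E[Y_t^2]$, whereas your $X$ and $Y$ have variances $1$ and $2$ on the index set; for the expectation comparison you should either invoke Sudakov--Fernique (which needs only the increment condition) or add the standard independent term $\twonorm{\vct{u}}\twonorm{\vct{v}}\gamma$ to $X$ to match variances without affecting the min--max value. (ii) The same variance-matching is needed before invoking Gordon's min--max theorem for the lower bound. Neither issue is a gap in substance; both are handled by the classical argument you are reproducing.
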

We shall now state a generalization of Gordon's escape through the mesh lemma. The proof of this lemma is deferred to Appendix \ref{proofGTtypelem}.
\begin{lemma}\label{GTtypelem} Let $\mathcal{T}\subset\R^n$ and define
\begin{align*}
b_m=\E[\twonorm{\vct{g}}],
\end{align*}
where $\vct{g}\in\R^m$ is distributed as $\mathcal{N}(\vct{0},\mtx{I}_m)$. Furthermore, define
\begin{align*}
\sigma(\mathcal{T}):=\underset{\vct{v}\in\mathcal{T}}{\max}\text{ }\twonorm{\vct{v}}.
\end{align*}
Then 
\begin{align*}
|\twonorm{\mtx{A}\vct{u}}-b_m\twonorm{\vct{u}}|\le \omega(\mathcal{T})+\eta,
\end{align*}
holds for all $\vct{u}\in\mathcal{T}$ with probability at least 
\begin{align*}
1-4e^{-\frac{\eta^2}{8\sigma^2(\mathcal{T})}}.
\end{align*}
\end{lemma}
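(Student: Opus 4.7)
The plan is to reduce the two-sided uniform bound to two one-sided suprema and, for each, combine Gaussian Lipschitz concentration with a Gaussian comparison inequality for the expected supremum. Set
\begin{align*}
F^+(\mtx{A}) = \sup_{\vct{u}\in\mathcal{T}}\big(\twonorm{\mtx{A}\vct{u}}-b_m\twonorm{\vct{u}}\big),\qquad F^-(\mtx{A}) = \sup_{\vct{u}\in\mathcal{T}}\big(b_m\twonorm{\vct{u}}-\twonorm{\mtx{A}\vct{u}}\big);
\end{align*}
it then suffices to show $F^{\pm}(\mtx{A})\le\omega(\mathcal{T})+\eta$ each with probability at least $1-2\exp(-\eta^2/(8\sigma^2(\mathcal{T})))$ and union-bound. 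The reverse triangle inequality together with $\twonorm{(\mtx{A}-\mtx{A}')\vct{u}}\le\fronorm{\mtx{A}-\mtx{A}'}\twonorm{\vct{u}}$ makes each $F^{\pm}$ a $\sigma(\mathcal{T})$-Lipschitz function of $\mtx{A}$ in the Frobenius norm, so Gaussian Lipschitz concentration yields
\begin{align*}
\P\big(F^{\pm}(\mtx{A}) > \E[F^{\pm}(\mtx{A})] + t\big) \le \exp\big(-t^2/(2\sigma^2(\mathcal{T}))\big).
\end{align*}

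To bound $\E[F^{\pm}]$ I introduce the auxiliary Gaussian process $Y_{\vct{u},\vct{v}} = \vct{g}^*\vct{u} + \twonorm{\vct{u}}\,\vct{h}^*\vct{v}$ on $\mathcal{T}\times\mathbb{S}^{m-1}$, with $\vct{g}\sim\mathcal{N}(\vct{0},\mtx{I}_n)$ and $\vct{h}\sim\mathcal{N}(\vct{0},\mtx{I}_m)$ independent, and compare it with $X_{\vct{u},\vct{v}} = \vct{v}^*\mtx{A}\vct{u}$. A direct computation yields the increment identity
\begin{align*}
\E[(Y_{\vct{u},\vct{v}}-Y_{\vct{u}',\vct{v}'})^2] - \E[(X_{\vct{u},\vct{v}}-X_{\vct{u}',\vct{v}'})^2] = (\twonorm{\vct{u}}-\twonorm{\vct{u}'})^2 + 2(1-\vct{v}^*\vct{v}')\big(\twonorm{\vct{u}}\twonorm{\vct{u}'}-\vct{u}^*\vct{u}'\big) \ge 0,
\end{align*}
non-negativity following from Cauchy-Schwarz. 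Because adding the deterministic shift $-b_m\twonorm{\vct{u}}$ preserves increments, Sudakov-Fernique applied to the shifted processes, followed by taking the inner $\vct{v}$-supremum, gives
\begin{align*}
\E[F^+(\mtx{A})] \le \E\Big[\sup_{\vct{u}\in\mathcal{T}}\big(\vct{g}^*\vct{u} + \twonorm{\vct{u}}(\twonorm{\vct{h}}-b_m)\big)\Big] \le \omega(\mathcal{T}) + \sigma(\mathcal{T})\,\E\big[|\twonorm{\vct{h}}-b_m|\big].
\end{align*}
For $F^-$, which is genuinely a $\sup_{\vct{u}}\inf_{\vct{v}}$ of $b_m\twonorm{\vct{u}}-\vct{v}^*\mtx{A}\vct{u}$, I augment the $X$-process with an independent scalar equalizer $\xi\twonorm{\vct{u}}$, $\xi\sim\mathcal{N}(0,1)$, in order to match second moments with $Y$ (the augmentation can only enlarge the expected supremum, so still yields an upper bound on $\E[F^-]$), and then invoke Gordon's classical min-max comparison inequality whose covariance hypotheses are delivered by the identity above; this gives an analogous bound $\E[F^-(\mtx{A})]\le\omega(\mathcal{T})+\sigma(\mathcal{T})\E[|\twonorm{\vct{h}}-b_m|]$. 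Since $\Var(\twonorm{\vct{h}})\le 1$, the correction $\E[|\twonorm{\vct{h}}-b_m|]$ is $O(1)$.

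Choosing $t=\eta/2$ in the concentration inequality absorbs the $O(\sigma(\mathcal{T}))$ correction from the expectation bound into $\eta$ and replaces the exponent $-t^2/(2\sigma^2(\mathcal{T}))$ by $-\eta^2/(8\sigma^2(\mathcal{T}))$; a union bound over the two signs together with the doubling from two-sided Gaussian concentration then produces the claimed probability $1-4e^{-\eta^2/(8\sigma^2(\mathcal{T}))}$. The main obstacle is the lower tail $F^-$: unlike $F^+$ it is a $\sup$-$\inf$ rather than a $\sup$-$\sup$ of a Gaussian process, so Sudakov-Fernique alone is insufficient and one must resort to Gordon's min-max comparison (which requires equal variances, hence the auxiliary equalizer $\xi\twonorm{\vct{u}}$), while carefully tracking the $O(1)$ fluctuation of $\twonorm{\vct{h}}$ about $b_m$ through both sides is what produces the constants $1/8$ and $4$ in place of the more obvious $1/2$ and $2$.
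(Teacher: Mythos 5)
Your proposal is correct in substance but reaches the result by a different route than the paper. You bound the expectations $\E[F^{\pm}]$ via comparison inequalities (Sudakov--Fernique for the $\sup$--$\sup$ side, Gordon's min--max comparison for the $\sup$--$\inf$ side) and then invoke Borell--TIS concentration of the Lipschitz functionals $F^{\pm}$ around their means. The paper instead works at the level of probabilities throughout: for the upper deviation it applies the probability form of Slepian's inequality, $\Pro\{\bigcup_t [X_t>\eta_t]\}\le\Pro\{\bigcup_t[Y_t>\eta_t]\}$ with thresholds $\eta_{\vct{u},\vct{v}}=b_m\twonorm{\vct{u}}+\eta$, and then controls the tail of the auxiliary process $Y_{\vct{u},\vct{v}}=\vct{g}^*\vct{u}+\twonorm{\vct{u}}\vct{a}^*\vct{v}$ directly by splitting the event into $\{\sup_{\vct{u}}\twonorm{\vct{u}}(\twonorm{\vct{a}}-b_m)>\eta/2\}$ and $\{\sup_{\vct{u}}\vct{g}^*\vct{u}>\omega(\mathcal{T})+\eta/2\}$; for the lower deviation it uses the probability form of Gordon's comparison (Lemma 5.1 of \cite{OymLAS}) and concentration of the $\sqrt{2}\sigma(\mathcal{T})$-Lipschitz functional $\min_{\vct{u}}(\twonorm{\vct{h}}-b_m)\twonorm{\vct{u}}-\vct{u}^*\vct{g}$. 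The auxiliary process and the increment computation are identical in the two arguments; the difference is only where the concentration step is applied, and your route buys a cleaner modular structure at the price of an extra expectation term that must be disposed of.

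That extra term is the one step you must make explicit. With $t=\eta/2$ your argument yields $\P\left(F^{+}>\omega(\mathcal{T})+\sigma(\mathcal{T})c+\eta/2\right)\le e^{-\eta^2/(8\sigma^2(\mathcal{T}))}$ where $c:=\E\left[\left|\twonorm{\vct{h}}-b_m\right|\right]$, and this implies the stated conclusion only when $\eta\ge 2\sigma(\mathcal{T})c$; the lemma imposes no such restriction on $\eta$, so as written a range of $\eta$ is not covered. The gap is closable but requires the quantitative bound $c\le\sqrt{\Var(\twonorm{\vct{h}})}\le 1$, not merely $c=O(1)$: when $\eta<2\sigma(\mathcal{T})c\le 2\sigma(\mathcal{T})$ one has $4e^{-\eta^2/(8\sigma^2(\mathcal{T}))}\ge 4e^{-1/2}>1$, so the claimed probability bound is vacuous and the statement holds trivially, while for $\eta\ge 2\sigma(\mathcal{T})c$ your estimate applies. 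If $c$ could exceed roughly $1.18$ this case split would leave an uncovered window, so the constant matters. The paper's probability-level argument avoids this bookkeeping entirely because its $\eta/2$--$\eta/2$ split is applied to the tail event itself rather than to the expectation.
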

Finally, we shall make use of the following lemma proven in Appendix \ref{GammaComplemP}.
\begin{lemma}\label{GammaComplem}
Define $\phi(t)=\sqrt{2}\frac{\Gamma(\frac{t+1}{2})}{\Gamma(\frac{t}{2})}\approx \sqrt{t}$. Then for $0\le m_0\le m$ we have
\begin{align*}
\frac{\phi(m_0)}{\sqrt{m_0}}\le \frac{\phi(m)}{\sqrt{m}}.
\end{align*}
\end{lemma}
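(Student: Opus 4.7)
The plan is to reduce the claimed monotonicity of $\phi(t)/\sqrt{t}$ to a one-variable digamma inequality. Since $\phi(t) > 0$ for $t > 0$, it suffices to show that $\phi(t)^2/t$ is non-decreasing on $(0,\infty)$; the claim then follows by applying this to $t = m_0$ and $t = m$ (with the boundary case $m_0 = 0$ handled by continuity, since $\phi(t)/\sqrt{t} \to 0$ as $t \to 0^+$). Taking the logarithmic derivative yields
\[
\frac{d}{dt}\log\!\left(\frac{\phi(t)^2}{t}\right) = \psi\!\left(\tfrac{t+1}{2}\right) - \psi\!\left(\tfrac{t}{2}\right) - \frac{1}{t},
\]
where $\psi = (\log\Gamma)'$ is the digamma function, so the lemma reduces to verifying
\[
\psi\!\left(\tfrac{t+1}{2}\right) - \psi\!\left(\tfrac{t}{2}\right) \ge \frac{1}{t} \qquad \text{for every } t > 0.
\]

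To establish this inequality I would use the classical integral representation
\[
\psi(y) = -\gamma + \int_0^1 \frac{1 - x^{y-1}}{1 - x}\,dx, \qquad y > 0,
\]
obtained by combining the series $\psi(y) = -\gamma + \sum_{n\ge 0}[1/(n+1) - 1/(n+y)]$ with the identity $1/(n+y) = \int_0^1 x^{n+y-1}\,dx$ and summing the resulting geometric series in $x$. Subtracting the two instances $y = (t+1)/2$ and $y = t/2$ and exploiting the factorization $1 - x = (1 - \sqrt{x})(1 + \sqrt{x})$ yields
\[
\psi\!\left(\tfrac{t+1}{2}\right) - \psi\!\left(\tfrac{t}{2}\right) = \int_0^1 \frac{x^{t/2 - 1}(1 - \sqrt{x})}{1 - x}\,dx = \int_0^1 \frac{x^{t/2 - 1}}{1 + \sqrt{x}}\,dx.
\]
Since $1 + \sqrt{x} \le 2$ on $[0,1]$, the integrand is bounded below by $\tfrac{1}{2} x^{t/2-1}$, so
\[
\psi\!\left(\tfrac{t+1}{2}\right) - \psi\!\left(\tfrac{t}{2}\right) \ge \frac{1}{2}\int_0^1 x^{t/2 - 1}\,dx = \frac{1}{t},
\]
which is exactly the required bound.

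The main obstacle in this plan is spotting the correct integral representation: one needs a formula for $\psi$ in which the digamma difference simplifies through the algebraic cancellation $(1 - \sqrt{x})/(1 - x) = 1/(1 + \sqrt{x})$, producing an integrand with an explicit positive lower bound. Once this algebraic manipulation is in place, the remaining work is essentially the one-line estimate $1 + \sqrt{x} \le 2$ on $[0,1]$, and no finer asymptotic properties of $\Gamma$ or $\psi$ (e.g.\ Gautschi- or Wendel-type inequalities, duplication formulas, etc.) are required. An alternative would be a purely discrete induction based on the identity $\phi(t)\phi(t+1) = t$ (which follows from $\Gamma((t+2)/2) = (t/2)\Gamma(t/2)$), but the digamma route has the advantage of covering the continuous parameter range at once.
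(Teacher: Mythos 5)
Your proof is correct and follows essentially the same route as the paper: reduce the monotonicity of $\phi(t)/\sqrt{t}$ to the digamma inequality $\psi(\tfrac{t+1}{2})-\psi(\tfrac{t}{2})\ge \tfrac{1}{t}$ via the logarithmic derivative, express the difference as an integral, and bound the integrand using $\tfrac{1}{1+\sqrt{x}}\ge\tfrac12$. The paper uses the exponential-kernel representation of $\psi$ on $[0,\infty)$ rather than your $[0,1]$ power form, but the two integrals coincide under the substitution $x=e^{-2y}$, so the arguments are the same up to a change of variables.
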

\subsubsection{Controlling the noise amplification factor $\xi_\mu(\mathbf{A})$}
\label{cntrlnamp}
In this section we shall show how to bound the noise amplification factor. Note that for a fixed vector $\vct{w}$, $\mtx{A}^*\frac{\vct{w}}{\twonorm{\vct{w}}}$ has the same distribution as a random Gaussian vector $\vct{g}\in\R^n$ with i.i.d. $\mathcal{N}(0,1)$ entries. Therefore,
\begin{align*}
\frac{\xi_\mu(\mtx{A})}{\mu}=\sup_{\vct{v}\in\mathcal{C}\cap\mathcal{B}^n}\text{ }\vct{g}^*\vct{v}.
\end{align*}
Applying standard results on concentration of supremum of Gaussian processes
\begin{align*}
\sup_{\vct{v}\in\mathcal{C}\cap\mathbb{S}^{n-1}}\text{ }\vct{g}^*\vct{v}\le \omega\left(\mathcal{C}\cap\mathbb{S}^{n-1}\right)+\eta\le\sqrt{m_0},
\end{align*}
holds with probability at least $1-e^{-\frac{\eta^2}{2}}$.
\subsubsection{Controlling the convergence rate $\rho(\mu)$ with $\mu\approx\frac{1}{m}$ (Proof of Theorem \ref{first prop})}
\label{secprooffirstprop}

In this section we wish to bound the convergence rate $\rho(\mu)$ as stated in Theorem \ref{first prop}. More specifically, let $m_0=\phi^{-1}(\omega+\eta)$ with $\phi(t)=\sqrt{2}\frac{\Gamma(\frac{t+1}{2})}{\Gamma(\frac{t}{2})}\approx \sqrt{t}$ be the minimum number of measurements required by the phase transition curve (please see Definition \ref{PTcurve} for a reminder). Then we will show that as long as
\begin{align}
\label{nummeaslinpf}
m>8\kappa_f^2 m_0,
\end{align}
there is an event of probability at least $1-8e^{-\frac{\eta^2}{8}}$, such that on this event for $\mu=\frac{1}{b_m^2}:=\frac{1}{\left(\phi(m)\right)^2}$
\begin{align}
\label{rhoboundproof}
\rho(\mu)\leq \sqrt{8\frac{m_0}{m}}.
\end{align}
To this aim define the sets
\begin{align*}
\mathcal{T}_{-}=\mathcal{C}\cap\mathbb{S}^{n-1}-\mathcal{C}\cap\mathbb{S}^{n-1}\quad\text{and}\quad\mathcal{T}_{+}=\mathcal{C}\cap\mathbb{S}^{n-1}+\mathcal{C}\cap\mathbb{S}^{n-1},
\end{align*}
and note that
\begin{align*}
\underset{\vct{v}\in\mathcal{T}_{-}}{\sup}\twonorm{\vct{v}}\le 2\quad\text{and}\quad\underset{\vct{v}\in\mathcal{T}_{+}}{\sup}\twonorm{\vct{v}}\le 2.
\end{align*}
Furthermore,
\begin{align*}
\omega\left(\mathcal{T}_{-}\right)=\E\big[\sup_{\vct{u},\vct{v}\in\mathcal{C}\cap\mathcal{B}^n} \vct{g}^T(\vct{u}-\vct{v})\big]\le \E[\sup_{\vct{u}\in\mathcal{C}\cap\mathcal{B}^n} \vct{g}^T\vct{u}+\sup_{\vct{v}\in-\mathcal{C}\cap\mathbb{S}^{n-1}} \vct{g}^T\vct{v}\big]=2\omega(\mathcal{C}\cap\mathbb{S}^{n-1}),
\end{align*}
and similarly, $\omega(\mathcal{T}_{+})\le 2\omega(\mathcal{C}\cap\mathbb{S}^{n-1})$. For ease of presentation we shall use the short hand $\omega:=\omega(\mathcal{C}\cap\mathbb{S}^{n-1})$. Thus, applying Lemma \ref{GTtypelem} on $\mathcal{T}_{-}$ and $\mathcal{T}_{+}$ we conclude that for all $\vct{u},\vct{v}\in\mathcal{C}\cap\mathbb{S}^{n-1}$
\begin{align}
\label{temp22pf1}
\frac{1}{b_m^2}\tn{\A(\ub-\vb)}^2&\leq \left(\tn{\ub-\vb}+2\frac{\left(\omega+\eta\right)}{b_m}\right)^2
\end{align}
and
\begin{align}
\label{temp22pf2}
\frac{1}{b_m^2}\tn{\A(\ub+\vb)}^2&\ge \left(\max\big\{\tn{\ub+\vb}-2\frac{\left(\omega+\eta\right)}{b_m},0\big\}\right)^2
\end{align}
hold with probability at least $1-8e^{-\frac{\eta^2}{8}}$.

If $\twonorm{\vct{u}+\vct{v}}\ge 2\frac{(\omega+\eta)}{b_m}$, then using equations \eqref{temp22pf1} and \eqref{temp22pf2} we can conclude that for all $\vct{u},\vct{v}\in\mathcal{C}\cap\mathbb{S}^{n-1}$
\begin{align}
\label{temp22pf3}
\ub^*(\mtx{I}-\frac{\A^*\A}{b_m^2})\vb&=\frac{1}{4}\bigg[(\ub+\vb)^*(\Iden-\frac{\A^*\A}{b_m^2})(\ub+\vb)-(\ub-\vb)^*(\Iden-\frac{\A^*\A}{b_m^2})(\ub-\vb)\bigg]\nonumber\\
&=\frac{1}{4}\bigg[\twonorm{\vct{u}+\vct{v}}^2-\frac{1}{b_m^2}\twonorm{\mtx{A}(\vct{u}+\vct{v})}^2-\twonorm{\vct{u}-\vct{v}}^2+\frac{1}{b_m^2}\twonorm{\mtx{A}(\vct{u}-\vct{v})}^2\bigg]\nonumber\\
&\leq \frac{(\omega+\eta)}{b_m}(\tn{\ub+\vb}+\tn{\ub-\vb})\nonumber\\
&\le 2\sqrt{2}\frac{(\omega+\eta)}{b_m},
\end{align}
 $\tn{\ub+\vb}+\tn{\ub-\vb}\le 2\sqrt{2}$ which follows from the fact that $\twonorm{\vct{u}+\vct{v}}^2+\twonorm{\vct{u}-\vct{v}}^2=4$.

If $\twonorm{\vct{u}+\vct{v}}< 2\frac{(\omega+\eta)}{b_m}$, then utilizing equations \eqref{temp22pf1} and \eqref{temp22pf2} again we can conclude that for all $\vct{u},\vct{v}\in\mathcal{C}\cap\mathbb{S}^{n-1}$
\begin{align}
\label{temp22pf4}
\ub^*(\mtx{I}-\frac{\A^*\A}{b_m^2})\vb&=\frac{1}{4}\bigg[(\ub+\vb)^*(\Iden-\frac{\A^*\A}{b_m^2})(\ub+\vb)-(\ub-\vb)^*(\Iden-\frac{\A^*\A}{b_m^2})(\ub-\vb)\bigg]\nonumber\\
&=\frac{1}{4}\bigg[\twonorm{\vct{u}+\vct{v}}^2-\frac{1}{b_m^2}\twonorm{\mtx{A}(\vct{u}+\vct{v})}^2-\twonorm{\vct{u}-\vct{v}}^2+\frac{1}{b_m^2}\twonorm{\mtx{A}(\vct{u}-\vct{v})}^2\bigg]\nonumber\\
&\overset{(a)}{\le}\frac{1}{4}\bigg[\twonorm{\vct{u}+\vct{v}}^2-\twonorm{\vct{u}-\vct{v}}^2+\frac{1}{b_m^2}\twonorm{\mtx{A}(\vct{u}-\vct{v})}^2\bigg]\nonumber\\
&\overset{(b)}{\le}\frac{(\omega+\eta)^2}{b_m^2}+\frac{1}{4}\left(\frac{1}{b_m^2}\twonorm{\mtx{A}(\vct{u}-\vct{v})}^2-\twonorm{\vct{u}-\vct{v}}^2\right)\nonumber\\
&\leq 2\frac{(\omega+\eta)^2}{b_m^2}+\frac{(\omega+\eta)}{b_m}\tn{\ub-\vb}\nonumber\\
&\le2\frac{(\omega+\eta)^2}{b_m^2}+2\frac{(\omega+\eta)}{b_m}\nonumber\\
&\overset{(c)}{\le} 2\sqrt{2}\frac{(\omega+\eta)}{b_m},
\end{align}
where (a) follows from equations \eqref{temp22pf1} and \eqref{temp22pf2}, (b) from the fact that $\twonorm{\vct{u}+\vct{v}}< 2\frac{(\omega+\eta)}{b_m}$ and (c) holds as long as $b_m\ge 2\sqrt{2}(\omega+\eta)$. 

Now combining \eqref{temp22pf3} and \eqref{temp22pf3} for $\mu=1/b_m^2$ we have
\begin{align*}
\rho(\mu)=\sup_{\vct{u},\vct{v}\in\mathcal{C}\cap\mathbb{S}^{n-1}}\vct{u}^*\left(\mtx{I}-\frac{1}{b_m^2}\mtx{A}^*\mtx{A}\right)\vct{v}\le \sqrt{8}\frac{(\omega+\eta)}{b_m},
\end{align*}
as long as $\sqrt{8}\frac{(\omega+\eta)}{b_m}\le 1$. The proof of \eqref{rhoboundproof} and the theorem is complete by noting that
\begin{align*}
\frac{\omega+\eta}{b_m}\le \sqrt{\frac{m_0}{m}},
\end{align*}
which follows from Lemma \ref{GammaComplem} since by definition $b_m=\phi(m)$ and $\omega+\eta=\phi(m_0)$.


\subsubsection{Controlling the convergence rate $\rho(\mu)$ with $\mu\approx\frac{1}{m+n}$ (Proof of Theorem \ref{ALTthm})}
\label{ExactPT}
In this section we wish to bound the convergence rate $\rho(\mu)$ for convex functions $f$ as stated in Theorem \ref{ALTthm}. More specifically, let $m_0=\phi^{-1}(\omega+\eta)$ with $\phi(t)=\sqrt{2}\frac{\Gamma(\frac{t+1}{2})}{\Gamma(\frac{t}{2})}\approx \sqrt{t}$ be the minimum number of measurements required by the phase transition curve (please see Definition \ref{PTcurve} for a reminder). Then we will show that as long as
\begin{align}
\label{nummeaslinpf2}
m>m_0,
\end{align}
there is an event of probability at least $1-e^{-\frac{\eta^2}{2}}-e^{-\gamma n}$, such that on this event for $\mu=\frac{0.99}{\left(\sqrt{m}+\sqrt{n}\right)^2}$
\begin{align}
\label{rhoboundproof2}
\rho(\mu)\leq 1-\frac{0.3}{m+n}\left(\sqrt{m}-\sqrt{m_0}\right)^2.
\end{align}
To this aim first note that by standard results on concentration of spectral norm of random matrices\footnote{This also follows from Lemma \ref{GTescapelem} by using $\mathcal{C}=\R^n$ and $\eta=0.005\sqrt{n}$.} for all $\vct{u}\in\R^n$
\begin{align*}
\twonorm{\mtx{A}\vct{u}}^2\le \frac{100}{99}\left(\sqrt{m}+\sqrt{n}\right)^2\twonorm{\vct{u}}^2,
\end{align*}
holds with probability at least $1-e^{-\gamma n}$ with $\gamma$ a fixed numerical constant. The latter statement is equivalent to the matrix $\mtx{I}-\mu\mtx{A}^*\mtx{A}$ being Positive Semi-Definite (PSD) for $\mu=\frac{0.99}{\left(\sqrt{m}+\sqrt{n}\right)^2}$. Now applying the generalized Cauchy Schwarz inequality for the PSD matrix $\mtx{I}-\mu\mtx{A}^*\mtx{A}$ we have
\begin{align}
\label{tempproofPT}
\rho(\mu)=&\sup_{\vct{u},\vct{v}\in\mathcal{C}\cap\mathbb{S}^{n-1}}\vct{u}^*(\mtx{I}-\mu\mtx{A}^*\mtx{A})\vct{v}\nonumber\\
\le&\sup_{\vct{u},\vct{v}\in\mathcal{C}\cap\mathbb{S}^{n-1}}\sqrt{\left(\vct{u}^*(\mtx{I}-\mu\mtx{A}^*\mtx{A})\vct{u}\right)\cdot\left(\vct{v}^*(\mtx{I}-\mu\mtx{A}^*\mtx{A})\vct{v}\right)}\nonumber\\
\le&\sup_{\vct{u}\in\mathcal{C}\cap\mathbb{S}^{n-1}}\vct{u}^*(\mtx{I}-\mu\mtx{A}^*\mtx{A})\vct{u}\nonumber\\
=&1-\mu\left(\inf_{\vct{u}\in\mathcal{C}\cap\mathbb{S}^{n-1}}\twonorm{\mtx{A}\vct{u}}^2\right).
\end{align}
Furthermore, by Lemma \ref{GTescapelem}
\begin{align*}
\frac{1}{b_m}\left(\inf_{\vct{u}\in\mathcal{C}\cap\mathbb{S}^{n-1}}\twonorm{\mtx{A}\vct{u}}\right)\ge 1-\frac{(\omega+\eta)}{b_m},
\end{align*}
holds with probability at least $1-e^{-\frac{\eta^2}{2}}$. Plugging the latter into \eqref{tempproofPT} and using Lemma \ref{GammaComplem} we conclude that for $\mu=\frac{0.99}{\left(\sqrt{m}+\sqrt{n}\right)^2}$
\begin{align*}
\rho(\mu)\le&1-\mu b_m^2\left(1-\frac{(\omega+\eta)}{b_m}\right)^2\\
\le&1-\mu b_m^2\left(1-\sqrt{\frac{m_0}{m}}\right)^2\\
=&1-\frac{0.99b_m^2}{m\left(\sqrt{m}+\sqrt{n}\right)^2}\left(\sqrt{m}-\sqrt{m_0}\right)^2\\
=&1-\frac{\frac{0.99}{2}}{m+n}\cdot\frac{m}{m+1}\cdot\frac{b_m^2}{\frac{m^2}{m+1}}\cdot\frac{2(m+n)}{\left(\sqrt{m}+\sqrt{n}\right)^2}\left(\sqrt{m}-\sqrt{m_0}\right)^2\\
\le &1-\frac{0.3}{m+n}\left(\sqrt{m}-\sqrt{m_0}\right)^2.
\end{align*}
The last inequality follows by using the fact that for $m\ge 2$, $m/(m+1)\ge 2/3$ together with
\begin{align*}
b_m\ge \frac{m}{\sqrt{m+1}}\quad\text{and}\quad 2(m+n)\ge\left(\sqrt{m}+\sqrt{n}\right)^2.
\end{align*}

\subsubsection{Controlling the convergence rate $\rho(\mu)$ with a structure dependent choice of $\mu$ (Proof of Theorem \ref{PGthm})}
\label{pfstruct}
In this section we wish to bound the convergence rate $\rho(\mu)$ for convex functions as stated in Theorem \ref{PGthm}. More specifically, let $m_0=\phi^{-1}(\omega+\eta)$ with $\phi(t)=\sqrt{2}\frac{\Gamma(\frac{t+1}{2})}{\Gamma(\frac{t}{2})}\approx \sqrt{t}$ be the minimum number of measurements required by the phase transition curve (please see Definition \ref{PTcurve} for a reminder). Then we will show that as long as
\begin{align}
\label{PGDpf1}
m>4 m_0,
\end{align}
there is an event of probability at least $1-e^{-\frac{\eta^2}{2}}-4e^{-\frac{\eta^2}{8}}$, such that on this event for 
\begin{align*}
\mu=\frac{m}{b_m^2}\frac{2-\sqrt{2m_0}\sqrt[4]{m}\cdot\left(\sqrt{m}-\sqrt{m_0}\right)^{-\frac{3}{2}}}{\left(m_0-2\sqrt{mm_0}+2m\right)},
\end{align*}
we have
\begin{align}
\label{rhoboundproof}
\rho(\mu)\leq 1-\psi(\frac{m_0}{m}),
\end{align}
where 
\begin{align*}
\psi(\gamma)=2\frac{\left(\sqrt{2}(1-\sqrt{\gamma})^{1.5}-\sqrt{\gamma}\right)^2}{\left(\gamma-2\sqrt{\gamma}+2\right)^2}.
\end{align*}
Throughout this section we shall use the shorthand $\mtx{M}_\mu=\mtx{I}-\mu\mtx{A}^*\mtx{A}$.
First note that we have
\begin{align*}
\vct{u}^*\mtx{M}_\mu\vct{v}=\frac{1}{4}\left(\vct{u}+\vct{v}\right)^*\mtx{M}_\mu\left(\vct{u}+\vct{v}\right)-\frac{1}{4}\left(\vct{u}-\vct{v}\right)^*\mtx{M}_\mu\left(\vct{u}-\vct{v}\right)
\end{align*}
Also note that $\vct{u}+\vct{v}\in\mathcal{C}$. Therefore, by Gordon's lemma (\eqref{GordonL} in Lemma \ref{GTescapelem}) with probability at least $1-e^{-\frac{\eta^2}{2}}$
\begin{align*}
\frac{1}{4}\left(\vct{u}+\vct{v}\right)^*\mtx{M}_\mu\left(\vct{u}+\vct{v}\right)\le \frac{\twonorm{\vct{u}+\vct{v}}^2}{4}\left(1-\mu\left(b_m-\omega-\eta\right)^2\right).
\end{align*}
Now define
\begin{align*}
\mathcal{T}=\Big\{(\vct{u}-\vct{v}):\text{ }\vct{u},\vct{v}\in\mathcal{C}\cap\mathbb{S}^{n-1}\Big\}.
\end{align*}
Note that 
\begin{align*}
\sigma(\mathcal{T})=\underset{\vct{w}\in\mathcal{T}}{\max}\text{ }\twonorm{\vct{w}}=\underset{\vct{u},\vct{v}\in\mathcal{C}\cap\mathbb{S}^{n-1}}{\max}\text{ }\twonorm{\vct{u}-\vct{v}}\le 2.
\end{align*}
Applying Lemma \ref{GTtypelem} for $\vct{w}\in\mathcal{T}$ with probability at least $1-2e^{-\frac{\eta^2}{8}}$ have
\begin{align*}
\twonorm{\mtx{A}\vct{w}}^2\le (\twonorm{\vct{w}}b_m+\omega(\mathcal{T})+2\eta)^2\quad\Rightarrow\quad\twonorm{\mtx{A}(\vct{u}-\vct{v})}^2\le\left(\twonorm{\vct{u}-\vct{v}}b_m+\omega(\mathcal{T})+2\eta\right)^2.
\end{align*}
Note that
\begin{align*}
\omega(\mathcal{T})=\E[\underset{\vct{w}\in\mathcal{T}}{\sup}\text{ }\vct{g}^*\vct{w}]=\E\Big[\underset{\vct{u},\vct{v}\in\mathcal{C}\cap\mathbb{S}^{n-1}}{\sup}\text{ }\vct{g}^*\left(\vct{u}-\vct{v}\right)\Big]\le\E[\underset{\vct{u}\in\mathcal{C}\cap\mathbb{S}^{n-1}}{\sup}\text{ }\vct{g}^*\vct{u}]+\E[\underset{\vct{v}\in\mathcal{C}\cap\mathbb{S}^{n-1}}{\sup}\text{ }\vct{g}^*\vct{v}]=2\omega(\mathcal{C}\cap \mathbb{S}^{n-1}).
\end{align*}
Thus,
\begin{align*}
\twonorm{\mtx{A}(\vct{u}-\vct{v})}^2\le\left(b_m\twonorm{\vct{u}-\vct{v}}+2(\omega+\eta)\right)^2,
\end{align*}
holds with probability at least $1-4e^{-\frac{\eta^2}{8}}$.
Therefore,
\begin{align*}
(\vct{u}-\vct{v})^*\mtx{M}_\mu(\vct{u}-\vct{v})\ge&\twonorm{\vct{u}-\vct{v}}^2-\mu\left(\twonorm{\vct{u}-\vct{v}}b_m+2(\omega+\eta)\right)^2\\
=&\left(1-\mu b_m^2\right)\twonorm{\vct{u}-\vct{v}}^2-4\mu b_m(\omega+\eta)\twonorm{\vct{u}-\vct{v}}-4\mu(\omega+\eta)^2.
\end{align*}
As a result using the fact that $\twonorm{\vct{u}+\vct{v}}^2+\twonorm{\vct{u}-\vct{v}}^2=4$
\begin{align*}
\vct{u}^*\mtx{M}_\mu\vct{v}\le& \frac{\twonorm{\vct{u}+\vct{v}}^2}{4}\left(1-\mu\left(b_m-\omega-\eta\right)^2\right)-\frac{1}{4}(1-\mu b_m^2)\twonorm{\vct{u}-\vct{v}}^2+\mu b_m(\omega+\eta)\twonorm{\vct{u}-\vct{v}}+\mu(\omega+\eta)^2\\
=& \frac{4-\twonorm{\vct{u}-\vct{v}}^2}{4}\left(1-\mu\left(b_m-\omega-\eta\right)^2\right)-\frac{1}{4}(1-\mu b_m^2)\twonorm{\vct{u}-\vct{v}}^2+\mu b_m(\omega+\eta)\twonorm{\vct{u}-\vct{v}}+\mu(\omega+\eta)^2\\
=&-\frac{\twonorm{\vct{u}-\vct{v}}^2}{4}\left(2-\mu\left(b_m-\omega-\eta\right)^2-\mu b_m^2\right)+\mu b_m(\omega+\eta)\twonorm{\vct{u}-\vct{v}}+\mu(\omega+\eta)^2+\left(1-\mu\left(b_m-\omega-\eta\right)^2\right)\\
:=&-\alpha\twonorm{\vct{u}-\vct{v}}^2+\beta\twonorm{\vct{u}-\vct{v}}+\gamma\\
\le&\frac{\beta^2}{4\alpha}+\gamma\\
=&\frac{\mu^2b_m^2(\omega+\eta)^2}{\left(2-\mu\left(b_m-\omega-\eta\right)^2-\mu b_m^2\right)}+\mu(\omega+\eta)^2+\left(1-\mu\left(b_m-\omega-\eta\right)^2\right)\\
=&\frac{\mu^2b_m^2(\omega+\eta)^2}{\left(2-\mu[(b_m-\omega-\eta)^2+b_m^2]\right)}+\mu[(\omega+\eta)^2-\left(b_m-\omega-\eta\right)^2]+1.
\end{align*}
We note that there is an additional constraint which we need to impose here which is
\begin{align*}
\frac{\beta}{2\alpha}\le2\quad\Leftrightarrow\quad\frac{\mu b_m(\omega+\eta)}{\left(2-\mu\left(b_m-\omega-\eta\right)^2-\mu b_m^2\right)}\le 1.
\end{align*}
Using $\nu:=\frac{\omega+\eta}{b_m}$ we have
\begin{align}
\label{mutempbndt}
\vct{u}^*\mtx{M}_\mu\vct{v}\le\frac{\mu^2 b_m^4 \nu^2}{2-\mu b_m^2\left(\nu^2-2\nu+2\right)}+\mu b_m^2\left(2\nu-1\right)+1:=\tilde{g}(\mu b_m^2,\nu),
\end{align}
which holds as long as
\begin{align*}
\frac{\mu b_m^2 \nu}{2-\mu b_m^2\left(\nu^2-2\nu+2\right)}\le 1\quad\Leftrightarrow\quad \mu b_m^2\le \frac{2}{(\nu^2-\nu+2)}.
\end{align*}
Now note that
\begin{align*}
\frac{\partial }{\partial \nu}g(x,\nu)=2x^2\nu\frac{\left(2-x\left(\nu^2-2\nu+2\right)\right)+x\nu(\nu-1)}{\left(2-x\left(\nu^2-2\nu+2\right)\right)^2}+2x.
\end{align*}
The latter is positive when $x\ge 0$, $0\le\nu\le 1$, and $x\le\frac{2}{\nu^2-2\nu+2}$. This in turn implies that $g(\mu b_m^2,\nu)$ in $\nu$. Now define $s:=\sqrt{\frac{m_0}{m}}$ and note that by Lemma \ref{GammaComplem}, $\nu:=\frac{\omega+\eta}{b_m}\le \sqrt{\frac{m_0}{m}}:=s$. Since $g$ is increasing in its second argument we have $g(\mu b_m^2,\nu)\le g(\mu b_m^2,s)$. Using this in \eqref{mutempbnd} we arrive at
\begin{align}
\label{mutempbnd}
\vct{u}^*\mtx{M}_\mu\vct{v}\le g(\mu b_m^2,s),
\end{align}
which holds as long as
\begin{align*}
\mu b_m^2\le \frac{2}{(s^2-s+2)}.
\end{align*}
We again remind the reader that $s$ was defined as $s:=\sqrt{\frac{m_0}{m}}$. Further, note that
\begin{align*}
\frac{\partial }{\partial x}g(x,s)=\frac{\left(s^2-2s+2\right)s^2x^2}{\left(2-\left(s^2-2s+2\right)x\right)^2}+\frac{2s^2x}{2-\left(s^2-2s+2\right)x}+2s-1,
\end{align*}
and
\begin{align*}
\frac{\partial^2 }{\partial x^2}g(x,s)=-\frac{8s^2}{\left(\left(s^2-2s+2\right)x-2\right)^3}.
\end{align*}
Thus
\begin{align*}
\frac{\partial }{\partial x}g(x,s)=0\quad\Rightarrow\quad x=\frac{2\pm \sqrt{2}s(1-s)^{-\frac{3}{2}}}{\left(s^2-2s+2\right)}.
\end{align*}
Also note that at $x=\frac{2+ \sqrt{2}s(1-s)^{-\frac{3}{2}}}{\left(s^2-2s+2\right)}$, $\frac{\partial^2 }{\partial x^2}g(x,s)<0$ and at $x=\frac{2- \sqrt{2}s(1-s)^{-\frac{3}{2}}}{\left(s^2-2s+2\right)}$, $\frac{\partial^2 }{\partial x^2}g(x,s)>0$. Thus $g(\mu b_m^2,s)$ as a function of $s$, is minimized at either of the following two points
\begin{align*}
\mu b_m^2=\frac{2- \sqrt{2}s(1-s)^{-\frac{3}{2}}}{\left(s^2-2s+2\right)}\quad\text{or}\quad \mu b_m^2=\frac{2}{s^2-s+2}.
\end{align*}
The value of $g(\mu b_m^2,s)$ at these two points is given by
\begin{align*}
g_1(s)=&g\left(\frac{2-\sqrt{2}s(1-s)^{-\frac{3}{2}}}{\left(s^2-2s+2\right)},s\right)=1-2\frac{\left(\sqrt{2}(1-s)^{1.5}-s\right)^2}{\left(s^2-2s+2\right)^2},\\
g_2(s)=&g\left(\frac{2}{s^2-s+2},s\right)=\frac{s(s+5)}{s^2-s+2}.
\end{align*}

It is easy to check that for $0\le s\le 1$ we have $g_1(s)\le g_2(s)$. We note that $\frac{2-\sqrt{2}s(1-s)^{-\frac{3}{2}}}{\left(s^2-2s+2\right)}$ is non-negative for $s\le0.5$. Now using
 \begin{align*}
\mu=&\frac{m}{b_m^2}\frac{2-\sqrt{2m_0}\sqrt[4]{m}\cdot\left(\sqrt{m}-\sqrt{m_0}\right)^{-\frac{3}{2}}}{\left(m_0-2\sqrt{mm_0}+2m\right)}\quad\Leftrightarrow\quad\mu b_m^2=\frac{2-\sqrt{2}s(1-s)^{-\frac{3}{2}}}{\left(s^2-2s+2\right)},
\end{align*}
in \eqref{mutempbnd} we have
\begin{align*}
\vct{u}^*\mtx{M}_\mu\vct{v}\le1-2\frac{\left(\sqrt{2}(1-s)^{1.5}-s\right)^2}{\left(s^2-2s+2\right)^2}=1-2\frac{\left(\sqrt{2}(1-\sqrt{\frac{m_0}{m}})^{1.5}-\sqrt{\frac{m_0}{m}}\right)^2}{\left(\frac{m_0}{m}-2\sqrt{\frac{m_0}{m}}+2\right)^2}:=1-\psi(\frac{m_0}{m}).
\end{align*}
Finally, note that for $m\ge4m_0$ we have $\psi(\frac{m_0}{m})\le1$ and $\mu\ge 0$, concluding the proof.


\subsubsection{Lower bounds on convergence rate (Proof of Theorem \ref{thm:converse rate})}
\label{lowbnd}
To obtain lower bounds on the convergence rate, we make use of \cite[Lemma F.1]{OymProx} stated below, which relates the projection onto a set and the projection onto its conic approximation around the origin. 
\begin{lemma}\label{cone cool} Let $\mathcal{D}\in\R^n$ be a closed and convex set containing $\vct{0}$ and let $\mathcal{C}=$cone$(\mathcal{D})$ be its conic hull. Given any $0<\alpha,\epsilon\le 1$, there exists a positive constant $\delta:=\delta(\alpha,\epsilon,\mathcal{D})$ such that, for all vectors $\vct{w}$ obeying 
\begin{align}
\label{condlem}
\twonorm{\Pc_{\mathcal{C}}(\w)}\geq \alpha\twonorm{\vct{w}},
\end{align}
then
\begin{align*}
\label{main state}
\frac{\twonorm{\Pc_{\mathcal{D}}(\w)}}{\twonorm{\Pc_{\mathcal{C}}(\w)}}\geq 1-\eps,\end{align*}
holds for all $\twonorm{\vct{w}}\le\delta$.
\end{lemma}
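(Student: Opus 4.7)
The plan is to argue by contradiction via a rescaling and compactness argument combined with Painlev\'e--Kuratowski set convergence. Suppose the conclusion fails for some fixed $\alpha,\epsilon\in(0,1]$; then for every $k$ one can find $\vct{w}_k$ with $\twonorm{\vct{w}_k}\le 1/k$, $\twonorm{\mathcal{P}_{\mathcal{C}}(\vct{w}_k)}\ge\alpha\twonorm{\vct{w}_k}$, yet $\twonorm{\mathcal{P}_{\mathcal{D}}(\vct{w}_k)}/\twonorm{\mathcal{P}_{\mathcal{C}}(\vct{w}_k)}<1-\epsilon$. I would normalize $\tilde{\vct{w}}_k:=\vct{w}_k/\twonorm{\vct{w}_k}\in\mathbb{S}^{n-1}$ and, by compactness of the sphere, extract a subsequence with $\tilde{\vct{w}}_k\to\tilde{\vct{w}}_\infty$; the goal will be to show that along this subsequence the ratio actually tends to $1$, producing the desired contradiction.

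The next step is to identify the limit of each projection under this rescaling. Because $\mathcal{C}$ is a cone, $\mathcal{P}_{\mathcal{C}}$ is positively homogeneous, so $\mathcal{P}_{\mathcal{C}}(\vct{w}_k)/\twonorm{\vct{w}_k}=\mathcal{P}_{\mathcal{C}}(\tilde{\vct{w}}_k)\to\mathcal{P}_{\overline{\mathcal{C}}}(\tilde{\vct{w}}_\infty)=:\vct{p}_\infty$ by continuity of the projection onto the closed convex set $\overline{\mathcal{C}}$; the hypothesis $\twonorm{\vct{p}_\infty}\ge\alpha>0$ ensures $\vct{p}_\infty\ne\vct{0}$, so the quotient is well defined in the limit. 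For the projection onto $\mathcal{D}$, I would invoke the scaling identity $\mathcal{P}_{\mathcal{D}}(\vct{w}_k)/\twonorm{\vct{w}_k}=\mathcal{P}_{\mathcal{D}/\twonorm{\vct{w}_k}}(\tilde{\vct{w}}_k)$ with $\mathcal{D}/t:=t^{-1}\mathcal{D}$, turning the problem into projecting a convergent sequence onto an expanding family of sets.

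The key geometric input is then the set convergence $\mathcal{D}/\twonorm{\vct{w}_k}\to\overline{\mathcal{C}}$ in the Painlev\'e--Kuratowski sense. The outer inclusion $\limsup\subset\overline{\mathcal{C}}$ is immediate since $\mathcal{D}/\twonorm{\vct{w}_k}\subset\mathcal{C}$ (because $\mathcal{C}$ is a cone containing $\mathcal{D}$). For $\liminf\supset\overline{\mathcal{C}}$, I would take any $\vct{c}\in\mathcal{C}$, decompose $\vct{c}=s\vct{d}$ with $s\ge 0$, $\vct{d}\in\mathcal{D}$, and observe that whenever $s\twonorm{\vct{w}_k}\le 1$, the convexity of $\mathcal{D}$ together with $\vct{0}\in\mathcal{D}$ gives $\twonorm{\vct{w}_k}\vct{c}=(s\twonorm{\vct{w}_k})\vct{d}\in\mathcal{D}$, hence $\vct{c}\in\mathcal{D}/\twonorm{\vct{w}_k}$ for all large $k$; closing up then yields $\overline{\mathcal{C}}\subset\liminf$. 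Invoking the standard continuity of metric projections under Painlev\'e--Kuratowski convergence of closed convex sets, I conclude $\mathcal{P}_{\mathcal{D}/\twonorm{\vct{w}_k}}(\tilde{\vct{w}}_k)\to\mathcal{P}_{\overline{\mathcal{C}}}(\tilde{\vct{w}}_\infty)=\vct{p}_\infty$, so $\twonorm{\mathcal{P}_{\mathcal{D}}(\vct{w}_k)}/\twonorm{\mathcal{P}_{\mathcal{C}}(\vct{w}_k)}\to\twonorm{\vct{p}_\infty}/\twonorm{\vct{p}_\infty}=1$, contradicting the strict bound $1-\epsilon<1$.

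The main obstacle is handling the potential non-closedness of $\mathcal{C}=\mathrm{cone}(\mathcal{D})$: one must justify that $\mathcal{P}_{\mathcal{C}}$ agrees with $\mathcal{P}_{\overline{\mathcal{C}}}$ at the relevant directions (so the homogeneity and continuity of the conic projection can be deployed), and one must quote or verify the continuity of the metric projection under Painlev\'e--Kuratowski convergence in a form that handles the unbounded case (the projected points are confined to the unit ball of $\mathcal{C}$, which makes the standard statement applicable). The non-degeneracy provided by the hypothesis $\twonorm{\mathcal{P}_{\mathcal{C}}(\vct{w})}\ge\alpha\twonorm{\vct{w}}$ is precisely what keeps $\vct{p}_\infty\ne\vct{0}$ in the limit so that the quotient in the conclusion is meaningful; without it one would be looking at $0/0$ and no contradiction would follow.
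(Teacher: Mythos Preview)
The paper does not actually supply its own proof of this lemma: it is quoted verbatim from \cite[Lemma F.1]{OymProx} and used as a black box in Section~\ref{lowbnd}. So there is no in-paper argument to compare against.

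Your contradiction/rescaling/Painlev\'e--Kuratowski outline is a correct and natural route to the result. The scaling identity $\mathcal{P}_{\mathcal{D}}(\vct{w}_k)/\twonorm{\vct{w}_k}=\mathcal{P}_{\twonorm{\vct{w}_k}^{-1}\mathcal{D}}(\tilde{\vct{w}}_k)$ is valid, and the set convergence $\twonorm{\vct{w}_k}^{-1}\mathcal{D}\to\overline{\mathcal{C}}$ is exactly the standard fact that the (closed) conic hull of a convex set containing the origin is its tangent cone at $\vct{0}$; your verification via convex combinations with $\vct{0}$ is the right one. In finite dimensions, Painlev\'e--Kuratowski convergence of closed convex sets coincides with Mosco convergence, and the metric projection is continuous under it (see, e.g., Rockafellar--Wets, \emph{Variational Analysis}, or Attouch's theorem), so your limiting step goes through; the uniform bound $\twonorm{\mathcal{P}_{\twonorm{\vct{w}_k}^{-1}\mathcal{D}}(\tilde{\vct{w}}_k)}\le 1$ (from $\vct{0}\in\mathcal{D}$ and the paper's Lemma~\ref{projcvxthm}) keeps everything compact. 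The hypothesis $\twonorm{\mathcal{P}_{\mathcal{C}}(\vct{w})}\ge\alpha\twonorm{\vct{w}}$ is used exactly where you say, to prevent a $0/0$ degeneration in the limit.

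The only genuine loose end is the one you already flag: the conic hull of a closed convex set need not be closed, so $\mathcal{P}_{\mathcal{C}}$ may fail to exist as stated. In the paper's applications $\mathcal{C}$ is always taken to be a closed cone (Definition~\ref{decsetcone}), so the cleanest fix is simply to read $\mathcal{C}$ as $\overline{\mathrm{cone}(\mathcal{D})}$ throughout; with that convention your argument is complete.
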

This lemma suggests that for any set containing the origin, around a sufficiently small neighborhood of $\vct{0}$, the conic hull is an arbitrarily good approximation of the original set. We shall apply this lemma with $\mathcal{D}$ and $\mathcal{C}$ equal to the descent set and tangent cone of $f$ at $\vct{x}$. We remind the reader that from the third line of \eqref{interpfthm12} we know that the error $\vct{h}_\tau=\vct{z}_\tau-\vct{x}$ in our iterates obeys the update
\beq
\vct{h}_{\tau+1}=\mathcal{P}_{\mathcal{D}}\left((\mtx{I}-\mu\mtx{A}^*\mtx{A})\vct{h}_\tau\right).\label{iterh}
\eeq
Now note that we have
\begin{align}
\label{mytemp1}
\twonorm{(\mtx{I}-\mu\mtx{A}^*\mtx{A})\vct{h}_\tau}\le\opnorm{\mtx{I}-\mu\mtx{A}^*\mtx{A}}\twonorm{\vct{h}_\tau}\le\left(1+\mu\sigma_{\max}^2(\mtx{A})\right)\twonorm{\vct{h}_\tau}.
\end{align}
Applying Lemma \ref{GTescapelem} equation \eqref{GordonU} together with Lemma \ref{GammaComplem} and \eqref{mytemp1} we also have
\begin{align}
\label{mytemp2}
\twonorm{\mathcal{P}_{\mathcal{C}}\left((\mtx{I}-\mu\mtx{A}^*\mtx{A})\vct{h}_\tau\right)}=&\sup_{\vct{v}\in\mathcal{C}\cap\mathbb{S}^{n-1}}\vct{v}^*\left(\mtx{I}-\mu\mtx{A}^*\mtx{A}\right)\vct{h}_\tau\nonumber\\
\ge&\frac{1}{\twonorm{\vct{h}_\tau}}\vct{h}_\tau^*\left(\mtx{I}-\mu\mtx{A}^*\mtx{A}\right)\vct{h}_\tau\nonumber\\
=&\twonorm{\vct{h}_\tau}\left(1-\mu\frac{\twonorm{\mtx{A}\vct{h}_\tau}^2}{\twonorm{\vct{h}_\tau}^2}\right)\nonumber\\
\ge&\twonorm{\vct{h}_\tau}\left(1-\mu(b_m+\omega+\eta)^2\right)\nonumber\\
\ge&\twonorm{\vct{h}_\tau}\left(1-\mu\left(\sqrt{m}+\sqrt{m_0}\right)^2\right)\nonumber\\
\ge&\left(\frac{1-\mu\left(\sqrt{m}+\sqrt{m_0}\right)^2}{1+\mu\sigma_{\max}^2(\mtx{A})}\right)\twonorm{(\mtx{I}-\mu\mtx{A}^*\mtx{A})\vct{h}_\tau}.
\end{align}
As a result based on \eqref{mytemp1} and \eqref{mytemp2} the requirement \eqref{condlem} of Lemma \ref{cone cool} is satisfied for the vector $\vct{w}=(\mtx{I}-\mu\mtx{A}^*\mtx{A})\vct{h}_\tau$ with $\alpha=\left(\frac{1-\mu\left(\sqrt{m}+\sqrt{m_0}\right)^2}{1+\mu\sigma_{\max}^2(\mtx{A})}\right)$. Thus by Lemma \ref{cone cool} there exists a constant $\delta$ such that for all $\twonorm{(\mtx{I}-\mu\mtx{A}^*\mtx{A})\vct{h}_\tau}\le\delta$ 
\begin{align*}
\twonorm{\mathcal{P}_{\mathcal{D}}\left((\mtx{I}-\mu\mtx{A}^*\mtx{A})\vct{h}_\tau\right)}\ge(1-\epsilon_\tau)\twonorm{\mathcal{P}_{\mathcal{C}}\left((\mtx{I}-\mu\mtx{A}^*\mtx{A})\vct{h}_\tau\right)}
\end{align*}
holds with probability at least $1-e^{-\frac{\eta^2}{2}}$. Thus using $r=\frac{\delta}{1+\mu\sigma_{\max}^2(\mtx{A})}$ we can conclude that for all $\twonorm{\vct{h}_\tau}\le r$
\begin{align*}
\twonorm{\vct{h}_{\tau+1}}=\twonorm{\mathcal{P}_{\mathcal{D}}\left((\mtx{I}-\mu\mtx{A}^*\mtx{A})\vct{h}_\tau\right)}\ge&(1-\epsilon)\twonorm{\mathcal{P}_{\mathcal{C}}\left((\mtx{I}-\mu\mtx{A}^*\mtx{A})\vct{h}_\tau\right)}\\
\ge&(1-\epsilon)\left(1-\mu\left(\sqrt{m}+\sqrt{m_0}\right)^2\right)\twonorm{\vct{h}_\tau},
\end{align*}
holds with probability at least $1-e^{-\frac{\eta^2}{2}}$. By inductively applying the latter inequality for all starting points $\twonorm{\h_0}\le r$ we will show that for all $t\geq 0$
\beq
\tn{\h_t}\geq \left(1-\eps\right)^t\left(1-\mu\left(\sqrt{m}+\sqrt{m_0}\right)^2\right)^t\tn{\h_0}.\nn
\eeq
Suppose the proposed equation holds for $t=1,2,\ldots,\tau$. Define
\begin{align*}
\tilde{\vct{h}}_\tau=(1-\eps)^\tau\left(1-\mu\left(\sqrt{m}+\sqrt{m_0}\right)^2\right)^\tau\tn{\h_0}\frac{\h_\tau}{\tn{\h_\tau}}
\end{align*}
Using $\tn{\tilde{\h}_\tau}\leq \tn{\h_\tau}$, and applying Lemma \ref{incprojlem} for $t=\tau+1$ we have that
\begin{align*}
\tn{\h_{\tau+1}}&=\twonorm{\mathcal{P}_{\mathcal{D}}\left((\mtx{I}-\mu\mtx{A}^*\mtx{A})\h_\tau\right)},\\
&\geq \twonorm{\mathcal{P}_{\mathcal{D}}\left((\mtx{I}-\mu\mtx{A}^*\mtx{A})\tilde{\h}_\tau\right)},\\
&\geq (1-\eps)\left(1-\mu\left(\sqrt{m}+\sqrt{m_0}\right)^2\right)\tn{\tilde{\h}_\tau},\\
&\geq (1-\eps)^{\tau+1}\left(1-\mu\left(\sqrt{m}+\sqrt{m_0}\right)^2\right)^{\tau+1}\tn{\h_0},
\end{align*}
concluding the proof for all $t$ by induction.

\subsection{Sensitivity to the tuning parameter (Proof of Theorem \ref{sensitivitythm})}\label{app:sensitivity}
In all of our proofs so far we have assumed that the parameter $R$ is tuned perfectly and is set to $R=f(\vct{x})$. This is of course problematic because we do not know the signal $\x$, and therefore also don not know $f(\x)$ in advance. Crucial to our proofs so far has been the fact that different between our iterates $\vct{z}_\tau$ and the unknown signal $\vct{x}$ lie in the descent set $\mathcal{D}$. However, this is no longer true when $R\neq f(\vct{x})$. As a result it is important to ensure that our results are robust to perturbations of the feasible set $\Dc$. We will next illustrate how one can get bounds that are robust to the change of $\Dc$.
\begin{itemize}
\item To model $R> f(\x)$, define $\Dc_{sup}^R=\{\w\big|f(\x+\w)\leq R\}$.
\item To model $R< f(\x)$, define $\Dc_{sub}^R=\{\w\big|f(\x+\w)\leq R\}$.
\end{itemize}
Similarly, we define $\mathcal{K}_{sup}^R=\{\z\big|f(\z)\leq R\}$ and $\mathcal{K}_{sub}^R=\{\z\big|f(\z)\leq R\}$ for $R> f(\x)$ and $R< f(\x)$ and note that $\Dc_{sup}^R=\mathcal{K}_{sup}^R-\{\vct{x}\}$ and $\Dc_{sub}^R=\mathcal{K}_{sub}^R-\{\vct{x}\}$.

Before we begin our arguments we state a lemma which we utilize throughout the proof of this theorem. The proof of this theorem is essentially identical to bounding the convergence rate in the proof of Theorem \ref{first prop} in Section \ref{secprooffirstprop}. We skip the details.
\begin{lemma}Let $\vct{x}\in\R^n$ be an arbitrary vector in $\R^n$ and $f:\R^n\rightarrow\R$ be a proper function. Suppose $\mtx{A}\in\R^{m\times n}$ is a Gaussian map. Then
\begin{align*}
\underset{\vct{u},\vct{v}\in\mathcal{C}_f(\vct{x})\cap \mathbb{S}^{n-1}}{\sup}\text{ }-\vct{u}^*\left(\mtx{I}-\mu\mtx{A}^*\mtx{A}\right)\vct{v}\le\sqrt{8\frac{m_0}{m}},
\end{align*}
holds with probability at least $1-4e^{-\frac{\eta^2}{8}}$.
\end{lemma}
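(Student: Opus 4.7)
The plan is to mirror the argument of Section \ref{secprooffirstprop} that bounds $\rho(\mu)$, but with the sign flipped. Starting from the polarization identity
\begin{align*}
-\vct{u}^*(\mtx{I}-\mu\mtx{A}^*\mtx{A})\vct{v}=\frac{1}{4}\Big[\mu\twonorm{\mtx{A}(\vct{u}+\vct{v})}^2-\twonorm{\vct{u}+\vct{v}}^2-\mu\twonorm{\mtx{A}(\vct{u}-\vct{v})}^2+\twonorm{\vct{u}-\vct{v}}^2\Big],
\end{align*}
and the choice $\mu=1/b_m^2$, I see that upper-bounding the right-hand side now requires an \emph{upper} bound on $\twonorm{\mtx{A}(\vct{u}+\vct{v})}^2/b_m^2$ together with a \emph{lower} bound on $\twonorm{\mtx{A}(\vct{u}-\vct{v})}^2/b_m^2$, which is precisely the reverse of what was needed for $\rho(\mu)$. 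Fortunately Lemma \ref{GTtypelem} delivers two-sided control on each set, so the same application to the sets $\mathcal{T}_\pm=\mathcal{C}\cap\mathbb{S}^{n-1}\pm\mathcal{C}\cap\mathbb{S}^{n-1}$ used in \eqref{temp22pf1}--\eqref{temp22pf2} supplies both estimates at once, once one records that $\omega(\mathcal{T}_\pm)\le 2\omega$ and $\sigma(\mathcal{T}_\pm)\le 2$.

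Concretely, after rescaling the Gordon parameter as in the earlier proof, on an event of probability at least $1-4e^{-\eta^2/8}$ (obtained by a single application of Lemma \ref{GTtypelem} to $\mathcal{T}_+\cup\mathcal{T}_-$ and using the symmetry of $\mathcal{T}_-$), I have uniformly in $\vct{u},\vct{v}\in\mathcal{C}\cap\mathbb{S}^{n-1}$
\begin{align*}
\tfrac{1}{b_m^2}\twonorm{\mtx{A}(\vct{u}+\vct{v})}^2\le\Big(\twonorm{\vct{u}+\vct{v}}+2\tfrac{\omega+\eta}{b_m}\Big)^2,\qquad \tfrac{1}{b_m^2}\twonorm{\mtx{A}(\vct{u}-\vct{v})}^2\ge\Big(\max\{\twonorm{\vct{u}-\vct{v}}-2\tfrac{\omega+\eta}{b_m},0\}\Big)^2.
\end{align*}

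From here I split into two cases exactly as in \eqref{temp22pf3}--\eqref{temp22pf4}. If $\twonorm{\vct{u}-\vct{v}}\ge 2(\omega+\eta)/b_m$, plugging both bounds into the polarization identity and using $\twonorm{\vct{u}+\vct{v}}^2+\twonorm{\vct{u}-\vct{v}}^2=4$ together with $\twonorm{\vct{u}+\vct{v}}+\twonorm{\vct{u}-\vct{v}}\le 2\sqrt{2}$ yields a bound of $2\sqrt{2}(\omega+\eta)/b_m$. If instead $\twonorm{\vct{u}-\vct{v}}< 2(\omega+\eta)/b_m$, then the lower bound on $\twonorm{\mtx{A}(\vct{u}-\vct{v})}^2/b_m^2$ is trivially zero and $\twonorm{\vct{u}-\vct{v}}^2\le 4(\omega+\eta)^2/b_m^2$; estimating directly and using $b_m\ge 2\sqrt{2}(\omega+\eta)$ (the hypothesis that was in force for $\rho(\mu)$ to be non-trivial) again produces the bound $2\sqrt{2}(\omega+\eta)/b_m$.

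Finally I apply Lemma \ref{GammaComplem} with $\omega+\eta=\phi(m_0)$ and $b_m=\phi(m)$ to convert $2\sqrt{2}(\omega+\eta)/b_m\le\sqrt{8m_0/m}$, which is the advertised inequality. There is no real obstacle here, since every ingredient is already assembled in Section \ref{secprooffirstprop}; the only bookkeeping detail worth care is matching the probability constant $1-4e^{-\eta^2/8}$ rather than the $1-8e^{-\eta^2/8}$ one would get from a naive union bound over two applications of Lemma \ref{GTtypelem}, which I handle by applying the lemma a single time to the combined symmetric set $\mathcal{T}_+\cup\mathcal{T}_-$.
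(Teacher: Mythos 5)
Your main argument is correct and is exactly what the paper intends: the paper's own ``proof'' of this lemma simply defers to Section \ref{secprooffirstprop}, and your sign-flipped polarization identity (which swaps the roles of $\vct{u}+\vct{v}$ and $\vct{u}-\vct{v}$), the two-case analysis with threshold $2(\omega+\eta)/b_m$ on $\twonorm{\vct{u}-\vct{v}}$, the implicit hypothesis $b_m\ge 2\sqrt{2}(\omega+\eta)$, and the final appeal to Lemma \ref{GammaComplem} reproduce that argument faithfully and yield the bound $\sqrt{8m_0/m}$.

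The one step that does not hold up is your device for recovering the probability $1-4e^{-\eta^2/8}$. Applying Lemma \ref{GTtypelem} a single time to $\mathcal{T}_+\cup\mathcal{T}_-$ requires controlling $\omega(\mathcal{T}_+\cup\mathcal{T}_-)=\E\big[\max\big(\sup_{\vct{z}\in\mathcal{T}_+}\vct{g}^*\vct{z},\ \sup_{\vct{z}\in\mathcal{T}_-}\vct{g}^*\vct{z}\big)\big]$, and this quantity need \emph{not} be bounded by $2\omega$: the expectation of a maximum exceeds the maximum of the expectations. (Take $\mathcal{C}$ to be the nonnegative ray spanned by $\vct{e}_1$; then $\omega=0$ while $\omega(\mathcal{T}_+\cup\mathcal{T}_-)=\E\abs{g_1}>0$.) So that step is unjustified. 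The honest union bound over two applications of Lemma \ref{GTtypelem} gives $1-8e^{-\eta^2/8}$, which is also what the paper's own referenced argument in Section \ref{secprooffirstprop} delivers; the constant $4$ stated in the lemma is an inconsistency already present in the paper. If you want to recover $1-4e^{-\eta^2/8}$ rigorously, note that you only need the \emph{upper} deviation of $\twonorm{\mtx{A}\vct{w}}$ on $\mathcal{T}_+$ (probability $1-2e^{-\eta^2/8}$ from the Slepian half of the proof of Lemma \ref{GTtypelem}) and the \emph{lower} deviation on $\mathcal{T}_-$ (probability $1-e^{-\eta^2/4}$ from the Gordon half), and these one-sided events union-bound to the stated constant.
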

As a consequence of this lemma the upper bound on $\underset{\vct{u},\vct{v}\in\mathcal{C}_f(\vct{x})\cap \mathbb{S}^{n-1}}{\sup}\text{ }-\vct{u}^*\left(\mtx{I}-\mu\mtx{A}^*\mtx{A}\right)\vct{v}$ is the same as the upper bound on $\rho(\mu)$. So we shall assume throughout the proof of this theorem that
\begin{align*}
\underset{\vct{u},\vct{v}\in\mathcal{C}_f(\vct{x})\cap \mathbb{S}^{n-1}}{\sup}\text{ }-\vct{u}^*\left(\mtx{I}-\mu\mtx{A}^*\mtx{A}\right)\vct{v}\le\rho(\mu).
\end{align*}
\subsubsection{Under estimating the tunning parameter ($R< f(x)$)}
We first consider the case where $R< f(\x)$. In this case the difference between our iterates and the signal is in $\Dc_{sub}^R$, i.e.~$(\vct{z}_\tau-\vct{x})\in\Dc_{sub}^R$. It is important to note that in this case $0\not\in \Dc_{sub}^R$ since $f(\x)>R$. To handle this case it suffices to develop an analogue of Theorem \ref{master}, the rest of the proof follows along similar lines to the proof of Theorem \ref{first prop}. Please see Section \ref{secprooffirstprop} for further details.

\begin{lemma}Suppose, $f(\cdot)$ is a proper convex function and $\vct{y}=\mtx{A}\vct{x}$. Define $\mathcal{K}_{sub}^R=\{\z\big|f(\z)\leq R\}$. Consider the iterations
\begin{align*}
\z_{\tau+1}=\Pc_{\mathcal{K}_{sub}^R}(\z_\tau+\mu\A^*(\y-\A\z_\tau))
\end{align*}
The error vector $\vct{z}_\tau-\vct{x}$ satisfies the following recursion
\begin{align*}
\tn{\z_{\tau+1}-\x}\leq \kappa_f\rho(\mu)\tn{\z_\tau-\x}+(3-\kappa_f)\twonorm{\vct{x}-\mathcal{P}_{\mathcal{K}_{sub}^R}(\vct{x})}.
\end{align*}
\end{lemma}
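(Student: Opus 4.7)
The plan is to mimic the deterministic argument of Theorem~\ref{master}, while carefully handling the fact that $R<f(\vct{x})$ forces $\vct{0}\notin\mathcal{D}_{sub}^R$, so the convenient ``descent set contains the origin'' structure used in Section~\ref{pffmaster} fails. Set $\vct{h}_\tau:=\vct{z}_\tau-\vct{x}$, $\mathcal{M}_\mu:=\mtx{I}-\mu\mtx{A}^*\mtx{A}$, $\vct{x}^R:=\mathcal{P}_{\mathcal{K}_{sub}^R}(\vct{x})$, and $\vct{e}:=\vct{x}^R-\vct{x}$, so that $\|\vct{e}\|=\|\vct{x}-\mathcal{P}_{\mathcal{K}_{sub}^R}(\vct{x})\|$ is precisely the mismatch appearing on the right-hand side and $\vct{e}=\mathcal{P}_{\mathcal{D}_{sub}^R}(\vct{0})$. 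Lemma~\ref{simplem} together with $\vct{y}=\mtx{A}\vct{x}$ immediately rewrites the iteration as $\vct{h}_{\tau+1}=\mathcal{P}_{\mathcal{D}_{sub}^R}(\mathcal{M}_\mu\vct{h}_\tau)$.

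Because $\mathcal{D}_{sub}^R$ is convex (as $f$ is) and $\vct{e}\in\mathcal{D}_{sub}^R$, I would apply Lemma~\ref{projcvxthm} at the test point $\vct{u}=\vct{e}$ and expand to obtain, after cancellation,
\begin{align*}
\|\vct{h}_{\tau+1}\|^2 \le (\mathcal{M}_\mu\vct{h}_\tau)^*(\vct{h}_{\tau+1}-\vct{e}) + \vct{e}^*\vct{h}_{\tau+1}.
\end{align*}
The second term is bounded by $\|\vct{e}\|\|\vct{h}_{\tau+1}\|$. For the first, set $\vct{p}_\tau:=\vct{h}_{\tau+1}-\vct{e}$, which lies in the convex set $\tilde{\mathcal{D}}:=\mathcal{D}_{sub}^R-\{\vct{e}\}$ containing $\vct{0}$; letting $\tilde{\mathcal{C}}$ be its conic hull, $\vct{p}_\tau/\|\vct{p}_\tau\|\in\tilde{\mathcal{C}}\cap\mathbb{S}^{n-1}$, so by Lemma~\ref{firstconelem} one has $(\mathcal{M}_\mu\vct{h}_\tau)^*\vct{p}_\tau\le A\|\vct{p}_\tau\|$ with $A:=\sup_{\vct{u}\in\tilde{\mathcal{C}}\cap\mathcal{B}^n}\vct{u}^*\mathcal{M}_\mu\vct{h}_\tau$. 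Combining with the triangle bound $\|\vct{p}_\tau\|\le\|\vct{h}_{\tau+1}\|+\|\vct{e}\|$ gives the quadratic $\|\vct{h}_{\tau+1}\|^2-(A+\|\vct{e}\|)\|\vct{h}_{\tau+1}\|-A\|\vct{e}\|\le 0$, whose larger root is at most $A+2\|\vct{e}\|$ (using $\sqrt{A^2+6A\|\vct{e}\|+\|\vct{e}\|^2}\le A+3\|\vct{e}\|$), so $\|\vct{h}_{\tau+1}\|\le A+2\|\vct{e}\|$.

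The principal obstacle is then bounding $A\le\rho(\mu)\|\vct{h}_\tau\|$. Note that $\vct{h}_\tau\in\mathcal{D}_f(\vct{x})\subset\mathcal{C}_f(\vct{x})$ since $f(\vct{z}_\tau)\le R\le f(\vct{x})$, placing $\vct{h}_\tau$ in the ``correct'' cone for $\rho(\mu)$. The difficulty is that the test cone $\tilde{\mathcal{C}}$ is the tangent cone of $\{f\le R\}$ at $\vct{x}^R$, which in general differs from $\mathcal{C}_f(\vct{x})$ and need not be contained in it. The key structural fact I would exploit is that, by convexity of $f$, for every $\vct{w}\in\tilde{\mathcal{D}}$ and every $\lambda\in[0,1]$ the convex combination $\vct{x}+\lambda(\vct{e}+\vct{w})=(1-\lambda)\vct{x}+\lambda(\vct{x}^R+\vct{w})$ has $f$-value at most $(1-\lambda)f(\vct{x})+\lambda R\le f(\vct{x})$, hence $\lambda(\vct{e}+\vct{w})\in\mathcal{D}_f(\vct{x})$. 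This relationship allows one to decompose each $\vct{u}\in\tilde{\mathcal{C}}$ into a part in $\mathcal{C}_f(\vct{x})$ and a part in $-\mathcal{C}_f(\vct{x})$, and bound the two contributions to $A$ via $\rho(\mu)$ and its ``minus'' counterpart $\sup_{\vct{u}',\vct{v}\in\mathcal{C}_f(\vct{x})\cap\mathbb{S}^{n-1}}(-\vct{u}'^*\mathcal{M}_\mu\vct{v})\le\rho(\mu)$ asserted just before the lemma. Substituting $A\le\rho(\mu)\|\vct{h}_\tau\|$ into $\|\vct{h}_{\tau+1}\|\le A+2\|\vct{e}\|$ produces the stated recursion with $\kappa_f=1$ and $(3-\kappa_f)=2$; the extension to the nonconvex $\kappa_f=2$ case of Theorem~\ref{sensitivitythm} follows by replacing Lemma~\ref{projcvxthm} with Lemma~\ref{prop compare}~\eqref{firstsetcomp}, which absorbs the extra factor of $2$ into $\kappa_f$ in front of $\rho(\mu)$.
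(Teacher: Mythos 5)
Your setup (reduction to $\vct{h}_{\tau+1}=\mathcal{P}_{\mathcal{D}_{sub}^R}(\mathcal{M}_\mu\vct{h}_\tau)$, the obtuse-angle inequality at the test point $\vct{e}$, and the quadratic in $\|\vct{h}_{\tau+1}\|$) is sound, but the proof has a genuine gap exactly where you flag the ``principal obstacle'': the bound $A\le\rho(\mu)\tn{\h_\tau}$ for $A=\sup_{\vct{u}\in\tilde{\Cc}\cap\Bc^n}\vct{u}^*\mathcal{M}_\mu\vct{h}_\tau$ does not follow from the decomposition you propose. Writing $\vct{u}\in\tilde{\Cc}$ as a limit of $t(\vct{d}-\vct{e})$ with $\vct{d}\in\Dc_{sub}^R$ does express $\vct{u}$ as a difference $\vct{a}-\vct{b}$ with $\vct{a}=t\vct{d}\in\Cc_f(\x)$ and $\vct{b}=t\vct{e}\in\Cc_f(\x)$, and each piece can indeed be handled by $\rho(\mu)$ or its ``minus'' counterpart. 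But that yields only $\vct{u}^*\mathcal{M}_\mu\vct{h}_\tau\le\rho(\mu)\left(\twonorm{\vct{a}}+\twonorm{\vct{b}}\right)\tn{\h_\tau}$, and nothing controls $\twonorm{\vct{a}}+\twonorm{\vct{b}}$ in terms of $\twonorm{\vct{u}}$: for a unit vector $\vct{u}$ arising as a limit one must take $t\to\infty$ (or $\vct{d}$ close to $\vct{e}$), so the individual norms blow up. Since $\tilde{\Cc}$ --- the tangent cone of $\{f\le R\}$ at $\mathcal{P}_{\mathcal{K}_{sub}^R}(\x)$ --- is in general not contained in $\Cc_f(\x)$, there is no route from your $A$ to $\rho(\mu)$ with constant $1$, and the claimed recursion does not follow. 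The same issue recurs in your sketch of the nonconvex case, where Lemma~\ref{prop compare} applied to the shifted set again produces a projection onto $\tilde{\Cc}$ rather than onto $\Cc_f(\x)$.

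The paper avoids ever introducing the tangent cone at the projected point. In the convex case it bounds $\twonorm{\vct{w}-\mathcal{P}_{\Dc_{sub}^R}(\tilde{\vct{h}}_\tau)}$ (with $\vct{w}$ the nearest point of $\Dc_{sub}^R$ to the origin) by chaining three facts that all live on the single cone $\Cc=\Cc_f(\x)$: the obtuse-angle inequality \eqref{projcvxthm1} at $\vct{w}$; the comparison $\tn{\tilde{\vct{h}}_\tau-\Pc_{\Cc}(\tilde{\vct{h}}_\tau)}\le\tn{\tilde{\vct{h}}_\tau-\Pc_{\Dc_{sub}^R}(\tilde{\vct{h}}_\tau)}$ coming from $\Dc_{sub}^R\subset\Cc$, combined with Moreau's identity \eqref{orthogonal} to trade the residual for $\tn{\Pc_\Cc(\tilde{\vct{h}}_\tau)}\le\rho(\mu)\tn{\h_\tau}$; and the cross-term bound $-2\vct{w}^*\mathcal{M}_\mu\vct{h}_\tau\le2\rho(\mu)\twonorm{\vct{w}}\tn{\h_\tau}$, which is legitimate because $\vct{w}$ and $\vct{h}_\tau$ both lie in $\Cc_f(\x)$. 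This gives $\twonorm{\vct{w}-\mathcal{P}_{\Dc_{sub}^R}(\tilde{\vct{h}}_\tau)}\le\rho(\mu)\tn{\h_\tau}+\twonorm{\vct{w}}$ and the result follows by one triangle inequality; the nonconvex case instead expands $\tn{\tilde{\vct{h}}_\tau-\Pc_{\Dc_{sub}^R}(\tilde{\vct{h}}_\tau)}^2\le\tn{\tilde{\vct{h}}_\tau-\vct{w}}^2$ and completes a square, using only that $\Pc_{\Dc_{sub}^R}(\tilde{\vct{h}}_\tau)$ and $\vct{w}$ belong to $\Cc_f(\x)$. If you want to salvage your route you would need a quantitative statement relating $\tilde{\Cc}$ to $\Cc_f(\x)$ with constant $1$, which is false in general; the fix is to abandon $\tilde{\Cc}$ and argue as above.
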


\begin{proof} First note that by Lemma \ref{simplem} we have
\begin{align*}
\vct{z}_{\tau+1}-\x=&\Pc_{\mathcal{K}_{sub}^R-\{\x\}}(\z_\tau-x+\mu\A^*(\y-\A\z_\tau))\\
=&\Pc_{\Dc_{sub}^R}\left((\mtx{I}-\mu\mtx{A}^*\mtx{A})(\vct{z}_\tau-\vct{x})\right).
\end{align*}
Now defining $\vct{h}_\tau=\z_\tau-\x$ we arrive at
\begin{align}
\label{temp59}
\vct{h}_{\tau+1}=\Pc_{\Dc_{sub}^R}(\vct{h}_\tau-\mu\A^*\A\vct{h}_\tau).
\end{align}
We remind the reader that we use $\mathcal{C}:=\mathcal{C}_f(\vct{x})$ to be the tangent cone of $f$ at the point $\vct{x}$. In our calculations below it is convenient to use the short-hand notations $\tilde{\vct{h}}_\tau:=\vct{h}_\tau-\mu\mtx{A}^*\mtx{A}\vct{h}_\tau$ and $\vct{w}:=\Pc_{\Dc_{sub}^R}(\vct{0})=\vct{x}-\mathcal{P}_{\mathcal{K}_{sub}^R}(\vct{x})$. With this notation \eqref{temp59} can be rewritten as 
\begin{align*}
\vct{h}_{\tau+1}=\Pc_{\Dc_{sub}^R}(\tilde{\vct{h}}_\tau).
\end{align*}
We consider two cases. In the first case we assume the function $f$ is convex and $\kappa_f=1$. In the second case we do not assume convexity of $f$ and we prove the inequality with $\kappa_f=2$. 

\text{ }\\
\noindent\textbf{Proof for convex $f$}

When the function $f$ is convex which immediately implies that the sets $\mathcal{K}_{sub}^R$ and $\mathcal{D}_{sub}^R$ are convex. Noting that $\mathcal{D}_{sub}^R\subset \mathcal{C}$ we have
\begin{align}
\label{temp591}
\twonorm{\tilde{\vct{h}}_\tau-\Pc_{\mathcal{C}}(\tilde{\vct{h}}_\tau)}\le\twonorm{\tilde{\vct{h}}_\tau-\Pc_{\Dc_{sub}^R}(\tilde{\vct{h}}_\tau)}.
\end{align}
Also by Lemma \ref{projcvxthm} equation \eqref{projcvxthm1}
\begin{align}
\label{temp592}
\twonorm{\vct{w}-\Pc_{\Dc_{sub}^R}(\tilde{\vct{h}}_\tau)}^2+\twonorm{\Pc_{\Dc_{sub}^R}(\tilde{\vct{h}}_\tau)-\tilde{\vct{h}}_\tau}^2\le\twonorm{\vct{w}-\tilde{\vct{h}}_\tau}^2.
\end{align}
Furthermore, using the fact that $\vct{w}:=\vct{x}-\mathcal{P}_{\mathcal{K}_{sub}^R}(\vct{x})\in\mathcal{C}$ we have
\begin{align}
\label{temp593}
\twonorm{\vct{w}-\tilde{\vct{h}}_\tau}^2\le& \twonorm{\tilde{\vct{h}}_\tau}^2+\twonorm{\vct{w}}^2-2\vct{w}^*\tilde{\vct{h}}_\tau\nonumber\\
\le&\twonorm{\tilde{\vct{h}}_\tau}^2+\twonorm{\vct{w}}^2-2\vct{w}^*\left(\mtx{I}-\mu\mtx{A}^*\mtx{A}\right)\vct{h}_\tau\nonumber\\
\le&\twonorm{\tilde{\vct{h}}_\tau}^2+\twonorm{\vct{w}}^2+2\twonorm{\vct{w}}\twonorm{\vct{h}_\tau}\cdot\text{ }\sup_{\vct{u},\vct{v}\in\mathcal{C}\cap\mathbb{S}^{n-1}}\text{ }-\vct{u}^*\left(\mtx{I}-\mu\mtx{A}^*\mtx{A}\right)\vct{v}\nonumber\\
\le&\twonorm{\tilde{\vct{h}}_\tau}^2+\twonorm{\vct{w}}^2+2\rho(\mu)\twonorm{\vct{w}}\twonorm{\vct{h}_\tau}
\end{align}
Finally, we also note that
\begin{align}
\label{temp594}
\twonorm{\Pc_{\mathcal{C}}(\tilde{\vct{h}}_\tau)}=&\twonorm{\Pc_{\mathcal{C}}\left((\mtx{I}-\mu\mtx{A}^*\mtx{A})\vct{h}_\tau\right)}\nonumber\\
\le&\left(\sup_{\vct{u},\vct{v}\in\mathcal{C}\cap\mathbb{S}^{n-1}}\text{ }\vct{u}^*\left(\mtx{I}-\mu\mtx{A}^*\mtx{A}\right)\vct{v}\right)\cdot\twonorm{\vct{h}_\tau}\nonumber\\
\le& \rho(\mu)\twonorm{\vct{h}_\tau}.
\end{align}
Using \eqref{temp592} and \eqref{temp593} we have
\begin{align*}
\twonorm{\vct{w}-\Pc_{\Dc_{sub}^R}(\tilde{\vct{h}}_\tau)}^2+\twonorm{\Pc_{\Dc_{sub}^R}(\tilde{\vct{h}}_\tau)-\tilde{\vct{h}}_\tau}^2\le\twonorm{\tilde{\vct{h}}_\tau}^2+\twonorm{\vct{w}}^2+2\rho(\mu)\twonorm{\vct{w}}\twonorm{\vct{h}_\tau}.
\end{align*}
Combining this with \eqref{temp591}, applying Lemma \ref{firstconelem} equation \eqref{orthogonal}, and using \eqref{temp594} we conclude that
\begin{align*}
\twonorm{\vct{w}-\Pc_{\Dc_{sub}^R}(\tilde{\vct{h}}_\tau)}^2+\twonorm{\Pc_{\mathcal{C}}(\tilde{\vct{h}}_\tau)-\tilde{\vct{h}}_\tau}^2\le& \twonorm{\tilde{\vct{h}}_\tau}^2+\twonorm{\vct{w}}^2+2\rho(\mu)\twonorm{\vct{w}}\twonorm{\vct{h}_\tau}\\
=&\twonorm{\mathcal{P}_{\mathcal{C}}(\tilde{\vct{h}}_\tau)-\tilde{\vct{h}}_\tau}^2+\twonorm{\mathcal{P}_{\mathcal{C}}\left(\tilde{\vct{h}}_\tau\right)}^2+\twonorm{\vct{w}}^2+2\rho(\mu)\twonorm{\vct{w}}\twonorm{\vct{h}_\tau}\\
\le&\twonorm{\mathcal{P}_{\mathcal{C}}(\tilde{\vct{h}}_\tau)-\tilde{\vct{h}}_\tau}^2+\left(\rho(\mu)\right)^2\twonorm{\vct{h}_\tau}^2+\twonorm{\vct{w}}^2+2\rho(\mu)\twonorm{\vct{w}}\twonorm{\vct{h}_\tau}.
\end{align*}
Simplifying the above expression we conclude that
\begin{align*}
\twonorm{\vct{w}-\Pc_{\Dc_{sub}^R}(\tilde{\vct{h}}_\tau)}\le\rho(\mu)\twonorm{\vct{h}_\tau}+\twonorm{\vct{w}}.
\end{align*}
Finally, applying the triangular inequality we arrive at
\begin{align*}
\twonorm{\vct{h}_{\tau+1}}=\twonorm{\Pc_{\Dc_{sub}^R}(\tilde{\vct{h}}_\tau)}\le\twonorm{\vct{w}-\Pc_{\Dc_{sub}^R}(\tilde{\vct{h}}_\tau)}+\twonorm{\vct{w}}\le \rho(\mu)\twonorm{\vct{h}_\tau}+2\twonorm{\vct{w}},
\end{align*}
concluding the proof for convex functions $f$.

\text{ }\\
\noindent\textbf{Proof for general $f$}

Note that since $\vct{w}\in\mathcal{D}_{sub}^R$ by definition of projection onto $\mathcal{D}_{sub}^R$ we have
\begin{align*}
\twonorm{\tilde{\vct{h}}_\tau-\mathcal{P}_{\mathcal{D}_{sub}^R}(\tilde{\vct{h}}_\tau)}^2\le\twonorm{\tilde{\vct{h}}_\tau-\vct{w}}^2\quad\Rightarrow\twonorm{\mathcal{P}_{\mathcal{D}_{sub}^R}(\tilde{\vct{h}}_\tau)}^2\le&2\tilde{\vct{h}}_\tau^*\left(\mathcal{P}_{\mathcal{D}_{sub}^R}(\tilde{\vct{h}}_\tau)\right)-2\tilde{\vct{h}}_\tau^*\vct{w}+\twonorm{\vct{w}}^2.
\end{align*}
Noting that the points $\mathcal{P}_{\mathcal{D}_{sub}^R}(\tilde{\vct{h}}_\tau)$ and $\vct{w}$ belong to $\mathcal{C}_f(\vct{x})$ we have 
\begin{align*}
\twonorm{\mathcal{P}_{\mathcal{D}_{sub}^R}(\tilde{\vct{h}}_\tau)}^2\le&2\tilde{\vct{h}}_\tau^*\left(\mathcal{P}_{\mathcal{D}_{sub}^R}(\tilde{\vct{h}}_\tau)\right)-2\tilde{\vct{h}}_\tau^*\vct{w}+\twonorm{\vct{w}}^2\\
\le&2\vct{h}_\tau^*\left(\mtx{I}-\mu\mtx{A}^*\mtx{A}\right)\mathcal{P}_{\mathcal{D}_{sub}^R}(\tilde{\vct{h}}_\tau)-2\vct{h}_\tau^*\left(\mtx{I}-\mu\mtx{A}^*\mtx{A}\right)\vct{w}+\twonorm{\vct{w}}^2\\
\le&2\rho(\mu)\twonorm{\vct{h}_\tau}\twonorm{\mathcal{P}_{\mathcal{D}_{sub}^R}(\tilde{\vct{h}}_\tau)}+2\rho(\mu)\twonorm{\vct{h}_\tau}\twonorm{\vct{w}}+\twonorm{\vct{w}}^2.
\end{align*}

Completing the square we arrive at
\begin{align*}
\left(\twonorm{\mathcal{P}_{\mathcal{D}_{sub}^R}(\tilde{\vct{h}}_\tau)}-\rho(\mu)\twonorm{\vct{h}_\tau}\right)^2\le\left(\rho(\mu)\twonorm{\vct{h}_\tau}+\twonorm{\vct{w}}\right)^2,
\end{align*}
which implies that
\begin{align*}
\twonorm{\mathcal{P}_{\mathcal{D}_{sub}^R}(\tilde{\vct{h}}_\tau)}\le 2\rho(\mu)\twonorm{\vct{h}_\tau}+\twonorm{\vct{w}}.
\end{align*}
\end{proof}

\subsubsection{Over estimating the tunning parameter ($R> f(x)$)}
We now consider the case where $R> f(\x)$. In this case the difference between our iterates and the signal is in $\Dc_{sup}^R$, i.e.~$(\vct{z}_\tau-\vct{x})\in\Dc_{sup}^R$. It is important to note that in this case $0\in \Dc_{sup}^R$ since $f(\x)<R$. To handle this case it suffices to develop an analogue of Theorem \ref{master}, the rest of the proof follows along similar lines to the proof of Theorem \ref{first prop}. Please see Section \ref{secprooffirstprop} for further details.
%
\begin{lemma}Suppose, $f(\cdot)$ is a homogeneous function and $\vct{y}=\mtx{A}\vct{x}$. Define $\mathcal{K}_{sup}^R=\{\z\big|f(\z)\leq R\}$. Consider the iterations
\begin{align*}
\z_{\tau+1}=\Pc_{\mathcal{K}_{sup}^R}(\z_\tau+\mu\A^*(\y-\A\z_\tau))
\end{align*}
The error vector $\vct{z}_\tau-\vct{x}$ satisfies the following recursion
\begin{align*}
\tn{\z_{\tau+1}-\x}\leq \kappa_f\tilde{\rho}(\mu)\tn{\z_\tau-\x}+2\kappa_f\left(1+\tilde{\rho}(\mu)\right)\left(\frac{R}{f(\vct{x})}-1\right)\twonorm{\vct{x}},
\end{align*}
where the convergence rate $\tilde{\rho}(\mu)$ is defined as the restricted singular value of the descent cone of $f$ at $\frac{R}{f(\vct{x})}\vct{x}$. That is,
\begin{align*}
\tilde{\rho}(\mu)=\sup_{\vct{u},\vct{v}\in\mathcal{C}_f\left(\frac{R}{f(\vct{x})}\vct{x}\right)\cap\mathbb{S}^{n-1}}\text{ }\vct{u}^*\left(\mtx{I}-\mu\mtx{A}^*\mtx{A}\right)\vct{v}.
\end{align*}
Furthermore, if $f$ is a norm then $\tilde{\rho}(\mu)=\rho(\mu)$.
\end{lemma}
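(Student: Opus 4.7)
The plan is to mimic the strategy of Theorem \ref{master}, but with a ``shifted'' reference point that realigns the constraint set with a descent cone. Set $\tilde{\vct{x}} = \tfrac{R}{f(\vct{x})}\vct{x}$. Since $f$ is absolutely homogeneous, $f(\tilde{\vct{x}}) = R$, so the feasible set $\mathcal{K}_{sup}^R$ coincides with $\{\vct{z} : f(\vct{z}) \le f(\tilde{\vct{x}})\}$. This means the shifted error $\tilde{\vct{h}}_\tau := \vct{z}_\tau - \tilde{\vct{x}}$ lies in the descent set $\mathcal{D}_f(\tilde{\vct{x}}) \subset \mathcal{C}_f(\tilde{\vct{x}})$, which is the setting in which the cone-restricted rate $\tilde{\rho}(\mu)$ naturally appears.

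Next I would rewrite the PGD iteration in terms of $\tilde{\vct{h}}_\tau$. Since $\vct{y} - \mtx{A}\vct{z}_\tau = -\mtx{A}\vct{h}_\tau$ with $\vct{h}_\tau = \vct{z}_\tau - \vct{x} = \tilde{\vct{h}}_\tau - (\vct{x} - \tilde{\vct{x}})$, Lemma \ref{simplem} yields
\begin{align*}
\tilde{\vct{h}}_{\tau+1} = \mathcal{P}_{\mathcal{D}_f(\tilde{\vct{x}})}\bigl((\mtx{I} - \mu\mtx{A}^*\mtx{A})\tilde{\vct{h}}_\tau + \mu\mtx{A}^*\mtx{A}(\vct{x} - \tilde{\vct{x}})\bigr).
\end{align*}
Applying Lemma \ref{prox compare} followed by Lemma \ref{firstconelem}\,(\ref{lem:identity}), as in the proof of Theorem \ref{master}, gives
\begin{align*}
\tn{\tilde{\vct{h}}_{\tau+1}} \le \kappa_f \sup_{\vct{v} \in \mathcal{C}_f(\tilde{\vct{x}}) \cap \mathcal{B}^n} \vct{v}^*\bigl[(\mtx{I} - \mu\mtx{A}^*\mtx{A})\tilde{\vct{h}}_\tau + \mu\mtx{A}^*\mtx{A}(\vct{x} - \tilde{\vct{x}})\bigr] \le \kappa_f \tilde{\rho}(\mu) \tn{\tilde{\vct{h}}_\tau} + \kappa_f\, T,
\end{align*}
where $T := \sup_{\vct{v} \in \mathcal{C}_f(\tilde{\vct{x}}) \cap \mathcal{B}^n} \vct{v}^* \mu \mtx{A}^*\mtx{A}(\vct{x} - \tilde{\vct{x}})$.

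The main obstacle is bounding the cross term $T$ without incurring a dependence on $\opnorm{\mtx{A}}$. The trick is to write $\mu\mtx{A}^*\mtx{A} = \mtx{I} - (\mtx{I} - \mu\mtx{A}^*\mtx{A})$, which splits $T$ into a trivially bounded inner-product piece and a second piece we must again control via $\tilde{\rho}(\mu)$. Here homogeneity is decisive: $\vct{x} - \tilde{\vct{x}} = -(R/f(\vct{x}) - 1)\vct{x}$ is a negative multiple of $\vct{x}$, and the descent set at $\tilde{\vct{x}}$ contains $-\tilde{\vct{x}}$ (so $-\vct{x}/\tn{\vct{x}}$ lies in $\mathcal{C}_f(\tilde{\vct{x}}) \cap \mathbb{S}^{n-1}$). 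Hence both $\vct{v}$ and the unit vector along $\vct{x} - \tilde{\vct{x}}$ live in the same cone, yielding
\begin{align*}
T \le \tn{\vct{x} - \tilde{\vct{x}}} + \tilde{\rho}(\mu)\tn{\vct{x} - \tilde{\vct{x}}} = (1 + \tilde{\rho}(\mu))\left(\tfrac{R}{f(\vct{x})} - 1\right)\tn{\vct{x}}.
\end{align*}

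Finally I would convert back to $\vct{h}_\tau = \vct{z}_\tau - \vct{x}$ via the triangle inequality $\tn{\tilde{\vct{h}}_\tau} \le \tn{\vct{h}_\tau} + \tn{\vct{x} - \tilde{\vct{x}}}$ and $\tn{\vct{h}_{\tau+1}} \le \tn{\tilde{\vct{h}}_{\tau+1}} + \tn{\vct{x} - \tilde{\vct{x}}}$; collecting terms gives the stated recursion with residual coefficient $2\kappa_f(1 + \tilde{\rho}(\mu))$. The very last assertion, $\tilde{\rho}(\mu) = \rho(\mu)$ when $f$ is a norm, is immediate from scale invariance of the descent cone under norms: $\mathcal{C}_f(c\vct{x}) = \mathcal{C}_f(\vct{x})$ for every $c > 0$.
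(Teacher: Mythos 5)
Your proposal is correct and follows essentially the same route as the paper: shift the reference point to $\tilde{\vct{x}}=\frac{R}{f(\vct{x})}\vct{x}$ so that the constraint set becomes the sub-level set at $\tilde{\vct{x}}$, run the Theorem \ref{master} projection argument on $\tilde{\vct{h}}_\tau=\vct{z}_\tau-\tilde{\vct{x}}$ treating $\mu\mtx{A}^*\mtx{A}(\vct{x}-\tilde{\vct{x}})$ as the noise term, split it as $(\vct{x}-\tilde{\vct{x}})-(\mtx{I}-\mu\mtx{A}^*\mtx{A})(\vct{x}-\tilde{\vct{x}})$ using that $\vct{x}-\tilde{\vct{x}}$ points along a direction in $\mathcal{C}_f(\tilde{\vct{x}})$, and convert back with the triangle inequality (your constant $2\kappa_f(1+\tilde{\rho}(\mu))$ indeed dominates the sharper $\kappa_f+1+2\kappa_f\tilde{\rho}(\mu)$ that the computation yields, since $\kappa_f\ge 1$). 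One small imprecision to flag: your bound on the cross term $T$ actually controls $\sup_{\vct{u},\vct{v}\in\mathcal{C}_f(\tilde{\vct{x}})\cap\mathbb{S}^{n-1}}-\vct{u}^*(\mtx{I}-\mu\mtx{A}^*\mtx{A})\vct{v}$, which is not literally $\tilde{\rho}(\mu)$ since the descent cone is not symmetric; the paper handles this by proving separately (in the auxiliary lemma opening Section \ref{app:sensitivity}) that this negated supremum obeys the same high-probability bound as $\rho(\mu)$ and adopting that convention throughout, so you should either do the same or define $\tilde{\rho}(\mu)$ as the supremum of the absolute value.
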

\begin{proof} Define $\tilde{\vct{x}}=\frac{R}{f(\x)}\x$ and note that since $f$ is a homogenous function $f(\tilde{\vct{x}})=\frac{R}{f(\vct{x})}f(\vct{x})=R$. Consequently, we may imagine that we are trying to estimate $\tilde{\vct{x}}$ where $\w=\A\x-\A\tilde{\vct{x}}$ is the additive noise on our measurements $\A\tilde{\vct{x}}$. Applying \eqref{interpfthm12} in the proof of Theorem \ref{master} in Section \ref{pffmaster} with $\mathcal{C}$ set to $\mathcal{C}_f(\tilde{\vct{x}})$ we can conclude that
\begin{align*}
\twonorm{\vct{z}_{\tau+1}-\tilde{\vct{x}}}\le&\kappa_f\tilde{\rho}(\mu)\twonorm{\vct{z}_\tau-\tilde{\vct{x}}}+\kappa_f\cdot\mu\cdot\sup_{\vct{u}\in\mathcal{C}_f(\tilde{\vct{x}})\cap\mathbb{S}^{n-1}}\text{ }\vct{u}^*\mtx{A}^*\vct{w}\\
=&\kappa_f\tilde{\rho}(\mu)\twonorm{\vct{z}_\tau-\tilde{\vct{x}}}+\kappa_f\cdot\mu\cdot\sup_{\vct{u}\in\mathcal{C}_f(\tilde{\vct{x}})\cap\mathbb{S}^{n-1}}\text{ }\vct{u}^*\mtx{A}^*\mtx{A}\left(\vct{x}-\tilde{\vct{x}}\right)\\
\le&\kappa_f\tilde{\rho}(\mu)\twonorm{\vct{z}_\tau-\tilde{\vct{x}}}+\kappa_f\cdot\sup_{\vct{u}\in\mathcal{C}_f(\tilde{\vct{x}})\cap\mathbb{S}^{n-1}}\text{ }\vct{u}^*(\vct{x}-\tilde{\vct{x}})+\kappa_f\cdot\sup_{\vct{u}\in\mathcal{C}_f(\tilde{\vct{x}})\cap\mathbb{S}^{n-1}}\text{ }-\vct{u}^*\left(\mtx{I}-\mu\mtx{A}^*\mtx{A}\right)\left(\vct{x}-\tilde{\vct{x}}\right)\\
=&\kappa_f\tilde{\rho}(\mu)\twonorm{\vct{z}_\tau-\tilde{\vct{x}}}+\kappa_f\twonorm{\vct{x}-\tilde{\vct{x}}}+\kappa_f\cdot\sup_{\vct{u}\in\mathcal{C}_f(\tilde{\vct{x}})\cap\mathbb{S}^{n-1}}\text{ }-\vct{u}^*\left(\mtx{I}-\mu\mtx{A}^*\mtx{A}\right)\left(\vct{x}-\tilde{\vct{x}}\right)\\
\le&\kappa_f\tilde{\rho}(\mu)\twonorm{\vct{z}_\tau-\tilde{\vct{x}}}+\kappa_f\twonorm{\vct{x}-\tilde{\vct{x}}}+\kappa_f\left(\sup_{\vct{u},\vct{v}\in\mathcal{C}_f(\tilde{\vct{x}})\cap\mathbb{S}^{n-1}}\text{ }-\vct{u}^*\left(\mtx{I}-\mu\mtx{A}^*\mtx{A}\right)\vct{v}\right)\twonorm{\vct{x}-\tilde{\vct{x}}}\\
=&\kappa_f\tilde{\rho}(\mu)\twonorm{\vct{z}_\tau-\tilde{\vct{x}}}+\kappa_f(1+\tilde{\rho}(\mu))\twonorm{\vct{x}-\tilde{\vct{x}}}\\
\le&\kappa_f\tilde{\rho}(\mu)\left(\twonorm{\vct{z}_\tau-\vct{x}}+\twonorm{\vct{x}-\tilde{\vct{x}}}\right)+\kappa_f(1+\tilde{\rho}(\mu))\twonorm{\vct{x}-\tilde{\vct{x}}}\\
=&\kappa_f\tilde{\rho}(\mu)\twonorm{\vct{z}_\tau-\vct{x}}+\kappa_f\left(\frac{R}{f(\vct{x})}-1\right)\left(1+2\tilde{\rho}(\mu)\right)\twonorm{\vct{x}}.
\end{align*}
The proof is complete by using the triangular inequality
\begin{align*}
\twonorm{\vct{z}_{\tau+1}-\vct{x}}\le \twonorm{\vct{z}_{\tau+1}-\tilde{\vct{x}}}+\twonorm{\vct{x}-\tilde{\vct{x}}}\le\kappa_f\tilde{\rho}(\mu)\twonorm{\vct{z}_\tau-\vct{x}}+\left(\frac{R}{f(\vct{x})}-1\right)\left(\kappa_f+1+2\kappa_f\tilde{\rho}(\mu)\right)\twonorm{\vct{x}}.
\end{align*}
Finally, we note that when $f$ is a norm $\mathcal{C}_f(\tilde{\vct{x}})=\mathcal{C}_f(\vct{x})$ so that $\tilde{\rho}(\mu)=\rho(\mu)$.
\end{proof}


\subsection{Proof of Theorem \ref{nonGaussThm} (ISG ensembles)}
\label{app:subgaussian}
When the measurement matrix is sub-Gaussian (recall Definition \ref{subgauss def}), we have equivalent results with possibly larger constants that depend on the sub-Gaussian parameter. This directly follows from the results of Dirksen \cite{dirksen2014dimensionality}. For earlier results with a similar flavor, the reader is referred to Mendelson et al. \cite{UUP2} as well as \cite{Mendel1}. In particular, we use the following restatement of Theorem $4.18$ of \cite{dirksen2014dimensionality}.

\begin{lemma}\label{subgauss cor} Suppose $\A$ is a matrix with isotropic and i.i.d. sub-Gaussian rows with parameter $\Delta$. For a set $\mathcal{T}\subseteq r\Bc^{n-1}$
\beq
\sup_{\vb\in \mathcal{T}} \left|\vb^*(\Iden-\frac{\A^*\A}{m})\vb\right|\leq C_\Delta\frac{r(\omega(\mathcal{T})+r\eta)}{\sqrt{m}},\label{desired subgauss line}
\eeq
holds with probability at least $1-e^{-\gamma \eta^2}$. Here, $C_\Delta$ only depending on the sub-Gaussian parameter $\Delta$ and $\gamma$ is a fixed numerical constant.
\end{lemma}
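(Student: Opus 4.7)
The plan is to establish this uniform concentration bound for a second-order chaos process via the standard two-step route, a pointwise Bernstein-type bound followed by generic chaining with mixed tails, which is exactly the strategy carried out by Dirksen \cite{dirksen2014dimensionality}.

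First I would fix $\vb\in\mathcal{T}$ and decompose
\[
\vb^*\left(\Iden-\frac{\A^*\A}{m}\right)\vb=\frac{1}{m}\sum_{i=1}^m\left(\twonorm{\vb}^2-(\ab_i^*\vb)^2\right),
\]
which by isotropy is a centered i.i.d.\ sum. Since $\ab_i^*\vb$ is sub-Gaussian with $\psi_2$-norm at most $\Delta\twonorm{\vb}\le\Delta r$, the variable $(\ab_i^*\vb)^2$ is sub-exponential with $\psi_1$-norm bounded by a constant times $\Delta^2 r^2$. A standard Bernstein inequality then yields pointwise concentration of mixed Gaussian/exponential type for each $\vb$.

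To pass to a uniform bound I would control the increments of the process $X_\vb:=\vb^*(\Iden-\A^*\A/m)\vb$. Using the factorization $(\ab_i^*\vb')^2-(\ab_i^*\vb)^2=(\ab_i^*(\vb'-\vb))(\ab_i^*(\vb'+\vb))$, a product-of-sub-Gaussians estimate shows that the increment $X_\vb-X_{\vb'}$ is again a sub-exponential sum with mixed Bernstein parameters proportional to $r\twonorm{\vb-\vb'}$ in the $\psi_2$ scale and $\twonorm{\vb-\vb'}^2$ in the $\psi_1$ scale. This places us in the setting of Dirksen's generic chaining bound for processes with mixed tails, which yields an expectation bound of the form
\[
\E\sup_{\vb\in\mathcal{T}}\abs{X_\vb}\le C_\Delta\left(\frac{r\,\gamma_2(\mathcal{T},\twonorm{\cdot})}{\sqrt{m}}+\frac{\gamma_1(\mathcal{T},\twonorm{\cdot})}{m}\right),
\]
together with a matching high-probability tail refinement. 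Because $\mathcal{T}\subseteq r\Bc^{n-1}$, the diameter of $\mathcal{T}$ is at most $2r$, so $\gamma_1(\mathcal{T},\twonorm{\cdot})\lesssim r\cdot\gamma_2(\mathcal{T},\twonorm{\cdot})$ up to absolute constants, and the first term dominates. Talagrand's majorizing-measure theorem identifies $\gamma_2(\mathcal{T},\twonorm{\cdot})$ with the Gaussian width $\omega(\mathcal{T})$ up to absolute constants, and the high-probability refinement of the chaining inequality contributes an additive term of order $r^2\eta/\sqrt{m}$ on an event of probability $1-e^{-\gamma\eta^2}$. Collecting these pieces assembles into the target $C_\Delta r(\omega(\mathcal{T})+r\eta)/\sqrt{m}$.

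The hard part will be the generic-chaining step itself: while the pointwise Bernstein bound and the identification of $\gamma_2$ with the Gaussian width are classical, producing both the expectation and the high-probability versions of the chaining inequality while correctly tracking the two-regime Bernstein tail (and extracting the explicit $r$-dependence from $\mathcal{T}\subseteq r\Bc^{n-1}$) is delicate. This is precisely the content of Dirksen's Theorem~4.18, so a self-contained proof would essentially amount to redoing his chaining argument; for the present paper it suffices to invoke that theorem as a black box.
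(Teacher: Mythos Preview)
Your proposal takes essentially the same route as the paper: both treat the lemma as a direct consequence of Dirksen's Theorem~4.18, together with Talagrand's majorizing-measure identification of $\gamma_2$ with the Gaussian width; the paper's own proof is in fact terser than yours, merely noting that the statement follows from line~(14) of \cite{dirksen2014dimensionality} after taking a square root and completing the square. One small caution on your sketch of Dirksen's argument: the claim that $\gamma_1(\mathcal{T},\twonorm{\cdot})\lesssim r\,\gamma_2(\mathcal{T},\twonorm{\cdot})$ for sets of diameter $2r$ is not a standard inequality---the chaos bound for this process is more naturally stated with a $\gamma_2^2/m$ term rather than a $\gamma_1/m$ term---but since you (like the paper) ultimately invoke Dirksen's theorem as a black box, this imprecision in the aside does not affect the proof.
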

\begin{proof}
We briefly describe how this lemma is a consequence of results in \cite{dirksen2014dimensionality}. This lemma follows from taking the square root of line (14) of \cite{dirksen2014dimensionality}, completing the squares and using the fact that Talagrand's $\gamma_2$ functional is equal to Gaussian width up to a multiplicative absolute constant (e.g.~see \cite{talagrand2006generic}).
\end{proof}
Applying this Lemma with different sets $\mathcal{T}$ of the form $\Cc\cap \mathbb{S}^{n-1}-\Cc\cap \mathbb{S}^{n-1},\Cc\cap \mathbb{S}^{n-1}+\Cc\cap \mathbb{S}^{n-1}$ we can deduce that, for $\mathcal{T}\in\{\Cc\cap \mathbb{S}^{n-1}-\Cc\cap \mathbb{S}^{n-1},\Cc\cap \mathbb{S}^{n-1}+\Cc\cap \mathbb{S}^{n-1}\}$ 
\begin{align*}
\sup_{\vb\in \mathcal{T}} \left|\vb^*(\Iden-\frac{\mtx{A}^*\A}{m})\vb\right|\leq 4C_\Delta\frac{(\omega\left(\Cc\cap\mathbb{S}^{n-1}\right)+\eta)}{\sqrt{m}}.
\end{align*}
hold with high probability for a constant $\tilde{c}(\Delta)$ that depends only on the sub-Gaussian constant $\Delta$. Observe that for these sets $r=2$. In summary, for our purposes a sub-Gaussian matrix behaves like a Gaussian up to a constant that depends only on the sub-Gaussian parameter. This is good enough to ensure that all of our results can be extended to sub-Gaussians with essentially no modification in our analysis. In particular, combining the arguments of Section \ref{secprooffirstprop} with lemma above we arrive at the following bound on the convergence rate
\begin{align*}
\rho(\frac{1}{m})=\sup_{\vct{u},\vct{v}\in\mathcal{C}}\text{ }\vct{u}^*\left(\Iden-\frac{\A^*\A}{m}\right)\vb\leq 4C_\Delta\sqrt{\frac{m_0}{m}}.
\end{align*}

\subsection{Proof of Theorem \ref{SOSthm} (SORS ensembles)}
\label{proofSORS}
Our proof strategy for this theorem is exactly the same as the proof of Theorem \ref{nonGaussThm} on ISG ensembles described in Section \ref{app:subgaussian}. However, we need a variant of Lemma \ref{subgauss cor} for SOS ensembles. Below we state this lemma which is proven in our companion paper \cite{oymak2015isometric}.
\begin{lemma} Suppose $\A\in\R^{m\times n}$ is selected from the SORS distribution of Definition \ref{SORSdef}. For a set $\mathcal{T}\subseteq r\Bc^{n-1}$
\begin{align*}
\sup_{\vb\in \mathcal{T}} \left|\vb^*(\Iden-\frac{\A^*\A}{m})\vb\right|\leq C \Delta r^2\frac{\max\left(1,\frac{\omega(\mathcal{T})}{r}\right)}{\sqrt{m}},
\end{align*}
holds with probability at least $1-e^{-2\eta}$ as long as
\begin{align*}
m\ge C'\Delta^2(\eta+1)^2(\log n)^4 \max\left(1,\frac{\omega^2(\mathcal{T})}{r^2}\right).
\end{align*}
Here $\Delta$ is the bound in \eqref{BOS} and $C,C'$ are fixed numerical constants.
\end{lemma}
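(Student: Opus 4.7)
My plan is to reduce the statement to a known supremum-of-chaos-processes bound for matrices drawn from a bounded orthonormal system (BOS), then lift the expected-supremum estimate to a high-probability statement via Talagrand's concentration inequality. The SORS ensemble $\A=\mtx{H}\mtx{D}$ is a BOS matrix once one accounts for normalization: after rescaling so that $\E[\tfrac{1}{m}\A^*\A]=\Iden$, the rows $\ab_i^*$ of $\A$ are i.i.d.\ and isotropic, and they satisfy $\infnorm{\ab_i}\le\Delta$ by \eqref{BOS} combined with the random signs in $\mtx{D}$.

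First, I would rewrite the quadratic deviation as
\[
\vb^*(\Iden - \tfrac{1}{m}\A^*\A)\vb = \twonorm{\vb}^2 - \tfrac{1}{m}\sum_{i=1}^m \abs{\langle \ab_i,\vb\rangle}^2,
\]
so that the quantity to bound is the supremum of a second-order empirical process indexed by $\mathcal{T}$. Next, applying a Krahmer--Mendelson--Rauhut style bound on suprema of chaos processes (after Rademacher symmetrization and decoupling), I would obtain
\[
\E\sup_{\vb\in\mathcal{T}}\abs{\vb^*(\Iden-\tfrac{1}{m}\A^*\A)\vb}\;\lesssim\;\frac{\Delta(\log n)^2\bigl(r\cdot\gamma_2(\mathcal{T})+r^2\bigr)}{\sqrt{m}},
\]
where $\gamma_2(\mathcal{T})$ is Talagrand's $\gamma_2$-functional in the Euclidean metric. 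Majorizing measures yields $\gamma_2(\mathcal{T})\lesssim \omega(\mathcal{T})$ up to absolute constants, so the expected supremum is already of the desired form $C\Delta r^2\max(1,\omega(\mathcal{T})/r)/\sqrt{m}$ once logarithmic factors are absorbed into the sample-complexity hypothesis.

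To promote the expected-supremum bound to a high-probability statement, I would apply Talagrand's concentration inequality for suprema of empirical processes. The weak variance is controlled by $r^2$ and the envelope by $\Delta^2 r^2$, producing a deviation term of order $\Delta r^2\sqrt{\eta/m}+\Delta^2 r^2(\log n)^2\eta/m$. Combining the mean and deviation bounds and using $m\gtrsim\Delta^2(\eta+1)^2(\log n)^4\max(1,\omega^2(\mathcal{T})/r^2)$ absorbs all logarithmic and $\eta$-dependent terms, yielding failure probability at most $e^{-2\eta}$ as claimed.

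The main technical obstacle is the polylogarithmic loss. The factor $(\log n)^4$ in the sample complexity decomposes as two $(\log n)^2$ contributions: the first enters the expected supremum bound because the natural metric governing the chaos process mixes $\ell_\infty$ and $\ell_2$ distances, and splitting Dudley's entropy integral at the right scale loses $(\log n)^2$ in the BOS setting; the second $(\log n)^2$ appears when passing from the mean to a high-probability bound, because the envelope is controlled by the operator norm of $\tfrac{1}{m}\A^*\A$, which concentrates around $\Iden$ only with an additional logarithmic slack. The $\max(1,\omega(\mathcal{T})/r)$ structure on the right-hand side reflects the need to track two scales---the ``diameter'' scale $r^2$ and the ``Gaussian width'' scale $r\cdot\omega(\mathcal{T})$---so that the bound remains non-degenerate when $\omega(\mathcal{T})$ is much smaller than $r$.
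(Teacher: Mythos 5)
First, a point of reference: the paper does not actually prove this lemma itself; it is stated and imported from the companion paper \cite{oymak2015isometric}, whose argument is indeed the Krahmer--Mendelson--Rauhut (KMR) chaos-process machinery you invoke. So your high-level plan --- rewrite $\vb^*(\Iden-\frac{\A^*\A}{m})\vb$ as a second-order process, bound its supremum through a $\gamma_2$-functional, and pass to $\omega(\mathcal{T})$ by majorizing measures --- is the right one and matches the source. Two steps of your execution, however, would fail as written.

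The rows of $\A=\mtx{H}\mtx{D}$ are \emph{not} independent: every row is multiplied by the same random sign matrix $\mtx{D}$, so the quantity is not an empirical process over i.i.d.\ rows, and Rademacher symmetrization followed by Talagrand's empirical-process concentration is not available. The correct organization of the randomness is to write $\A\vb=\mtx{H}\,\mathrm{diag}(\vb)\,\vct{\epsilon}$ with $\vct{\epsilon}$ the diagonal of $\mtx{D}$, so that the object is a Rademacher chaos indexed by the matrix set $\{\mtx{H}\,\mathrm{diag}(\vb):\vb\in\mathcal{T}\}$; the KMR theorem then delivers \emph{both} the expectation bound (through $\gamma_2$ of this matrix set in operator norm, which is where $\Delta$ and a $(\log n)^2$ factor enter via a Dudley/Maurey covering argument) \emph{and} the mixed subgaussian--subexponential tail governed by the radii $d_{2\to2}$ and $d_F$ of that matrix set, so no separate concentration step is needed. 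This matters because your stated envelope is wrong: after the isotropic rescaling one has $\twonorm{\ab_i}=\sqrt{n}$ and $\infnorm{\ab_i}\le\Delta$, hence for $\vb\in r\Bc^{n}$ only $\abs{\langle\ab_i,\vb\rangle}\le\min\left(\Delta\onenorm{\vb},\sqrt{n}\,r\right)$, so the envelope of the summands is of order $nr^2$, not $\Delta^2r^2$; a Talagrand or bounded-differences argument with the true envelope destroys the $1/\sqrt{m}$ rate. Relatedly, the claim that a second $(\log n)^2$ comes from $\frac{1}{m}\A^*\A$ concentrating around $\Iden$ in operator norm cannot be right: that matrix has rank $m<n$ and never concentrates around $\Iden$. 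The full $(\log n)^4$ in the sample-complexity hypothesis arises simply from squaring the single estimate $\gamma_2\left(\{\mtx{H}\,\mathrm{diag}(\vb)\},\opnorm{\cdot}\right)\lesssim\Delta(\log n)^2\,\omega(\mathcal{T})$ when the chaos bound is compared against $\sqrt{m}$.
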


\small
{
\bibliography{Bibfiles}
\bibliographystyle{plain}
}
\appendix

\section{Proof of Gordon type lemma (Lemma \ref{GTtypelem})}\label{proofGTtypelem}
Our proof is related to the proof of Gordon's celebrated escape through the mesh \cite[Theorem A]{Gor} stated in Lemma \ref{GTescapelem}. We will first show the bound $\twonorm{\mtx{A}\vct{u}}\le b_m\twonorm{\vct{u}}+\omega(\mathcal{T})+\eta$. For $\vct{u}\in\mathcal{T}$ and $\vct{v}\in\mathbb{S}^{m-1}=\{\vct{v}\in\R^m;\text{ }\twonorm{\vct{v}}=1\}$, we define three Gaussian processes
\begin{align*}
X_{\vct{u},\vct{v}}=\vct{v}^*\mtx{A}\vct{u},\quad Y_{\vct{u},\vct{v}}=\twonorm{\vct{u}}\vct{a}^*\vct{v}+\vct{g}^*\vct{u}\quad\text{and}\quad Z_{\vct{u},\vct{v}}=\twonorm{\vct{u}}(\vct{a}^*\vct{v}-b_m)+\vct{g}^*\vct{u}.
\end{align*}
Here $\vct{a}\in\R^m$ is distributed as $\mathcal{N}(\vct{0},\mtx{I}_m)$ and $\vct{g}\in\R^n$ is distributed as $\mathcal{N}(\vct{0},\mtx{I}_n)$.

It follows that for all $\vct{u},\vct{u}'\in\mathcal{T}$ and $\vct{v},\vct{v}'\in\mathbb{S}^{m-1}$, we have
\begin{align*}
\mathbb{E}\abs{Y_{\vct{u},\vct{v}}-Y_{\vct{u}',\vct{v}'}}^2-\mathbb{E}\abs{X_{\vct{u},\vct{v}}-X_{\vct{u}',\vct{v}'}}^2=&\twonorm{\vct{u}}^2+\twonorm{\vct{u}'}^2-2\twonorm{\vct{u}}\twonorm{\vct{u}'}\langle\vct{v},\vct{v}'\rangle-2\langle\vct{u},\vct{u}'\rangle\left(1-\langle\vct{v},\vct{v}'\rangle\right)\\
\ge&\twonorm{\vct{u}}^2+\twonorm{\vct{u}'}^2-2\twonorm{\vct{u}}\twonorm{\vct{u}'}\langle\vct{v},\vct{v}'\rangle-2\twonorm{\vct{u}}\twonorm{\vct{u}'}\left(1-\langle\vct{v},\vct{v}'\rangle\right)\\
\ge& 0,
\end{align*}
with equality if $\vct{u}=\vct{u}'$ and $\vct{v}=\vct{v}'$.

We note that by standard concentration of measure for Gaussian random variables
\begin{align*}
\mathbb{P}\Big\{\twonorm{\vct{a}}\ge \E[\twonorm{\vct{a}}]+\eta\Big\}\le e^{-\frac{\eta^2}{2}},
\end{align*}
We also have
\begin{align*}
\Big\{\vct{a}: \quad\twonorm{\vct{u}}\twonorm{\vct{a}}\ge \twonorm{\vct{u}}b_m+\eta\Big\}\subset&\Big\{\vct{a}: \quad\twonorm{\vct{u}}\twonorm{\vct{a}}\ge \twonorm{\vct{u}}b_m+\twonorm{\vct{u}}\frac{\eta}{\sigma(\mathcal{T})}\Big\}\\
=&\Big\{\vct{a}: \quad\twonorm{\vct{a}}\ge b_m+\frac{\eta}{\sigma(\mathcal{T})}\Big\}\\
=&\Big\{\vct{a}: \quad\twonorm{\vct{a}}\ge \E[\twonorm{\vct{a}}]+\frac{\eta}{\sigma(\mathcal{T})}\Big\}.
\end{align*}
Thus, 
\begin{align*}
\mathbb{P}\Bigg\{\underset{\vct{u}\in\mathcal{T}}{\bigcup}\big\{\vct{a}:\quad\twonorm{\vct{u}}\twonorm{\vct{a}}>\twonorm{\vct{u}} b_m+\eta_1\big\}\Bigg\}\le\mathbb{P}\Big\{\vct{a}:\text{ }\twonorm{\vct{a}}\ge \E[\twonorm{\vct{a}}]+\frac{\eta_1}{\sigma(\mathcal{T})}\Big\}\le e^{-\frac{\eta_1^2}{2\sigma^2(\mathcal{T})}},
\end{align*}
which immediately implies
\begin{align}
\label{myfirstineqGT}
\mathbb{P}\Big\{\underset{\vct{u}\in\mathcal{T}}{\max}\text{ }\twonorm{\vct{u}}\left(\twonorm{\vct{a}}-b_m\right)>\frac{\eta}{2}\Big\}\le e^{-\frac{\eta^2}{8\sigma^2(\mathcal{T})}}.
\end{align}
Also 
\begin{align}
\label{mysecondineqGT}
\mathbb{P}\Big\{\text{ }\underset{\vct{u}\in\mathcal{T}}{\max} \text{ }\left(\vct{g}^*\vct{u}\right)>\omega(\mathcal{T})+\frac{\eta}{2}\Big\}=\mathbb{P}\Big\{\text{ }\underset{\vct{u}\in\mathcal{T}}{\max} \text{ }\left(\vct{g}^*\vct{u}\right)>\E\big[\underset{\vct{u}\in\mathcal{T}}{\max} \text{ }\left(\vct{g}^*\vct{u}\right)\big]+\frac{\eta}{2}\Big\}\le e^{-\frac{\eta^2}{8\sigma^2(\mathcal{T})}}.
\end{align}
Note that if
\begin{align*}
\underset{\vct{u}\in\mathcal{T}}{\max}\text{ }\left(\twonorm{\vct{u}}\left(\twonorm{\vct{a}}-b_m\right)+\vct{g}^*\vct{u}\right)>\eta,
\end{align*}
then either
\begin{align*}
\underset{\vct{u}\in\mathcal{T}}{\max}\text{ }\twonorm{\vct{u}}\left(\twonorm{\vct{a}}-b_m\right)>\frac{\eta}{2},
\end{align*}
or
\begin{align*}
\underset{\vct{u}\in\mathcal{T}}{\max} \text{ }\left(\vct{g}^*\vct{u}\right)>\omega(\mathcal{T})+\frac{\eta}{2}.
\end{align*}
This implies that
\begin{align*}
\big\{\vct{a},\vct{g}:\text{ }\underset{\vct{u}\in\mathcal{T}}{\max}\text{ }\left(\twonorm{\vct{u}}\twonorm{\vct{a}}+\vct{g}^*\vct{u}-b_m\twonorm{\vct{u}}\right)>\omega(\mathcal{T})+\eta\big\}
\end{align*}
is a subset of 
\begin{align*}
\big\{\vct{a},\vct{g}:\text{ }\underset{\vct{u}\in\mathcal{T}}{\max}\text{ }\twonorm{\vct{u}}\left(\twonorm{\vct{a}}-b_m\right)>\frac{\eta}{2}\big\}\bigcup\big\{\vct{a},\vct{g}:\text{ }\underset{\vct{u}\in\mathcal{T}}{\max} \text{ }\left(\vct{g}^*\vct{u}\right)>\omega(\mathcal{T})+\frac{\eta}{2}\big\}.
\end{align*}
Using the latter together with \eqref{myfirstineqGT} and \eqref{mysecondineqGT} and using the independence of $\vct{a}$ and $\vct{g}$ we have
\begin{align}
\label{mythirdineqGT}
\mathbb{P}\Big\{\underset{\vct{u}\in\mathcal{T}}{\max}\text{ }\left(\twonorm{\vct{u}}\left(\twonorm{\vct{a}}-b_m\right)+\vct{g}^*\vct{u}\right)>\eta\Big\}\le& \mathbb{P}\Big\{\underset{\vct{u}\in\mathcal{T}}{\max}\text{ }\twonorm{\vct{u}}\left(\twonorm{\vct{a}}-b_m\right)>\frac{\eta}{2}\Big\}+\mathbb{P}\Big\{\text{ }\underset{\vct{u}\in\mathcal{T}}{\max} \text{ }\left(\vct{g}^*\vct{u}\right)>\omega(\mathcal{T})+\frac{\eta}{2}\Big\}\nonumber\\
\le& 2e^{-\frac{\eta^2}{8\sigma^2(\mathcal{T})}}.
\end{align}
Now note that
\begin{align*}
\underset{\vct{u}\in\mathcal{T},\text{ }\vct{v}\in\mathbb{S}^{n-1}}{\max}\text{ }Z_{\vct{u},\vct{v}}=\underset{\vct{u}\in\mathcal{T}}{\max}\text{ }\left(\twonorm{\vct{u}}\left(\twonorm{\vct{a}}-b_m\right)+\vct{g}^*\vct{u}\right).
\end{align*}
This together with \eqref{mythirdineqGT} implies
\begin{align*}
\mathbb{P}\big\{\underset{\vct{u}\in\mathcal{T},\text{ }\vct{v}\in\mathbb{S}^{n-1}}{\max}\text{ }Z_{\vct{u},\vct{v}}>\eta\big\}\le& 2e^{-\frac{\eta^2}{8\sigma^2(\mathcal{T})}}\quad\Rightarrow\quad \mathbb{P}\Big\{\underset{\vct{u}\in\mathcal{T},\vct{v}\in\mathbb{S}^{n-1}}{\bigcup}[Y_{\vct{u},\vct{v}}>b_m\twonorm{\vct{u}}+\eta]\Big\}\le 2e^{-\frac{\eta^2}{8\sigma^2(\mathcal{T})}}.
\end{align*}
Now using Slepian's second inequality \eqref{secondSlep} with $\eta_{\vct{u},\vct{v}}=b_m\twonorm{\vct{u}}+\eta$ we have
\begin{align*}
\mathbb{P}\Big\{\underset{\vct{u}\in\mathcal{T},\vct{v}\in\mathbb{S}^{n-1}}{\bigcup}[X_{\vct{u},\vct{v}}>b_m\twonorm{\vct{u}}+\eta]\Big\}\le\mathbb{P}\Big\{\underset{\vct{u}\in\mathcal{T},\vct{v}\in\mathbb{S}^{n-1}}{\bigcup}[Y_{\vct{u},\vct{v}}>b_m\twonorm{\vct{u}}+\eta]\Big\}\le 2e^{-\frac{\eta^2}{8\sigma^2(\mathcal{T})}}.
\end{align*}
Noting that
\begin{align*}
\mathbb{P}\Big\{\underset{\vct{u}\in\mathcal{T},\text{ }\vct{v}\in\mathbb{S}^{n-1}}{\max}\text{ }\mtx{X}_{\vct{u},\vct{v}}>\twonorm{\vct{u}}b_m+\eta\Big\}=\mathbb{P}\Big\{\underset{\vct{u}\in\mathcal{T},\text{ }\vct{v}\in\mathbb{S}^{n-1}}{\bigcup}[X_{\vct{u},\vct{v}}>b_m\twonorm{\vct{u}}+\eta]\Big\},
\end{align*}
concludes the proof. Next, we show that $\twonorm{\mtx{A}\vct{u}}\ge b_m\twonorm{\vct{u}}-\omega(\mathcal{T})-\eta$. To accomplish this, we make use of Lemma $5.1$ of \cite{OymLAS}. The following is an immediate corollary.
\begin{corollary} Let $\A\in\R^{m\times n},\g\in\R^n,\h\in\R^m$ be independent vectors with independent $\Nn(0,1)$ entries. Then, for any $c\in\R$
\beq
\Pro(\min_{\ub\in \Tc}\max_{\vb\in\Sc^{m-1}}\vb^*\A\ub-b_m\tn{\ub}\geq c)\geq 2\Pro(\min_{\ub\in\Tc}\max_{\vb\in\Sc^{m-1}}\vb^*\h\tn{\ub}-\ub^*\g\tn{\vb}-b_m\tn{\ub}\geq c).\label{two times}
\eeq
\end{corollary}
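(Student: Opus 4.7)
My plan is to obtain the corollary as a direct specialization of Lemma 5.1 of \cite{OymLAS}. First I would identify the primary min-max objective as $\Phi(\A) := \min_{\ub\in\Tc}\max_{\vb\in\Sc^{m-1}}[\vb^*\A\ub - b_m\tn{\ub}]$ and the auxiliary min-max objective as $\phi(\g,\h) := \min_{\ub\in\Tc}\max_{\vb\in\Sc^{m-1}}[\vb^*\h\tn{\ub} - \ub^*\g\tn{\vb} - b_m\tn{\ub}]$. Lemma 5.1 of \cite{OymLAS} is a Gordon-type tail comparison that, for arbitrary (possibly non-convex) index sets and an arbitrary deterministic additive shift $\psi(\ub,\vb)$, relates the tails of $\Phi$ and $\phi$ with a multiplicative factor of two. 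Taking $\psi(\ub,\vb) = -b_m\tn{\ub}$ (which is common to both sides and depends only on $\ub$) puts us exactly in the framework of that lemma.

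To apply it, I would verify the Gaussian covariance structure needed for Gordon's comparison. Since $\tn{\vb}=1$ throughout, a direct computation yields $\E[(\vb^*\A\ub)(\vb'^*\A\ub')] = \langle\ub,\ub'\rangle\langle\vb,\vb'\rangle$ for the primary process, and $\E[(\vb^*\h\tn{\ub}-\ub^*\g)(\vb'^*\h\tn{\ub'}-\ub'^*\g)] = \tn{\ub}\tn{\ub'}\langle\vb,\vb'\rangle + \langle\ub,\ub'\rangle$ for the auxiliary process (the cross term between $\g$ and $\h$ vanishes by independence). Comparing squared increments and invoking Cauchy--Schwarz shows that the auxiliary has strictly larger increments across distinct $\ub$'s while matching the primary when $\ub = \ub'$, which is exactly the sign pattern Gordon's min-max inequality requires. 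The minus sign in front of $\ub^*\g$ is immaterial because $-\g\stackrel{d}{=}\g$ leaves the joint law of the auxiliary process unchanged.

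The main (and really only) obstacle to flag is careful bookkeeping of sign conventions: the non-convex CGMT in \cite{OymLAS} is typically stated in the ``$\le c$'' form $\Pro(\Phi(\A)\le c)\le 2\Pro(\phi(\g,\h)\le c)$, whereas \eqref{two times} is a ``$\ge c$'' statement. Converting between them amounts to using the symmetries $\A\stackrel{d}{=}-\A$, $\g\stackrel{d}{=}-\g$, $\h\stackrel{d}{=}-\h$ to swap the roles of $c$ and $-c$; the factor of two is preserved throughout, and no further computation is needed to recover \eqref{two times}.
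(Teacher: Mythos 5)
Your route is the same as the paper's, which offers no argument beyond declaring the statement an immediate consequence of Lemma 5.1 of \cite{OymLAS}; your specialization $\psi(\ub,\vb)=-b_m\tn{\ub}$ and the increment computation are the right way to make that citation precise. Two caveats. First, the lemma in \cite{OymLAS} is already stated in the ``$\geq c$'' (lower-tail) form needed here, so no sign conversion is required --- and the conversion you sketch would not actually work, since negating $\A$ (or $\g$, $\h$) does not negate a min--max value: by symmetry of $\Sc^{m-1}$ the primary objective is distributionally \emph{invariant} under $\A\mapsto-\A$, and $-\min_{\ub}\max_{\vb}$ is a max--min, so one cannot swap $c$ and $-c$ this way. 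Second, both the corollary as printed and your restatement drop the ``$-1$'' from ``$2\Pro(\cdot)-1$''; without it the right-hand side of \eqref{two times} can exceed $1$, and the paper's subsequent use of \eqref{two times} with an auxiliary probability of the form $1-\exp(-\eta^2/4\sigma^2(\Tc))$ only makes sense with the $-1$ present.
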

The right-hand side can be simplified to $\min_{\ub\in\Tc}(\tn{\h}-b_m)\tn{\ub}-\ub^*\g$, which is $\sqrt{2}\sigma(\Tc)$-Lipschitz function of the vector $\begin{bmatrix}\g^*&\h^*\end{bmatrix}$. As a result we have
\beq
\Pro(\min_{\ub\in\Tc}(\tn{\h}-b_m)\tn{\ub}-\ub^*\g\geq -\omega(\Tc)-\eta)\geq 1-\exp(-\frac{\eta^2}{4\sigma^2(\Tc)}).
\eeq
The proof is complete by combining the latter with \eqref{two times}.

\section{Proof of Lemma \ref{GammaComplem}}
\label{GammaComplemP}
To prove this lemma it suffices to show that $g(t):=\log\left(\frac{\phi(t)}{\sqrt{t}}\right)$ is increasing when $t\ge0$. To this aim note that
\begin{align*}
g(t)=\frac{1}{2}\log 2+\log\left(\Gamma\left(\frac{t+1}{2}\right)\right)-\log\left(\Gamma\left(\frac{t}{2}\right)\right)-\frac{1}{2}\log t.
\end{align*}
Thus,
\begin{align*}
g'(t)=\frac{1}{2}\left(\frac{\Gamma'\left(\frac{t+1}{2}\right)}{\Gamma\left(\frac{t+1}{2}\right)}-\frac{\Gamma'\left(\frac{t}{2}\right)}{\Gamma\left(\frac{t}{2}\right)}-\frac{1}{t}\right).
\end{align*}
Now using the well known fact that\footnote{For example see the wikipedia entry on the Digamma function (more specifically the part on integral representations).} 
\begin{align*}
\psi(t):=\frac{\Gamma'(t)}{\Gamma(t)}=\int_0^\infty \left(\frac{e^{-x}}{x}-\frac{e^{-tx}}{1-e^{-x}}\right)dx,
\end{align*}
together with the change of variables $y=x/2$ we have
\begin{align*}
g'(t)=&\frac{1}{2}\left(\psi\left(\frac{t+1}{2}\right)-\psi\left(\frac{t}{2}\right)-\frac{1}{t}\right)\\
=&\frac{1}{2}\left(\int_0^\infty e^{-t\frac{x}{2}}\frac{1-e^{-\frac{x}{2}}}{1-e^{-x}}dx-\frac{1}{t}\right)\\
=&\int_0^\infty e^{-ty}\frac{1-e^{-y}}{1-e^{-2y}}dy-\frac{1}{2t}\\
=&\int_0^\infty e^{-ty}\frac{1}{1+e^{-y}}dy-\frac{1}{2t}\\
=&\int_0^\infty e^{-ty}\frac{1}{1+e^{-y}}dy-\int_0^\infty e^{-2ty}dy\\
=&\int_0^\infty e^{-ty}\frac{1-e^{-ty}}{1+e^{-y}}dy\\
\ge&0,
\end{align*}
where the last inequality holds since the integrand is non-negative when $t\ge 0$ and $y\ge 0$. This completes the proof as we have shown that $g'(t)$ is non-negative for $t\ge 0$.

\end{document}